\documentclass[sigconf, nonacm]{acmart}

\usepackage{tikz}
\usepackage{algorithm}
\usepackage{booktabs}
\usepackage{algorithmic}
\usepackage{amsmath} 
\usepackage{mathtools}
\usepackage{amsthm}
\usepackage{graphicx} 
\usepackage{multirow}
\usepackage{caption}
\usepackage{url}
\usepackage{siunitx}
\usepackage[percent]{overpic}
\usepackage{bbm}
\usepackage{placeins}
\usepackage{caption}
\theoremstyle{remark} 
\newtheorem{remark}{Remark} 
\newtheorem{assumption}{Assumption}

\usetikzlibrary{calc,patterns.meta,intersections,shapes.misc,fpu,angles, quotes}

\newcommand\vldbdoi{XX.XX/XXX.XX}
\newcommand\vldbpages{XXX-XXX}
\newcommand\vldbvolume{20}
\newcommand\vldbissue{1}
\newcommand\vldbyear{2027}
\newcommand\vldbauthors{\authors}
\newcommand\vldbtitle{\shorttitle} 
\newcommand\vldbavailabilityurl{https://github.com/ming-afk/recall_certify_hnsw}
\newcommand\vldbpagestyle{plain} 

\begin{document}
\title{HNSW with Accuracy Guarantees Using Graph Spanners}

\author{Minghao Li}
\affiliation{%
  \institution{University of Toronto}
  \city{Toronto}
  \country{Canada}
}
\email{mingh.li@mail.utoronto.ca}

\author{Raghav Mittal}
\affiliation{%
  \institution{The University of Texas At Arlington}
  \city{Arlington}
  \state{Texas}
  \country{USA}
}
\email{rxm0006@mavs.uta.edu}

\author{Sanjivni Rana}
\affiliation{%
  \institution{The University of Texas At Arlington}
  \city{Arlington}
  \state{Texas}
  \country{USA}
}
\email{sxr0277@mavs.uta.edu}

\author{Suraj Shetiya}
\affiliation{%
  \institution{IIT Bombay}
  \city{Mumbai}
  \country{India}
}
\email{surajs@cse.iitb.ac.in}

\author{Gautam Das}
\affiliation{%
  \institution{The University of Texas at Arlington}
  \city{Arlington}
  \state{Texas}
  \country{USA}
}
\email{gdas@uta.edu}

\author{Nick Koudas}
\affiliation{%
  \institution{University of Toronto}
  \city{Toronto}
  \country{Canada}
}
\email{koudas@cs.toronto.edu}

\begin{abstract}
Hierarchical Navigable Small World (HNSW) graphs serve as the industry standard due to their logarithmic complexity and strong empirical performance. However, HNSW relies on greedy graph traversal, a heuristic that provides no theoretical guarantees of correctness. 
In this paper, we propose a novel "Certify-then-Rectify" framework that bridges the gap between the speed of heuristic search and the rigor of exact retrieval. Rather than discarding HNSW, our approach first employs a distribution-free statistical certifier to dynamically evaluate the quality of a standard HNSW search with minimal overhead. If the certification indicates that the retrieved neighbors are of low quality, the framework safely escalates to a rigorous exact recovery algorithm.
To make this exact recovery computationally feasible, we reinterpret the HNSW graph as a geometric spanner and utilize Extreme Value Theory to stochastically estimate its maximum empirical stretch factor. This allows us to mathematically bound the maximum distance of true nearest neighbors. Furthermore, we successfully extend these theoretical guarantees to filtered search scenarios. Extensive evaluations on benchmark datasets demonstrate that our tiered framework delivers the average-case speed of HNSW while ensuring the worst-case correctness of exact search.
\end{abstract}

\maketitle

\pagestyle{\vldbpagestyle}
\begingroup\small\noindent\raggedright\textbf{PVLDB Reference Format:}\\
\vldbauthors. \vldbtitle. PVLDB, \vldbvolume(\vldbissue): \vldbpages, \vldbyear.\\
\href{https://doi.org/\vldbdoi}{doi:\vldbdoi}
\endgroup
\begingroup
\renewcommand\thefootnote{}\footnote{\noindent
This work is licensed under the Creative Commons BY-NC-ND 4.0 International License. Visit \url{https://creativecommons.org/licenses/by-nc-nd/4.0/} to view a copy of this license. For any use beyond those covered by this license, obtain permission by emailing \href{mailto:info@vldb.org}{info@vldb.org}. Copyright is held by the owner/author(s). Publication rights licensed to the VLDB Endowment. \\
\raggedright Proceedings of the VLDB Endowment, Vol. \vldbvolume, No. \vldbissue\ %
ISSN 2150-8097. \\
\href{https://doi.org/\vldbdoi}{doi:\vldbdoi} \\
}\addtocounter{footnote}{-1}\endgroup

\ifdefempty{\vldbavailabilityurl}{}{
\begingroup\small\noindent\raggedright\textbf{PVLDB Artifact Availability:}\\
The source code, data, and/or other artifacts have been made available at \url{https://github.com/ming-afk/recall_certify_hnsw}.
\endgroup
}

\section{Introduction}

Nearest Neighbor Search (NNS) is the computational engine behind modern AI, driving workloads ranging from Large Language Model (LLM) retrieval to molecular simulations. As datasets scale into the billions of vectors, exact linear scanning becomes computationally prohibitive, necessitating Approximate Nearest Neighbor (ANN) algorithms. Among these, Hierarchical Navigable Small World (HNSW) graphs \cite{malkov2018efficient,acorn} have become the industry standard due to their logarithmic complexity and generally strong empirical performance.

\begin{figure}[H]
    \centering
    \includegraphics[height=2cm,width=\linewidth]{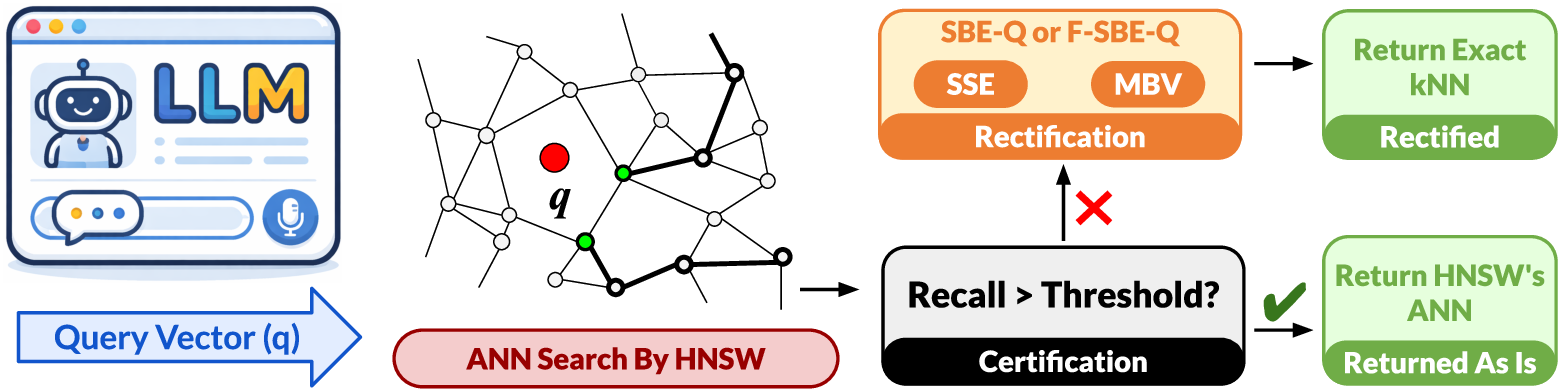}
    \caption{Certify-then-Rectify (CTR) search framework}
    \label{fig:framework_small}
\end{figure}

However, HNSW relies on a greedy traversal of a proximity graph—a heuristic that offers 
no theoretical guarantee of recall \cite{darth,PIP,zhong2025vsag,adaptivebeam}. While HNSW performs well empirically, it is highly sensitive to hyperparameter selection (e.g., beam width, construction depth) \cite{learnedtermination1,duan2025pgtunerefficientframeworkautomatic}. In high-dimensional spaces, greedy search frequently gets trapped in local minima, failing to retrieve the true nearest neighbor due to topology disconnects or the "curse of dimensionality." While a 95\% recall rate is acceptable for e-commerce, it is fundamentally inadequate for logic-driven and scientific domains where a single missing entry can invalidate a result.

The demand for \textbf{Exact kNN} is growing rapidly in complex, high-stakes applications:
In training perception models, identifying "hard negatives"—images that look safe but are actually dangerous—is critical \cite{intelligence2025pi06vlalearnsexperience, chen2023focalformer3d}. If an ANN algorithm fails to retrieve these specific edge cases during data mining, the training set remains noisy, compromising safety. Exact retrieval is required to rigorously clean training distributions.
 In systems combining neural retrieval with symbolic logic or LLMs, vector search acts as a predicate selector. Missing a specific "bridge" entity (e.g., a specific legal precedent or protein structure) leads to "hallucinations" or false logical implications because the reasoning engine operates on incomplete premises ~\cite{wiatrak2023proxy,lin2023knn}.

Existing attempts to mitigate these failures by simply increasing or predicting \cite{duan2025pgtunerefficientframeworkautomatic, darth} HNSW hyperparameters yield diminishing returns, incurring latency penalties without ever providing a guarantee of correctness. Moreover, they impose retraining overheads under distribution shifts.

We propose a novel framework that bridges the gap between the speed of heuristic search and the rigor of exact retrieval. We acknowledge that while our proposed search method, \textbf{MBV} (Metric Bound Verification) (utilizing our proposed Stretch Bounded Expansion from Query), accurately provides a formal probabilistic guarantee, it is computationally slower than heuristic HNSW. Therefore, rather than discarding HNSW, we wrap it in a \textbf{Certify-then-Rectify (CTR)} framework.

Our approach allows users to first \textbf{certify} the quality of a standard HNSW search result with negligible overhead. If—and only if—the certification indicates that the retrieved neighbors are of low quality, we trigger the computationally heavier MBV algorithm to identify the exact kNN, with high tunable probability \footnote{In the sequel we use the term exact search when deploying MBV, with the understanding that the probability to identify the true kNN can be made arbitrarily high as per Section \ref{sec:end-to-end} evaluated further empirically in Section \ref{sec:experiments}.}. This tiered strategy provides the best of both worlds: the average-case speed of HNSW and the worst-case correctness of exact search (see Figure \ref{fig:framework_small}). Our implementation is also made publicly available \footnote{\url{https://github.com/ming-afk/recall_certify_hnsw}}.

Our contributions are structured as follows (see Figure \ref{fig:framework_small}):
\paragraph{\textbf{1. Stochastic Stretch Estimation (SSE):}}
To enable exact search, we reinterpret the HNSW graph as a geometric spanner. We introduce SSE, a sampling algorithm that accurately approximates the graph's stretch factor t. This factor allows us to mathematically bound the maximum distance the true nearest neighbor could reside from our approximate candidates, establishing a dynamic search radius.

\paragraph{\textbf{2. Stretch-Bounded Expansion (SBE-Q):}}
We present SBE-Q a technique to recover the exact neighbors. Unlike greedy heuristic descent, SBE-Q performs a principled expansion.
\begin{itemize}
\item \textbf{SBE from the Query Point (SBE-Q):} We confine the search to a radius defined by the distance to the k-th approximate neighbor ($d_k$) multiplied by the stretch factor t (i.e., radius=$t \times d_k$). This guarantees the discovery of the true nearest neighbors.
\item \textbf{Reservoir Sampling:} To ensure SBE-Q remains theoretically sound, we use reservoir sampling to virtually "insert" the query point q into the stretch estimation sample without altering the graph's global stretch properties.
\end{itemize}
We also explore a baseline variant, SBE-NN (Expansion from Nearest Neighbor), but find SBE-Q significantly more efficient due to its tighter bounding radius.

\paragraph{\textbf{3. Metric Bound Verification (MBV):}}
Finally, we refine the candidate set obtained by SBE-Q using triangular inequality bounds derived from the stretch factor. This step recycles distance computations from the initial search to strictly filter points, ensuring the final output is the exact k-NN set. Our approach will work for any distance metric that directly or indirectly satisfies triangle inequality. 

\paragraph{\textbf{4. Utility in Filtered Search:}}
We further demonstrate that SBE-Q naturally adapts to filtered search (e.g., "nearest vector where property P is true"), addressing scenarios where filtering fractures the small-world topology. Our techniques can be utilized within extensions like ACORN \cite{acorn} to prevent traversals from getting stranded in disconnected subgraphs. Theoretical and experimental details are in Section~\ref{sec:F-SBE-Q}. The CTR framework is also fully integrated with DiskANN~\cite{diskann} in the disk-resident graph search setting and experimental details are in Section~\ref{subsec:diskann}.

\paragraph{\textbf{5. Certification of HNSW Results:}}
We introduce a post-hoc, distribution-free statistical certification framework for vanilla HNSW. By leveraging Conformal Risk Control (CRC) ~\cite{angelopoulos2021gentle,angelopoulos2024conformal} and Learn-then-Test (LTT) ~\cite{angelopoulos2022learn} procedures, we construct a rigorous certifier that evaluates a distance-margin score derived purely from HNSW's internal search state. This approach establishes finite-sample statistical guarantees on expected recall shortfall and provides per-query probabilistic bounds. Ultimately, this certifier acts as a dynamic gatekeeper with minimal computational overhead, determining whether to confidently accept the fast heuristic results or to safely escalate to exact recovery via MBV. The entire approach is realized as a lightweight wrapper around existing HNSW implementations, facilitating ease of use.

\paragraph{\textbf{6. End-to-End Guarantee:}} We establish a formal end-to-end probabilistic recall guarantee for the complete Certify-then-Rectify framework (Section 6.4).

\section{Preliminaries}\label{sec:preliminaries}

In this section, we formalize the problem of Nearest Neighbor Search (NNS) and the structure of Hierarchical Navigable Small World (HNSW) graphs. We establish the notation for the search procedure and its computational artifacts, and we introduce the concept of graph spanners and stretch factors, which form the theoretical basis of our proposed method.

Let $\mathcal{X}$ be a dataset of $n$ points in a $d$-dimensional vector space equipped with a distance metric $dist(\cdot, \cdot)$ (typically Euclidean distance). Given a query $q$ and an integer $k$, the Exact $k$-Nearest Neighbor ($k$-NN) problem is to find the set $\mathcal{N}^*_k(q) \subseteq \mathcal{X}$ such that $|\mathcal{N}^*_k(q)| = k$ and $\forall x \in \mathcal{N}^*_k(q), \forall y \in \mathcal{X} \setminus \mathcal{N}^*_k(q), dist(q, x) \leq dist(q, y)$.

The HNSW index is a hierarchical graph structure approximating the proximity of points in $\mathcal{X}$. It consists of a series of layers $L_0, L_1, \dots, L_{l_{max}}$, where layer $L_0$ contains all data points, and each subsequent layer $L_i$ contains a subset of the points from layer $L_{i-1}$, forming a hierarchy.
Formally, an HNSW structure is a tuple of graphs $\mathcal{H} = (G_0, G_1, \dots, G_{l_{max}})$. Each graph $G_i = (V_i, E_i)$ is defined such that $V_i \subseteq \mathcal{X}$. The edges $E_i$ connect points within the same layer based on proximity heuristics. The hierarchy allows for logarithmic search complexity by initiating the search at the top layer (coarse granularity) and progressively refining the search region as traversal moves down to layer $L_0$ (fine granularity).

Construction proceeds iteratively by inserting points $x \in \mathcal{X}$. For each $x$, a maximum layer $l(x) \leq l_{max}$ is selected randomly based on an exponential distribution. The point is inserted into all graphs $G_0, G_1, \dots, G_{l(x)}$. Connections in each layer are established using a heuristic selection strategy that balances proximity with spatial diversity (connectivity) to ensure the "small world" property, typically bounded by a maximum degree parameter $M$.

The search for the $k$-NN of a query $q$ in HNSW is a greedy traversal. Starting from an entry point $e_{top}$ in the top layer $L_{l_{max}}$, the algorithm greedily moves to the neighbor minimizing $dist(q, \cdot)$ until a local minimum is reached. This process repeats down to layer $L_0$. In layer $L_0$, a beam search is conducted with a beam width parameter $ef$. Let this process be denoted as $\text{Search}(q, k, ef)$. The output of this heuristic search is an approximate set $\mathcal{N}_k(q) \subseteq \mathcal{X}$, where often $|\mathcal{N}_k(q)| < k$ and $\mathcal{N}_k(q) \cap \mathcal{N}^*_k(q)$ may be empty in worst-case scenarios. Crucially, our approach utilizes the artifacts generated during this traversal. 
We define the Search Trace $\mathcal{T}(q)$ as the set of all points for which the distance to $q$ was computed during the search:
\[
\mathcal{T}(q) = \{ x \in \mathcal{X} \mid dist(q, x) \text{ was computed during } \text{Search}(q, k, ef) \}.
\]
This set includes the visited nodes and their immediate neighbors evaluated during the beam search.
To bound the error of the approximate search, we analyze the bottom-layer graph $G_0$ as a geometric spanner.

\begin{definition}
A weighted graph $G = (V, E)$, where vertices are points in a metric space $(\mathcal{X}, dist)$, is a $t$-spanner if, for every pair of vertices $u, v \in V$, the shortest path distance in the graph, denoted $d_G(u, v)$, satisfies:
\[
dist(u, v) \leq d_G(u, v) \leq t \cdot dist(u, v).
\]
The value $t \geq 1$ is called the stretch factor (or dilation) of the spanner.
\end{definition}

In the context of NNS, the stretch factor limits how much the graph topology distorts the true metric space. If $G_0$ were a perfect 1-spanner (a complete graph with edge weights equal to metric distances), greedy search would be exact. The sparsification required for efficiency introduces distortion. While HNSW construction heuristics do not theoretically guarantee a constant stretch factor $t$ for arbitrary worst-case inputs (i.e., it is not a $t$-spanner by design in the traditional computational geometry sense), any realized finite HNSW graph instance $G_0$ possesses an intrinsic, maximum empirical stretch $t_{emp}$ defined as:
\[
t_{emp} = \max_{u, v \in \mathcal{X}} \frac{d_{G_0}(u, v)}{dist(u, v)}.
\]
This property is pivotal: if we know (or estimate) $t_{emp}$, we can relate the graph distance—which we can explore efficiently via Breadth-First Search (BFS) or Dijkstra—to the unknown metric distance $dist(q, x)$, thereby establishing a stopping condition for search expansion that guarantees the inclusion of $\mathcal{N}^*_k(q)$.

\section{STOCHASTIC STRETCH ESTIMATION}
\label{sec:stretch_estimation}
For graphs of realistic sizes, computing the stretch factor exactly is not computationally feasible, as it would involve computations quadratic to the number of nodes in the graph. We, in turn, present sampling based algorithms to estimate the stretch factor of an HNSW graph $G$.


Let $S$ be a random variable representing the stretch factor of a uniformly random pair $(u,v)$ from $V \times V \setminus \{(v,v) : v \in V\}$. Rather than estimating arbitrary percentiles, our goal is to estimate the absolute maximum empirical stretch factor via a finite sample.

\begin{definition}[Sample Block Maxima]
Let the total sampled pairs be divided into $m$ blocks of size $n$. Let $M_{n,i} = \max\{S_{i,1}, \dots, S_{i,n}\}$ be the maximum stretch factor observed in the $i$-th block.
\end{definition}

\begin{theorem}[EVT Maximum Stretch Bound]
\label{thm:evt}
According to the Fisher-Tippett-Gnedenko theorem \cite{Gumbel+1958}, the distribution of the maximum of a large sequence of independent, identically distributed random variables converges to a specific family of distributions. For the right tail of the stretch factor distribution, the cumulative distribution function (CDF) of the block maxima converges to the Generalized Extreme Value (GEV) distribution:
$$H(s; \mu, \sigma, \xi) = \exp\left(-\left[1 + \xi\left(\frac{s-\mu}{\sigma}\right)\right]^{-1/\xi}\right)$$
where $\mu$ is the location parameter, $\sigma > 0$ is the scale parameter, and $\xi$ is the shape parameter. 

By modeling the ``right tail'' (the largest observed stretch factors) of our sample, we can accurately estimate the unobserved probability density of extreme stretch distortions. For a given confidence level $\beta \in (0,1)$, the estimated maximum stretch factor $t^*$ is defined as the return level derived from the inverse CDF of the fitted GEV distribution:
$$t^* = \mu - \frac{\sigma}{\xi} \left[ 1 - (-\ln \beta)^{-\xi} \right]$$

Thus, with confidence $\beta$, the true empirical stretch factor of the graph is bounded by $t^*$.
\end{theorem}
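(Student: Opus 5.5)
The argument has three parts: set up the i.i.d.\ block-maxima model, invoke the Fisher--Tippett--Gnedenko theorem to justify the GEV limit, and invert the GEV CDF to get the return level $t^*$.

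\textbf{Step 1 (the model).} Draw pairs $(u,v)$ independently and uniformly, with replacement, from $V\times V\setminus\{(v,v)\}$. Then the stretches $S_{i,j}=d_{G_0}(u,v)/dist(u,v)$ are i.i.d.\ copies of $S$, with a common CDF $F$. This sampling scheme must be fixed explicitly, because the classical theorem needs independence and identical distribution. Since $G_0$ is finite, $F$ is a step function supported on at most $|V|(|V|-1)$ atoms. Its right endpoint $s_F=\sup\{s:F(s)<1\}$ is exactly $t_{emp}$. The quantity we want to bound is therefore the right endpoint of $F$, and the block maxima $M_{n,i}$ are i.i.d.\ with CDF $F^n$.

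\textbf{Step 2 (the GEV limit).} Fisher--Tippett--Gnedenko says: if there are constants $a_n>0$ and $b_n$ such that $P((M_n-b_n)/a_n\le s)=F^n(a_ns+b_n)$ converges to a non-degenerate $G$, then $G$ belongs to the GEV family $H(\cdot;0,1,\xi)$. Absorbing $a_n,b_n$ into $\mu,\sigma$, we approximate $P(M_n\le s)\approx H(s;\mu,\sigma,\xi)$ for large $n$. We then fit $(\mu,\sigma,\xi)$ to the $m$ block maxima by maximum likelihood.

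\textbf{Step 3 (the return level).} For $\xi\neq 0$, set $H(t^*)=\beta$ and solve:
\[
\Bigl[1+\xi\tfrac{t^*-\mu}{\sigma}\Bigr]^{-1/\xi}=-\ln\beta
\quad\Longrightarrow\quad
t^*=\mu-\tfrac{\sigma}{\xi}\bigl[1-(-\ln\beta)^{-\xi}\bigr].
\]
This is the stated formula. The case $\xi=0$ is the Gumbel limit, $t^*=\mu-\sigma\ln(-\ln\beta)$, and we will mention it as the continuous extension. By construction, $P(M_n\le t^*)=\beta$ under the fitted model.

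\textbf{The main obstacle} is the step from ``$M_n\le t^*$ with probability $\beta$'' to ``$t_{emp}\le t^*$ with confidence $\beta$.'' There are two problems.
\begin{itemize}
\item Because $F$ is discrete with a finite endpoint, $F^n$ admits no non-degenerate limit in the strict sense. Indeed, $F^n(s)\to\mathbf{1}[s\ge t_{emp}]$, so the theorem applies only as an approximation, in the regime where $n$ is much smaller than the number of pairs and the tail behaves like a continuous one.
\item The return level controls a single block maximum, not the supremum of the support.
\end{itemize}
To resolve this, we will state the conclusion as a modeling claim: the GEV fit is an asymptotic approximation of the tail of $F$. For the endpoint itself we have two options. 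If $\hat\xi<0$, the fitted GEV has a finite endpoint $\mu-\sigma/\xi$, and we will note that $t^*\to\mu-\sigma/\xi\ge$ the estimated endpoint as $\beta\to1$. Alternatively, we interpret $t^*$ as bounding the stretch of a freshly drawn block of $n$ pairs, which includes query-involving pairs via the reservoir argument. We will make explicit that the guarantee is asymptotic and statistical, not a deterministic bound on $t_{emp}$.
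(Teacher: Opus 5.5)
Your Steps 1--3 are the paper's entire argument. The paper gives no proof beyond the text of the theorem itself: it invokes Fisher--Tippett--Gnedenko, fits $(\mu,\sigma,\xi)$ to block maxima by MLE, inverts $H$ at $\beta$, and then simply asserts the final sentence. Your inversion and Gumbel limit are correct. So is your remark that a finite discrete $F$ with an atom at $t_{emp}$ admits no non-degenerate max-limit. And you have isolated exactly the step the paper never justifies.

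The obstacle is in fact worse than a missing justification. The return level is the $\beta$-quantile of $M_n$, a value that a block maximum exceeds with probability $1-\beta>0$. Since $M_n\le t_{emp}$ surely, suppose the top atom of $F$ has mass $p$ with $(1-p)^n>\beta$. Then $F^n=(1-p)^n>\beta$ already on $[s_2,t_{emp})$, where $s_2$ is the second-largest atom, so the exact $\beta$-quantile of $M_n$ is at most $s_2<t_{emp}$. With blocks of a few hundred pairs against roughly $10^{12}$ ordered pairs on SIFT1M, this condition holds by many orders of magnitude.

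Hence, to the extent the fitted GEV matches the law of $M_n$ (the very premise of the theorem), $t^*$ lies \emph{below} $t_{emp}$. With $m=400$ and $\beta=0.995$ it is roughly the level of the second or third largest observed block maximum, so it says nothing about the unsampled pairs. Whenever the fitted $t^*$ does exceed $t_{emp}$, that is an artifact of extrapolating a continuous GEV, not a consequence of $\beta$. A genuine confidence statement would be $\Pr_\omega[t^*(\omega)\ge t_{emp}]\ge\beta$, and nothing in the GEV machinery ties that probability to $\beta$.

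For the same reason, neither repair you propose recovers the stated conclusion. For $\hat\xi<0$ one has $t^*=\hat\mu+\frac{\hat\sigma}{|\hat\xi|}\bigl[1-(-\ln\beta)^{|\hat\xi|}\bigr]$. This increases to the fitted endpoint $\hat\mu-\hat\sigma/\hat\xi$ \emph{from below} as $\beta\to1$, so the inequality in your remark should be $\le$, not $\ge$. Moreover, the fitted endpoint is a point estimate with no coverage attached; a confidence statement about the endpoint would need a separate upper confidence limit for $\mu-\sigma/\xi$, which is not the return level.

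Your second reading, $\Pr(M_n\le t^*)\approx\beta$ for a fresh block of $n$ uniformly drawn pairs, is the correct meaning of a return level, up to plug-in error in $(\hat\mu,\hat\sigma,\hat\xi)$. But it concerns a random block rather than $t_{emp}$. It also does not deliver what Lemma~\ref{lem:rectify} consumes, namely the stretch bound holding simultaneously on the specific pairs $(q,x^*)$ with $x^*\in\mathcal{N}^*_k(q)$.

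So what can honestly be proved is your Steps 1--3 together with that weaker, approximate statement; the theorem's last sentence has to be treated as a modeling assumption, not a consequence of Fisher--Tippett--Gnedenko. If you want something rigorous from sampling alone, it concerns tail mass rather than the endpoint. If $\hat t$ is the maximum of $N$ i.i.d.\ sampled stretches, then with probability at least $\beta$ the fraction of pairs whose stretch exceeds $\hat t$ is at most $\ln\bigl(1/(1-\beta)\bigr)/N$.
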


\subsection{Practical Application of the EVT Bound}

In practice, applying Theorem 3.2 requires a systematic sampling and fitting procedure prior to initiating the search. First, we generate a large pool of independent stretch factor samples by computing the exact metric distance and the shortest-path graph distance for random pairs of vertices in $G$. This pool is then partitioned into $m$ distinct blocks of size $n$. We extract the maximum stretch value from each block, resulting in a specialized dataset composed entirely of block maxima. Next, we apply Maximum Likelihood Estimation (MLE) to this dataset to find the optimal parameters for the Generalized Extreme Value distribution: the location $\mu$, scale $\sigma$, and shape $\xi$. Once the MLE converges and the parameters are established, we substitute them---along with our target confidence level $\beta$ (e.g., $0.99$)---into the GEV return level equation. The resulting value, $t$, acts as the rigorously defined operating stretch factor for our subsequent search expansions, ensuring the search radius dynamically adapts to the graph's most extreme topological distortions.

\section{Graph based pruning techniques}

\subsection{Stretch Bounded Expansion from Nearest Neighbor (SBE-NN)}
\label{sec:sbe-nn}

We consider $(\mathcal{X}, \mathit{dist})$ to be a metric space, where $G_0=(\mathcal{X},E)$ corresponds to the bottom-layer HNSW graph. Notably, this can be viewed as a \emph{weighted} graph where each edge $(u,v)\in E$ has weight $\mathit{dist}(u,v)$. We denote the shortest-path distance in $G_0$ (sum of edge weights) as $d_{G_0}(u,v)$.

Assuming $G_0$ to be a $t$-spanner (globally or locally in the vicinity of $q$) in the sense that for all relevant pairs $u,v$:
\[
\mathit{dist}(u,v) \le d_{G_0}(u,v) \le t \cdot \mathit{dist}(u,v)
\]

To analyze the search expansion, we must distinguish between the approximate candidates returned by the HNSW index and the unknown ground truth.

For a given query $q$, let the ordered set of $k$ candidates returned by the initial HNSW search be $\{x_{(1)}, x_{(2)}, \dots, x_{(k)}\}$. We define the distances to the nearest and $k$-th candidates as:
\[
\hat{d}_1 = \mathit{dist}(q, x_{(1)}) \quad \text{and} \quad \hat{d}_k = \mathit{dist}(q, x_{(k)})
\]

Let $N_k^*(q)$ denote the set of the true top-$k$ nearest neighbors. Let $d_k^*$ denote the distance to the true $k$-th nearest neighbor:
\[
d_k^* = \max_{x \in N_k^*(q)} \mathit{dist}(q, x)
\]

Since the HNSW candidates $\{x_{(1)}, \dots, x_{(k)}\}$ form a valid subset of $\mathcal{X}$ of size $k$, and the true set $N_k^*(q)$ minimizes the sum of distances, it necessarily holds that the true $k$-th distance is bounded by the approximate $k$-th distance:
\begin{equation}
\label{eq:bound_dk}
d_k^* \le \hat{d}_k
\end{equation}
This observation allows us to construct a search radius based on the \emph{observed} HNSW results that guarantees the inclusion of the \emph{unknown} true neighbors as per Figure \ref{fig:sbe}.

\begin{theorem}[Stretch-bounded containment of true top-k]
Given a query $q$, let $x_{(1)}$ be the nearest candidate and $\hat{d}_k$ be the distance to the $k$-th candidate returned by HNSW. Every point in the true top-$k$ set $N_k^*(q)$ lies inside the graph ball of radius $t(\hat{d}_1 + \hat{d}_k)$ centered at $x_{(1)}$.

Formally, for every $x^* \in N_k^*(q)$:
\[
d_{G_0}(x_{(1)}, x^*) \le t(\hat{d}_1 + \hat{d}_k)
\]
Consequently:
\[
N_k^*(q) \subseteq \{x \in \mathcal{X} : d_{G_0}(x_{(1)}, x) \le t(\hat{d}_1 + \hat{d}_k)\}
\]
\end{theorem}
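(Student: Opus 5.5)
The plan is to combine the metric triangle inequality with the spanner upper bound $d_{G_0}(u,v)\le t\cdot \mathit{dist}(u,v)$ (assumed to hold for the relevant pairs in the vicinity of $q$). The argument has three steps: bound the metric distance from $x_{(1)}$ to an arbitrary true neighbor $x^*$, lift that bound to graph distance using the stretch factor $t$, and read off the containment as a pointwise consequence.

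\textbf{Step 1 (metric bound).} Fix $x^*\in N_k^*(q)$. By the triangle inequality in $(\mathcal{X},\mathit{dist})$,
\[
\mathit{dist}(x_{(1)},x^*)\le \mathit{dist}(x_{(1)},q)+\mathit{dist}(q,x^*)=\hat d_1+\mathit{dist}(q,x^*).
\]
By the definition of $d_k^*$ as a maximum over $N_k^*(q)$, we have $\mathit{dist}(q,x^*)\le d_k^*$. By \eqref{eq:bound_dk}, $d_k^*\le \hat d_k$. Together these give $\mathit{dist}(x_{(1)},x^*)\le \hat d_1+\hat d_k$.

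I would restate the justification of \eqref{eq:bound_dk} here more carefully than the text does, since it is the one non-trivial ingredient. Among the $k$ distinct HNSW candidates, each at distance at most $\hat d_k$ from $q$, there are $k$ points within the closed ball of radius $\hat d_k$ about $q$. The $k$-th smallest distance from $q$ over all of $\mathcal{X}$, which is $d_k^*$, therefore cannot exceed $\hat d_k$. This argument needs the candidates to number exactly $k$ and to be distinct dataset points. The preliminaries remark that HNSW may return fewer than $k$ results, so I would state this as a standing assumption; otherwise $\hat d_k$ is undefined.

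\textbf{Step 2 (graph bound).} Apply the $t$-spanner inequality to the pair $(x_{(1)},x^*)$:
\[
d_{G_0}(x_{(1)},x^*)\le t\cdot \mathit{dist}(x_{(1)},x^*)\le t(\hat d_1+\hat d_k).
\]
This requires the spanner property to hold for this particular pair. If only local validity near $q$ is assumed, I would note that both endpoints lie within metric distance $\max(\hat d_1,\hat d_k)=\hat d_k$ of $q$, which is exactly the "relevant pairs" regime. Since $x^*$ was arbitrary, the set inclusion follows immediately.

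\textbf{Main obstacle.} The computations are trivial; the only subtle points are the two hypotheses. The first is the validity of \eqref{eq:bound_dk}, whose stated justification via sums of distances is not quite the right reason; the counting argument in Step 1 is. The second is that $t$ is a genuine stretch bound for the needed pair. In the probabilistic setting of Theorem~\ref{thm:evt}, the conclusion should then be read as holding on the event that $t\ge t_{emp}$, hence with confidence $\beta$.
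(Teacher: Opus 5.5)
Your proposal is correct and follows the paper's proof step for step: the triangle inequality through $q$, then $\mathit{dist}(q,x^*)\le d_k^*\le\hat d_k$, then the $t$-spanner inequality applied to the pair $(x_{(1)},x^*)$. Your counting justification of \eqref{eq:bound_dk}, that the $k$ distinct returned candidates all lie within distance $\hat d_k$ of $q$, is the right reason and is cleaner than the paper's appeal to the true set minimizing the sum of distances.
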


\begin{proof}
Let $x^*$ be any point in the true set $N_k^*(q)$. By definition of the true $k$-th distance and Equation \ref{eq:bound_dk}:
\begin{equation}\label{eq:one}
\mathit{dist}(q, x^*) \le d_k^* \le \hat{d}_k
\end{equation}

Using the triangle inequality in the metric space:
\begin{equation}\label{eq:two}
\mathit{dist}(x_{(1)}, x^*) \le \mathit{dist}(x_{(1)}, q) + \mathit{dist}(q, x^*)
\end{equation}
Substituting known variables $\mathit{dist}(x_{(1)}, q) = \hat{d}_1$ and the bound from Equation \ref{eq:one}:
\[
\mathit{dist}(x_{(1)}, x^*) \le \hat{d}_1 + \hat{d}_k
\]
This represents the maximum possible metric distance between the entry point $x_{(1)}$ and any true answer $x^*$. This worst-case occurs when $q$ lies strictly between $x_{(1)}$ and $x^*$ (collinear).

Since $G_0$ is a $t$-spanner, the graph distance is bounded by $t$ times the metric distance:
\begin{equation}\label{eq:three}
d_{G_0}(x_{(1)}, x^*) \le t \cdot \mathit{dist}(x_{(1)}, x^*)
\end{equation}
Combining these inequalities yields:
\begin{equation}\label{eq:four}
d_{G_0}(x_{(1)}, x^*) \le t(\hat{d}_1 + \hat{d}_k)
\end{equation}
As $x^*$ was arbitrary, the inclusion holds for all true top-$k$ points.
\end{proof}

Figure~\ref{fig:sbe} (left) illustrates Dijkstra's search region as the ball with radius $t\cdot(\hat{d}_1+\hat{d}_k)$ centered at the first nearest neighbor. Any embedded vector outside of the ball is pruned.



\subsection{Stretch Bounded Expansion from Query (SBE-Q)}
\label{sbeq}
\begin{figure}[t]
    \centering
         \includegraphics[height=3cm,width=\linewidth]{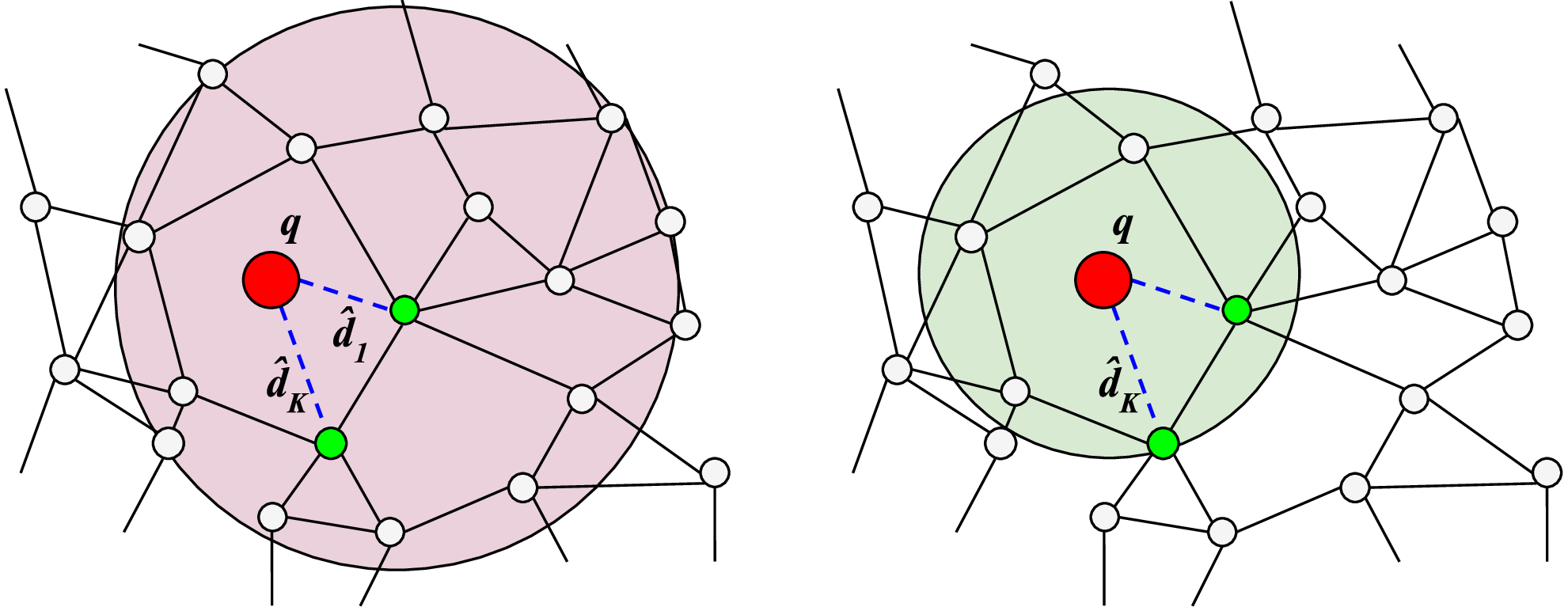}
    \caption{\label{fig:sbe}Stretch Bound Expansion. On the left, SBE-NN expands from the first NN of q a radius $\hat{d_1}+\hat{d_k}$. On the right, SBE-Q expands a radius $\hat{d_k}$ from $q$} 
\end{figure}

While SBE-NN (Section \ref{sec:sbe-nn}) guarantees exactness, the search radius relies on the triangle inequality to bridge the gap between the starting node $x_{(1)}$ and the query $q$. This introduces unnecessary ``slack'' in the search bound—specifically the term $\hat{d}_1 = \mathit{dist}(x_{(1)}, q)$. In high-dimensional spaces, even the nearest neighbor may be at a significant distance from $q$, inflating the search radius unnecessarily.
To tighten this bound, we propose \textbf{SBE-Q}, which conceptually initiates the expansion directly from the query point $q$.

To formalize this, we treat the query $q$ as a temporary node inserted into the graph $G$. Let $G' = (V', E')$ denote the graph augmented with $q$, where $V' = V \cup \{q\}$. Edges are established between $q$ and the top-$M_0$ computed candidates to respect graph degree upper bound. (In the case $k<M_0$, we use $k$.) This virtual insertion alters the graph topology. In Section \ref{sec:incremental}, we demonstrate how to efficiently test the stretch factor of $G'$. Let $t'$ be the stretch factor of the augmented graph $G'$ (which may differ from $t$). We derive a bound depending solely on $\hat{d}_k$ and $t'$, eliminating the $\hat{d}_1$ overhead (see Figure \ref{fig:sbe}).

\begin{theorem}[Stretch-Bounded Expansion from Query]
\label{thm:sbe-q}
Let $q$ be a query point virtually inserted into the graph $G$ such that the augmented graph $G'$ is a $t'$-spanner. Let $\hat{d}_k$ be the distance to the $k$-th candidate returned by HNSW. Every true top-$k$ point lies inside the graph ball of radius $t' \cdot \hat{d}_k$ centered at $q$.

Formally, for every $x^* \in N_k^*(q)$:
\begin{equation}
\label{eq:sbe-q-bound}
d_{G'}(q, x^*) \le t' \cdot \hat{d}_k
\end{equation}
\end{theorem}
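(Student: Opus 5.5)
The plan follows the proof of the SBE-NN containment theorem, with the query $q$ now serving directly as the expansion centre. Because $q$ is a genuine vertex of $G'$, the spanner inequality can be applied to the pair $(q, x^*)$ itself. The triangle-inequality detour through $x_{(1)}$ is no longer needed, and so the $\hat{d}_1$ term disappears.

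\textbf{Step 1: metric bound.} Fix an arbitrary $x^* \in N_k^*(q)$. By the definition of $d_k^*$ as the largest distance from $q$ to a true top-$k$ point, and by Equation~\ref{eq:bound_dk}, we have $\mathit{dist}(q,x^*) \le d_k^* \le \hat{d}_k$. This is exactly Equation~\ref{eq:one}. One caveat: Equation~\ref{eq:bound_dk} needs the HNSW output to contain $k$ distinct points. If fewer than $k$ are returned, $\hat{d}_k$ is not defined, and this case has to be excluded by hypothesis.

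\textbf{Step 2: spanner bound.} The vertex set $V' = V \cup \{q\}$ contains both $q$ and $x^*$. Applying the $t'$-spanner property of $G'$ to this pair gives $d_{G'}(q,x^*) \le t' \cdot \mathit{dist}(q,x^*)$. Combining with Step 1 yields $d_{G'}(q,x^*) \le t' \hat{d}_k$. Since $x^*$ was arbitrary, $N_k^*(q)$ is contained in the graph ball of radius $t'\hat{d}_k$ around $q$ in $G'$.

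\textbf{Main subtlety.} The algebra is immediate; the real work lies in the hypothesis that $G'$ is a $t'$-spanner. That condition must hold for the pairs $(q,x)$, not just for pairs of old vertices. This is why $t'$ generally differs from $t$: adding only $M_0$ (or $k$) edges at $q$ could make the ratio $d_{G'}(q,x)/\mathit{dist}(q,x)$ large for some $x$.

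Within this theorem that condition is assumed, so the proof is complete once it is stated. I would add a remark that obtaining $t'$ in practice is deferred to the reservoir-sampling and incremental estimation of Section~\ref{sec:incremental}. I would also note that inserting $q$ never increases $d_G$ between old vertices, so the old pairs cannot violate the bound $t'$ provided $t' \ge t$. The new pairs involving $q$ are therefore the only ones that need separate checking.
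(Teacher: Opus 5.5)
Your proof is correct and matches the paper's argument: you bound $\mathit{dist}(q,x^*) \le d_k^* \le \hat{d}_k$, apply the $t'$-spanner property of $G'$ to the pair $(q,x^*)$, and combine the two inequalities. Your side remarks are also accurate: you need $k$ returned candidates for $\hat{d}_k$ to be defined, and since inserting $q$ only shortens graph distances, old pairs cannot violate a bound $t' \ge t$.
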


\begin{proof}
Let $x^*$ be any arbitrary point in the true $k$-nearest neighbor set $N_k^*(q)$.
By definition of the true set and the property that the HNSW $k$-th distance is an upper bound ($d_k^* \le \hat{d}_k$):
\begin{equation}
\label{eq:metric-bound}
\mathit{dist}(q, x^*) \le d_k^* \le \hat{d}_k
\end{equation}

Since the augmented graph $G'$ satisfies the $t'$-spanner property for the pair $(q, x^*)$, the shortest path distance in $G'$ is bounded by the metric distance scaled by $t'$:
\begin{equation}
\label{eq:spanner-bound}
d_{G'}(q, x^*) \le t' \cdot \mathit{dist}(q, x^*)
\end{equation}

Combining Inequality~\eqref{eq:metric-bound} and Inequality~\eqref{eq:spanner-bound}, we obtain:
\begin{equation}
d_{G'}(q, x^*) \le t' \cdot \hat{d}_k
\end{equation}

Since this holds for any arbitrary $x^* \in N_k^*(q)$, the entire set of true nearest neighbors is contained within the graph ball $B_{G'}(q, t' \cdot \hat{d}_k)$.
\end{proof}

\begin{remark}
The SBE-Q bound $d_{G'}(q, x) \le t' \cdot \hat{d}_k$ is strictly tighter than the SBE-NN bound $d_{G_0}(x_{(1)}, x) \le t (\hat{d}_k + \hat{d}_1)$ whenever $t' \approx t$. By shifting the expansion center to $q$, we remove the penalty $\hat{d}_1$ (the distance from the query to the closest returned candidate). In Section \ref{subsec:system_ablation}, we empirically demonstrate that this approach significantly reduces the volume of the graph explored while maintaining probabilistic guarantees.
\end{remark} Figure~\ref{fig:sbe} (right) illustrates Dijkstra's search region as the ball with radius $t\cdot \hat{d}_k$ centered at the query point $q$. The Dijkstra's expansion is smaller than the expansion obtained by starting the search at the first nearest neighbor (left panel), because we do not need to traverse the additional $t\cdot \hat{d}_1$ term again.

\subsection{Incremental Stretch Factor Estimation}
\label{sec:incremental}

For an HNSW graph $G=(V,E)$ of realistic size, re-evaluating the stretch factor distribution from scratch after every ``virtual'' query insertion is computationally prohibitive. We seek to maintain the validity of the EVT bounds derived in Section 3 without incurring the cost of a full resample.

When we simulate the insertion of a query node $q$, the population of vertex pairs expands. To maintain a representative sample of tail extremes without full re-computation, we employ a reservoir-style update that probabilistically injects new pairs involving $q$ while retaining existing sample values.

\paragraph{Algorithm: Incremental Reservoir Tail Patching.} Because the insertion of $q$ adds edges to the graph, the shortest path distance $d_G(u,v)$ for any existing pair $(u,v)$ can only decrease or remain constant; it cannot increase. Therefore, the previously sampled extreme stretch values either remain accurate or act as conservative overestimates. We define the update probability $p$ as the likelihood that a uniformly random pair in the augmented graph involves the new node $q$. If a newly sampled stretch factor involving $q$ falls into the extreme right tail, it is injected into the pool of block maxima, and the GEV parameters $(\mu,\sigma,\xi)$ are efficiently refitted. Algorithm \ref{alg:evt-inc} presents the overall approach.

\begin{footnotesize}
\begin{algorithm}[htbp]
\caption{Incremental Reservoir Tail Patching for Stochastic Stretch Estimation}
\label{alg:evt-inc}
\begin{algorithmic}[1]
\REQUIRE Existing sample $S_N$ partitioned into $m$ blocks of size $n$, Graph size $N$, New query node $z$, Target confidence $\beta$, Previous stretch bound $t$
\ENSURE Updated sample $S_{N+1}$, Updated GEV parameters $(\hat{\mu}, \hat{\sigma}, \hat{\xi})$, Updated return level $t'$

\STATE $p \gets \frac{2}{N+1}$ \COMMENT{Probability a random pair involves $z$}
\STATE $M_{update} \gets \text{False}$ \COMMENT{Flag to track if any block maximum requires recalculation}
\FOR{$i = 1$ \TO $m$}
    \FOR{$j = 1$ \TO $n$}
        \STATE Draw $u \sim \text{Uniform}(0,1)$
        \IF{$u < p$}
            \STATE Select random node $v \in V_N$
            \STATE Compute $d_G(v,z)$ and $d_X(v,z)$
            \STATE $S_{i,j} \gets \frac{d_G(v,z)}{d_X(v,z)}$ \COMMENT{Replace with new connection stretch}
            \STATE $M_{update} \gets \text{True}$
        \ELSE
            \STATE $S_{i,j} \gets S_{i,j}$ \COMMENT{Retain old conservative value}
        \ENDIF
    \ENDFOR
\ENDFOR
\STATE
\IF{$M_{update}$ is True}
    \FOR{$i = 1$ \TO $m$}
        \STATE $M_{N+1,i} \gets \max(S_{i,1}, \dots, S_{i,n})$
    \ENDFOR
    \STATE $(\hat{\mu}, \hat{\sigma}, \hat{\xi}) \gets \text{MLE}(M_{N+1})$ \COMMENT{Fit GEV distribution to the updated block maxima}
    \STATE $t' \gets \hat{\mu} - \frac{\hat{\sigma}}{\hat{\xi}} \left[ 1 - (-\ln \beta)^{-\hat{\xi}} \right]$ \COMMENT{Compute new return level bound}
\ELSE
    \STATE $t' \gets t$ \COMMENT{Retain previous bound $t$ if reservoir was unchanged}
\ENDIF
\RETURN $S_{N+1}, (\hat{\mu}, \hat{\sigma}, \hat{\xi}), t'$
\end{algorithmic}
\end{algorithm}
\end{footnotesize}

\subsubsection{Theoretical Guarantees}

We assert that the sample of extreme values updated by the Incremental Reservoir Tail Patching algorithm allows for a valid certification of the maximum stretch factor using the methodology of Theorem~\ref{thm:evt}.

\begin{theorem}[Conservative EVT Certification]
\label{thm:c-evt}
Let $S_{ext}$ be the pool of extreme stretch values maintained during incremental insertion. Let $(\hat{\mu}, \hat{\sigma}, \hat{\xi})$ be the GEV parameters fitted to $S_{ext}$. For a confidence level $\beta$, the probability that the true maximum stretch factor of the augmented graph $G'$ does not exceed the estimated return level $t'$ is at least $\beta$.
\end{theorem}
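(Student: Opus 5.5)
The plan is to reduce the claim to Theorem~\ref{thm:evt} by a stochastic-dominance argument. Concretely, I would show that the patched block maxima computed by Algorithm~\ref{alg:evt-inc} stochastically dominate block maxima drawn from the true stretch distribution of the augmented graph $G'$. A GEV fit to a dominating sample then yields a return level at least as large as the one obtained from an ideal fresh sample of $G'$.

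\textbf{Step 1 (monotonicity of old pairs).} Adding $q$ and its incident edges to $G$ can only shorten paths. So for every pair $(u,v)$ with $u,v\in V$ we have $d_{G'}(u,v)\le d_G(u,v)$, while $dist(u,v)$ is unchanged. Every retained entry $S_{i,j}$ therefore satisfies $S_{i,j}\ge S'_{i,j}$, where $S'_{i,j}$ denotes the true stretch of the same pair in $G'$.

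\textbf{Step 2 (correct mixture for new pairs).} A uniformly random pair of distinct vertices in $V'$ involves $q$ with probability $2/(N+1)$. That is exactly the replacement probability $p$ in Algorithm~\ref{alg:evt-inc}, and replaced entries are computed exactly in $G'$. Each patched entry is therefore distributed as a draw from the mixture defining the stretch variable of $G'$, except that the old-pair component is replaced by a pointwise larger value. It follows that each patched entry is an upper coupling of an i.i.d.\ draw from the law of $S'$, and each block maximum $M_{N+1,i}$ dominates the corresponding ideal block maximum. Independence across entries is preserved, because the replacement coins are independent and the original sample was i.i.d.

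\textbf{Step 3 (transfer to the return level).} Since the block maxima dominate, their limiting law is a GEV that lies stochastically above the GEV of the ideal $G'$ block maxima, with CDF $H'\le H$. Hence the $\beta$-quantile satisfies $t'\ge t'_{\mathrm{ideal}}$. By Theorem~\ref{thm:evt}, $t'_{\mathrm{ideal}}$ bounds the maximum stretch of $G'$ with confidence $\beta$, so the probability that the true maximum stretch of $G'$ exceeds $t'$ is at most $1-\beta$. In the branch where no entry is replaced, $t'=t$ and the same domination argument applies with every entry retained.

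\textbf{Main obstacle.} The difficulty is Step 3: MLE is not monotone in its data in general, so dominance of the sample does not automatically give dominance of the fitted quantile. I would first try to argue at the level of the limiting distribution, treating the fitted parameters as consistent for the dominating GEV, in the same asymptotic sense in which Theorem~\ref{thm:evt} is itself stated. This yields the claim asymptotically, with the same caveat of fit error that Theorem~\ref{thm:evt} carries. If a finite-sample statement is needed, I would instead compare empirical quantiles of the block maxima, which are monotone.
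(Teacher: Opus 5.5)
Your proposal takes the same route as the paper's proof: inserting $q$ can only shorten shortest paths, so every retained stretch value is conservative, the patched tail stochastically dominates the true tail of $G'$, and the GEV return level fitted to these conservative maxima is taken to upper-bound the true maximum stretch. The paper has no analogue of your Step~2 coupling and simply asserts the final transfer from dominating block maxima to a dominating fitted return level, so the MLE non-monotonicity you flag is left unaddressed in the paper's argument as well (and your empirical-quantile fallback would certify a different statistic than the GEV return level $t'$ computed in Algorithm~\ref{alg:evt-inc}).
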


\begin{proof}
The validity of this bound relies on the monotonically non-increasing nature of graph distances under edge addition. For retained pairs not involving $q$, the stored stretch value $S^{old}$ satisfies $S^{old} \ge S^{true}$. Therefore, the right tail of our sampled distribution stochastically dominates the true right tail of the augmented graph. Fitting the GEV distribution to these conservative maxima yields a return level $t'$ that safely upper-bounds the true maximum stretch.
\end{proof}

\subsubsection{Complexity Analysis}

The computational efficiency of this incremental strategy is superior to full resampling. The number of distance computations required follows a Binomial distribution $K \sim B(n,p)$, where $p = \frac{2}{N+1}$. The expected number of computations is $E[K] = \frac{2n}{N+1}$.

For a typical configuration where the graph size $N$ is large, the expected graph distance sampling cost is negligible. The primary additional overhead is the Maximum Likelihood Estimation (MLE) required to update the three GEV parameters $(\mu, \sigma, \xi)$. Because MLE is performed solely on the small subset of extreme tail values rather than the entire sample, it operates in $\mathcal{O}(m)$ time (where $m$ is the number of block maxima), effectively allowing for continuous certification with near-zero marginal cost per simulated insertion.

\section{Metric Bound Verification}
\label{sec:mbv}

While the Stretch-Bounded Expansion (SBE-Q and F-SBE-Q) described in Sections \ref{sbeq} and \ref{sec:F-SBE-Q} provides a theoretical containment guarantee for the true $k$-nearest neighbors, naively evaluating the metric distance $dist(q, v)$ for every node $v$ within the search radius $t' \cdot d_k$ can be computationally expensive. To bridge the gap between theoretical exactness and practical latency, we introduce Metric Bound Verification (MBV).

MBV is a pruning mechanism integrated directly into the graph traversal. It exploits the triangular inequality inherent in the metric space to discard candidates \textit{within} the search radius without performing explicit distance computations. By maintaining recursive lower bounds on the distance from the query to the current node, we can rigorously certify that specific sub-branches of the graph cannot contain better candidates than those already found.

These pruning mechanisms are universal; they apply equally to the unconstrained SBE-Q search and the filtered F-SBE-Q search, as they rely solely on metric space properties independent of node attributes.

\subsection{Recursive Lower Bound Pruning}
\label{subsec:recursive_pruning}

The core of MBV relies on propagating lower bounds down the search tree (the path taken by Dijkstra's algorithm from the virtual query node $q$). For any node $u$ visited during the expansion, let $p$ be its parent in the search trace (i.e., the node that added $u$ to the priority queue). By the triangle inequality, the metric distance $\mathit{dist}(q, u)$ is lower-bounded by the distance to the parent minus the edge weight connecting them:
\begin{equation}
\mathit{dist}(q, u) \ge \mathit{dist}(q, p) - \mathit{dist}(p, u)
\end{equation}
Since the edge weight $w(p, u)$ in the graph corresponds to the metric distance between $p$ and $u$, we can derive a recursive lower bound $LB(u)$. This bound depends on whether the parent $p$ was fully evaluated or merely pruned (where only a lower bound is known).

\begin{itemize}
    \item \textbf{Evaluated Parent:} If $\mathit{dist}(q, p)$ has been computed (note if this distance was in the Search Trace $\mathcal{T}(q)$, as per section \ref{sec:preliminaries}, it will be re-used), then:
    $
    LB(u) = \mathit{dist}(q, p) - w(p, u)
    $
    \item \textbf{Pruned Parent:} If $p$ was pruned based on its own lower bound $LB(p)$, we propagate the uncertainty:
    $
    LB(u) = LB(p) - w(p, u)
    $
\end{itemize}

Let $\hat{d}_k$ be the distance to the current $k$-th nearest neighbor found so far (initially from HNSW, then updated dynamically). If $LB(u) > \hat{d}_k$, we can certify that $u$ is strictly worse than the current candidate set and cannot improve the result. Consequently, we mark $u$ as \texttt{PRUNED} and skip the expensive $\mathit{dist}(q, u)$ calculation (Figure \ref{fig:pruning-MBV}). Crucially, we still explore $u$'s neighbors, as the graph topology may lead back toward $q$, but we do so carrying the derived lower bound.


\begin{figure}[t]
    \centering
\includegraphics[height=2.93cm,width=\linewidth]{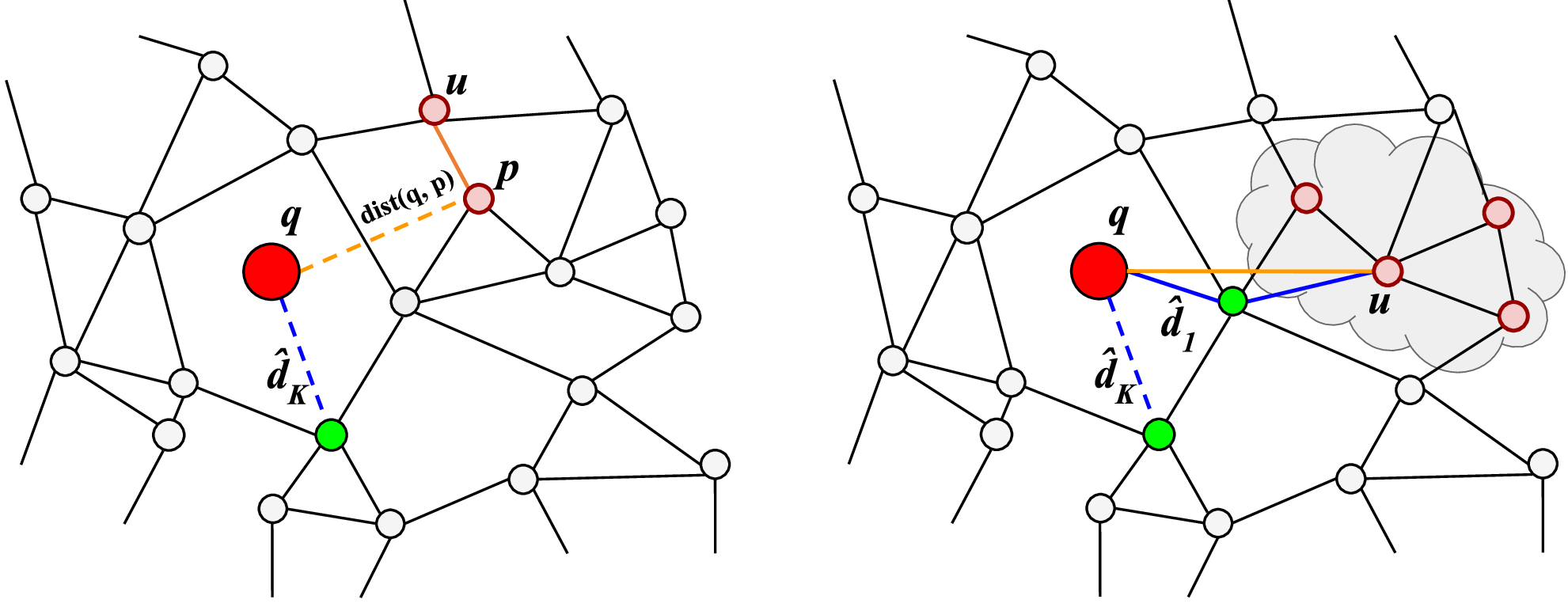}
    \caption{\label{fig:pruning-MBV}Metric Bounded Verification. On the left, Recursive Lower Bound Pruning is shown: the parent p was pruned, and the lower bound LB(u) was then used to eliminate u. On the right, Elliptical Search Space Pruning is shown, where node u is pruned along with its descendants (Equation~\ref{eq:ellipse-pruning}).}
\end{figure}

\subsection{Elliptical Search Space Pruning}
\label{subsec:elliptical_pruning}

Beyond avoiding distance computations, we can also prune the search space itself using an ``Elliptical Intersection'' check. If the exact distance $\mathit{dist}(q, u)$ is computed, we can determine if the entire branch stemming from $u$ is effectively dead. A node $u$ can be discarded—preventing the addition of its neighbors to the queue—if the path through $u$ cannot possibly reach a point closer than the current $k$-th neighbor bound $\hat{d}_k$ within the remaining search budget.

Formally, if $d_{G'}(q, u)$ is the graph distance traveled so far and $R_{search} = t' \cdot \hat{d}_k$ is the current stretch-bounded radius, any descendant $v$ must satisfy $d_{G'}(q, v) \le R_{search}$. This implies the ``remaining fuel'' from $u$ is $R_{search} - d_{G'}(q, u)$. If the metric distance $\mathit{dist}(q, u)$ exceeds this remaining capacity plus the target distance $\hat{d}_k$, no descendant can qualify. The pruning condition is (Figure \ref{fig:pruning-MBV}):
\begin{equation}\label{eq:ellipse-pruning}
d_{G'}(q, u) + \mathit{dist}(q, u) > t' \cdot \hat{d}_k + \hat{d}_k
\end{equation}
This check tightens the search region from a simple graph ball to the intersection of the graph ball and the underlying metric constraints.

\subsection{Algorithm}

The integrated procedure, SBE-Q with Metric Bound Verification, is detailed in Algorithm~\ref{alg:sbe_mbv_corrected}. It replaces the naive expansion phases, ensuring that expensive metric evaluations are minimized while maintaining the probabilistic guarantees of exactness derived in Theorem~\ref{thm:sbe-q}.

This algorithm upgrades the heuristic search of an HNSW graph into an exact retrieval engine by rigorously bounding the search space and pruning unnecessary calculations. It operates through two integrated mechanisms:

\paragraph{Stretch-Bounded Expansion (SBE):} The search treats the query $q$ as a virtual node and expands strictly within a dynamic radius $R_{\text{prune}} = t' \cdot \hat{d}_k$. This radius is derived from the graph's stretch factor $t'$ and the current best-known $k$-th distance $\hat{d}_k$. Since the true $k$-th distance $d_k^*$ satisfies $d_k^* \le \hat{d}_k$, this radius guarantees that all true $k$-nearest neighbors are contained within the boundary. As better candidates are discovered, $\hat{d}_k$ decreases, automatically shrinking the search radius. Specifically:
    \begin{description}
         \item[Discovery of Better Candidates:] As the expansion traverses the graph, it computes the exact metric distance $d_{true}$ for a visited node $u$. If $d_{true} < \hat{d}_k$, $u$ improves upon the current top-$k$ set.
    
         \item[The Update Event:] The algorithm inserts $u$ into the priority queue of results ($R_{topK}$) and removes the previous furthest candidate. Consequently, $\hat{d}_k$ immediately decreases.
    
         \item[Tightening the Radius:] This reduction in $\hat{d}_k$ triggers an immediate recalculation of $R_{prune} = t' \cdot \hat{d}_k$. The algorithm effectively ``pulls in'' the search horizon, pruning pending nodes in the queue that are now outside the new, tighter boundary.
        \end{description}
    
\paragraph{Metric Bound Verification (MBV):} To avoid expensive metric distance computations for every visited node, the algorithm applies two pruning checks:
    
    \begin{enumerate}
        \item \textbf{Recursive Lower Bound Pruning:} It utilizes the triangle inequality to propagate lower bounds from parent nodes to their children. If a node's lower bound distance $LB(u)$ exceeds the current $\hat{d}_k$, the node is discarded without performing the expensive distance calculation.
        
        \item \textbf{Elliptical Pruning:} It evaluates if a graph branch is ``dead'' by checking if the path through a node can possibly reach a better candidate within the remaining search budget. If $d_{G'}(q, u) + \mathit{dist}(q, u) > (t' + 1) \cdot \hat{d}_k$, the entire branch is pruned.
    \end{enumerate}

Together, these techniques guarantee the retrieval of the exact $k$-nearest neighbors while maintaining efficiency by aggressively filtering out unpromising candidates.
Figure~\ref{fig:pruning-MBV} (right) presents the Elliptical Search Space Pruning technique where the node $u$ is far enough from $q$ by graph distance (path shown in blue), such that any descendant of $u$ (shown by the red cloud) in the shortest-path tree will not be within the top-k as they are far away in vector space as well (shown by the orange line) and thus, can be pruned.

\section{Statistical Certification via Conformal Risk Control}
\label{sec:conformal_certification}

We frame the ``Certify'' phase of our ``Certify-then-Rectify'' workflow as a distribution-free selective prediction problem. By leveraging Conformal Risk Control (CRC) \cite{angelopoulos2021gentle,angelopoulos2024conformal} and Learn-then-Test (LTT) \cite{angelopoulos2022learn} procedures, we construct a rigorous statistical certifier. This certifier acts as a gatekeeper, dynamically deciding whether to trust the fast, approximate candidates returned by HNSW, or to escalate to the exact, but computationally heavier, exact recovery procedures (SBE-Q and MBV). 

\subsection{Problem Setup and Certifiability Score}

Let $\mathcal{N}_{k}(q)$ represent the approximate top-$k$ candidates returned by the vanilla HNSW greedy traversal. Let $\mathcal{N}_{k}^{*}(q)$ represent the true $k$-nearest neighbors obtained via an exact oracle (in our framework, this oracle is SBE-Q coupled with MBV). We define the per-query recall as:
\begin{equation}
Recall(q) = \frac{|\mathcal{N}_{k}(q) \cap \mathcal{N}_{k}^{*}(q)|}{k}
\end{equation}

Given a target recall threshold $\tau \in (0,1]$, we call a top-$k$ result set "compliant" to $\tau$ if $Recall(q) \ge \tau$. Given $\tau$ and a target risk level $\alpha \in (0,1)$, our goal is to build a certifier $C(q)$ that outputs ``accept'' only when it can statistically guarantee that the top-$k$ result set is compliant with $\tau$.

To achieve this, we define a score $s(q)$ from HNSW’s internal search trace. We use a compact feature vector $z(q)$
that captures (i) distance geometry, (ii) search reach, and (iii) path stability:
\[
\begin{aligned}
z(q)=\Big[\hat d_1,\ldots\hat d_{100},\ \tilde d_1,\ldots,\tilde d_{10},|\mathcal{T}(q)|,\ n_{\mathrm{rev}}(q),\ \Delta_{\mathrm{rev}}(q)\Big].
\end{aligned}
\]
$\hat d_1,\ldots,\hat d_{100}$ are the top-100 returned distances, and in the case $k<100$, we pad with 0 entries. $\tilde d_1,\ldots,\tilde d_{10}$ are the remaining top-10 candidate-frontier distances at termination. These features capture distance geometry. Following previous work~\cite{ ceccarello2025workloads}, we capture the search reach with $|\mathcal{T}(q)|$, where $\mathcal{T}(q)$ is the search trace defined in Section~\ref{sec:preliminaries}. We use \emph{distance-reversal} statistics for path stability:
$n_{\mathrm{rev}}(q)$ is the number of direction reversals in successive expansion distances,
and $\Delta_{\mathrm{rev}}(q)$ is their cumulative magnitude. 
More precisely, let
\[
d^{\mathrm{pop}}_1,d^{\mathrm{pop}}_2,\ldots,d^{\mathrm{pop}}_T
\]
denote the sequence of metric distances from $q$ of the nodes popped from the bottom-layer HNSW candidate queue. A \emph{distance reversal} occurs at index $j\in\{2,\ldots,T-1\}$ whenever the local direction of this sequence changes, i.e.,
\[
\bigl(d^{\mathrm{pop}}_j-d^{\mathrm{pop}}_{j-1}\bigr)\bigl(d^{\mathrm{pop}}_{j+1}-d^{\mathrm{pop}}_j\bigr)<0.
\]
Equivalently, the popped distances are locally non-monotone; for example,
$100,50,120$ and $50,120,80$ each contain a reversal. Let $r_j(q)$ indicate
whether a local reversal occurs at position $j$. We define
$n_{\mathrm{rev}}(q)=\sum_{j=2}^{T-1} r_j(q)$ as the number of reversals, and
$\Delta_{\mathrm{rev}}(q)$ as the sum of the adjacent jump magnitudes (ie. the two factors of the product in the above equation) over all
positions $j$ with $r_j(q)=1$. These features quantify instability in the
HNSW traversal order: larger values indicate that the search path is less
geometrically consistent. Our feature-ablation results indicate that distance geometry contributes most of the score separability, while path stability statistics provide complementary gains in harder regimes: removing distance features reduces AUROC by \(0.009\!\sim\!0.055\) across datasets. Removing reversal features has a smaller average effect but still non-negligible (up to \(\Delta\)AUROC \(=0.017\)). This supports using the full vector for cross-dataset robustness.
The feature vector is then mapped through logistic regression trained on a calibration set. Crucially, the statistical guarantee is independent of the model's predictive quality; a poor fit simply yields a more conservative certifier.

\subsection{Guaranteeing Expected Shortfall via CRC}

To guarantee the average quality of the certified results without making underlying distributional assumptions about the metric space, we utilize the Conformal Risk Control (CRC) framework. We define the recall shortfall as our per-query loss function:
\begin{equation}
l(q) = \max(0, \tau - Recall(q))
\end{equation}

Given a calibration set $\mathcal{D}_{cal} = \{(q_{i}, \mathcal{N}_{k}^{*}(q_{i}))\}_{i=1}^{n}$ drawn i.i.d. from an unknown query distribution $\mathcal{P}$, the procedure operates as follows:
\begin{enumerate}
    \item Compute the score $s_{i} = s(q_{i})$ and the loss $l_{i}$ for each calibration query, sorting them in descending order of their score.
    \item For any candidate threshold $\theta$, define the empirical risk as:
    \begin{equation}
    \hat{R}(\theta) = \frac{\sum_{i=1}^{n} l_{i} \cdot \mathbbm{1}[s_{i} \ge \theta]}{\sum_{i=1}^{n} \mathbbm{1}[s_{i} \ge \theta] + 1}
    \end{equation}
    \item Select the optimal threshold $\hat{\theta}$ that bounds this empirical risk:
    \begin{equation}
    \hat{\theta} = \inf \left\{ \theta : \hat{R}(\theta) \le \alpha(1-\tau) \cdot \frac{n}{n+1} \right\}
    \end{equation}
\end{enumerate}

\textbf{Theorem 6.1 (CRC Guarantee for HNSW Recall).} \textit{If the calibration queries $q_1, ..., q_n$ are i.i.d. from $\mathcal{P}$, then for a new test query $Q \sim \mathcal{P}$, the certifier $C(Q) = \mathbbm{1}[s(Q) \ge \hat{\theta}]$ satisfies:}
\begin{equation}
\mathbb{E}[l(Q) \cdot \mathbbm{1}[s(Q) \ge \hat{\theta}]] \le \alpha(1-\tau)
\end{equation}
This guarantee is exact in a finite-sample sense and requires no knowledge of HNSW's internal workings.

\subsection{Pointwise Guarantees via Learn-then-Test (LTT)}

If a strict, per-query probabilistic guarantee is required, we deploy the Learn-then-Test (LTT) framework.
LTT controls the certifier's precision, ensuring that $\mathbb{P}_{q \sim \mathcal{P}}[Recall(q) \ge \tau \mid s(q) \ge \hat{\theta}] \ge 1 - \epsilon$ with a confidence level of $1 - \alpha$.

We split the calibration set $\mathcal{D}_{cal}$ into a training set $\mathcal{D}_{tr}$ and a test set $\mathcal{D}_{te}$. From $\mathcal{D}_{tr}$, we define a finite set of candidate thresholds $\Theta$. For each candidate threshold $\theta \in \Theta$ evaluated on $\mathcal{D}_{te}$, we test the null hypothesis $H_{0}^{\theta} : \mathbb{P}[Recall(q) < \tau \mid s(q) \ge \theta] > \epsilon$. 

Let $m_{\theta}$ denote the number of certified queries and $x_{\theta}$ denote the subset of those queries that failed to meet the target recall. We compute the one-sided binomial p-value:
\begin{equation}
\text{p-value} = \sum_{j=0}^{x_{\theta}} \binom{m_{\theta}}{j} \epsilon^{j}(1-\epsilon)^{m_{\theta}-j}
\end{equation}
To account for multiple testing over $\Theta$, we apply a Bonferroni correction, rejecting $H_{0}^{\theta}$ only if $\text{p-value} \le \frac{\alpha}{|\Theta|}$. The final threshold is selected as:
\begin{equation}
\hat{\theta} = \max\{\theta \in \Theta : H_{0}^{\theta} \text{ is rejected}\}
\end{equation}

\subsection{End-to-End Guarantee}
\label{sec:end-to-end}

We certify two \emph{different} events:
the stochastic stretch estimation (SSE) underwrites \emph{exactness} of the
rectify branch with extreme-value confidence~$\beta$, while the conformal
certifier (CRC/LTT) controls \emph{$\tau$-recall} on the certify branch with a
distribution-free guarantee. The end-to-end behaviour of CTR is therefore a
mixture over the certify/escalate split, and the two guarantees live on
independent probability spaces. We make this precise here.

\paragraph{Sources of randomness.}
We isolate three independent sources:
(i) the calibration sample $\mathcal{D}_{\mathrm{cal}}=\{(q_i,\mathcal{N}^{*}_k(q_i))\}_{i=1}^{n}\sim P^{\,n}$,
which fixes the threshold $\hat\theta$;
(ii) a fresh test query $Q\sim P$;
and (iii) the reservoir/SSE sampling $\omega$ used to fit the augmented
stretch factor $t'$ on the query-augmented graph $G'_Q$, drawn only when $Q$
is escalated.

Let $C(q)=\mathbf{1}[\,s(q)\ge\hat\theta\,]$ denote the certifier
(Section~\ref{sec:conformal_certification}), and let $\widehat{\mathcal{N}}(Q)$ be the CTR output:
$\widehat{\mathcal{N}}(Q)=\mathcal{N}_k(Q)$ if $C(Q)=1$ (certify), and
$\widehat{\mathcal{N}}(Q)=\textsc{MBV}(Q)$ if $C(Q)=0$ (escalate). Write
$\rho:=\Pr_{Q}[\,C(Q)=1\,]$ for the certify rate induced by $\hat\theta$, and
recall the per-query loss $\ell(q)=\max\!\big(0,\ \tau-\mathrm{Recall}(q)\big)$.

\paragraph{Assumptions.}
\begin{assumption}[Reachability]\label{ass:reach}
For every escalated query $q$, each true neighbor
$x^{*}\in\mathcal{N}^{*}_k(q)$ is reachable from $q$ in the augmented
bottom-layer graph $G'_q$.
\end{assumption}

\begin{assumption}[Calibration oracle]\label{ass:oracle}
The ground-truth labels $\mathcal{N}^{*}_k(q_i)$ used to calibrate $\hat\theta$
are the true top-$k$ sets.
\end{assumption}

\noindent
Assumption~\ref{ass:reach} is the connectivity caveat of vanilla HNSW (cf.\ the
rare $in\ degree=0$ nodes); without it even
exhaustive expansion within the stretch ball can miss a neighbor.
Assumption~\ref{ass:oracle} is discussed and relaxed in
Remark~\ref{rem:oracle}.

\begin{lemma}[Exactness of the rectify branch]\label{lem:rectify}
Fix an escalated query $q$ and let $\mathcal{E}_\omega$ be the event that
$G'_q$ satisfies the $t'_\omega$-spanner property on every pair
$(q,x^{*})$ with $x^{*}\in\mathcal{N}^{*}_k(q)$. Under
Assumption~\ref{ass:reach},
\[
\Pr_{\omega}\!\big[\,\textsc{MBV}(q)=\mathcal{N}^{*}_k(q)\ \big|\ C(q)=0\,\big]\ \ge\ \beta .
\]
\end{lemma}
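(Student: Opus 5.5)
The plan is to split the argument into a probabilistic part and a deterministic part. First we show that $\Pr_\omega[\mathcal{E}_\omega \mid C(q)=0]\ge\beta$. Then we show that on $\mathcal{E}_\omega$, under Assumption~\ref{ass:reach}, MBV returns exactly $\mathcal{N}^*_k(q)$. Monotonicity of probability then gives $\Pr_\omega[\textsc{MBV}(q)=\mathcal{N}^*_k(q)\mid C(q)=0]\ge \Pr_\omega[\mathcal{E}_\omega\mid C(q)=0]\ge\beta$.

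For the probabilistic part we condition on the query $q$, and hence on the escalation event $C(q)=0$. The threshold $\hat\theta$ depends only on $\mathcal{D}_{\mathrm{cal}}$, and $\omega$ is drawn independently of both $\mathcal{D}_{\mathrm{cal}}$ and $Q$. So conditioning on $C(q)=0$ leaves the law of $\omega$ unchanged. We then invoke Theorem~\ref{thm:c-evt}: with probability at least $\beta$ over $\omega$, the maximum stretch of $G'_q$ is at most $t'_\omega$. That event is contained in $\mathcal{E}_\omega$, since $\mathcal{E}_\omega$ only requires the spanner inequality on the $k$ pairs $(q,x^*)$. Reachability (Assumption~\ref{ass:reach}) ensures $d_{G'_q}(q,x^*)<\infty$, so each ratio is a finite stretch value covered by that maximum.

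For the deterministic part we fix $\omega\in\mathcal{E}_\omega$ and show that every $x^*\in\mathcal{N}^*_k(q)$ is evaluated and retained. The argument is an invariant over the run of Algorithm~\ref{alg:sbe_mbv_corrected}. At every moment the current $\hat d_k$ satisfies $\hat d_k\ge d_k^*$, because $R_{topK}$ always consists of $k$ genuine data points; this is the argument behind \eqref{eq:bound_dk}. By the proof of Theorem~\ref{thm:sbe-q}, each $x^*$ then satisfies $d_{G'_q}(q,x^*)\le t'\,d_k^*\le t'\hat d_k$, so the shrinking radius never excludes it. We next check that neither MBV rule discards a node on a shortest path $q=v_0,\dots,v_r=x^*$.
\begin{itemize}
\item \emph{Lower-bound pruning.} The bound $LB(u)$ is a valid lower bound on $dist(q,u)$ by the triangle inequality, applied inductively. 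Pruned nodes still propagate their neighbors, and $LB(x^*)\le dist(q,x^*)\le d_k^*\le \hat d_k$, so $x^*$ itself is never skipped. Ties at distance exactly $d_k^*$ are handled by the strict inequality in the rule.
\item \emph{Elliptical pruning.} For an intermediate $v_j$ we have $d_{G'}(q,v_j)+dist(q,v_j)\le d_{G'}(q,v_j)+d_{G'}(v_j,x^*)+dist(x^*,q)$. The first two terms sum to $d_{G'}(q,x^*)\le t'd_k^*$ and the last is at most $d_k^*$. Hence the left side is at most $(t'+1)\hat d_k$, and \eqref{eq:ellipse-pruning} never fires on $v_j$.
\end{itemize}
Since Dijkstra computes exact graph distances within the radius, every $x^*$ is popped and its distance is computed. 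The final top-$k$ is then the true top-$k$, up to tie-breaking among points at distance $d_k^*$, which we treat by a fixed convention.

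We expect the main obstacle to be the probabilistic step. Theorem~\ref{thm:c-evt} is stated for the augmented graph in general, and one must argue that the $\beta$ confidence applies conditionally on the escalated $q$. Escalated queries are plausibly the hard ones with large stretch, which looks like selection bias. The resolution we would use is the independence of $\omega$ from $(\mathcal{D}_{\mathrm{cal}},Q)$, together with the fact that the EVT bound certifies the \emph{maximum} stretch over all pairs of $G'_q$ rather than a typical pair. The bound is therefore uniform over which pairs matter, and conditioning on $C(q)=0$ does not degrade it. A secondary subtlety is that the pruning rules use the dynamic $\hat d_k$; we handle this through the monotone invariant $\hat d_k\ge d_k^*$ above.
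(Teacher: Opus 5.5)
Your proposal is correct and follows the same route as the paper's proof. Both obtain $\Pr_\omega[\mathcal{E}_\omega]\ge\beta$ from Theorem~\ref{thm:c-evt}, and then argue deterministically on $\mathcal{E}_\omega$ that the containment of Theorem~\ref{thm:sbe-q}, together with the invariant $\hat d_k\ge d_k^*\ge \mathit{dist}(q,x^*)$, keeps the dynamic radius and both MBV pruning rules from discarding a true neighbor. Your independence argument for conditioning on $C(q)=0$, and your shortest-path argument that elliptical pruning never fires on intermediate nodes (which the paper covers only with ``dominated by the same inequality''), make the paper's terse steps precise rather than changing the approach.
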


\begin{proof}
By the conservative EVT certification (Theorem~\ref{thm:c-evt}, via
Theorem~\ref{thm:evt}), the return level $t'_\omega$ upper-bounds the true
maximum stretch of $G'_q$ with confidence at least $\beta$, i.e.\
$\Pr_\omega[\mathcal{E}_\omega]\ge\beta$. Condition on $\mathcal{E}_\omega$.
Theorem~\ref{thm:sbe-q} then yields the containment
\[
\mathcal{N}^{*}_k(q)\ \subseteq\ B_{G'_q}\!\big(q,\ t'_\omega\,\hat d_k\big),
\]
so SBE-Q's Dijkstra expansion, which explores exactly this graph ball,
visits every true neighbor. The MBV pruning steps are lossless on the true
top-$k$: a node $u$ is discarded only when its certified lower bound satisfies
$LB(u)>\hat d_k$, and since $\hat d_k\ge d^{*}_k\ge \mathrm{dist}(q,x^{*})$ for
every $x^{*}\in\mathcal{N}^{*}_k(q)$ throughout the traversal, no true neighbor
can ever trigger the pruning condition; the elliptical check is dominated by
the same inequality. As $\hat d_k$ decreases monotonically toward $d^{*}_k$, the
dynamic radius never drops below $d^{*}_k$. Hence, under
Assumption~\ref{ass:reach}, the procedure returns precisely
$\mathcal{N}^{*}_k(q)$ on $\mathcal{E}_\omega$, and the claim follows.
\end{proof}

\noindent
Because the reservoir patch re-fits $t'$ for each $G'_q$ individually
(Algorithm~\ref{alg:evt-inc}), Lemma~\ref{lem:rectify} is per-query and
transfers to $Q\sim P$ with no distributional assumption:
$\Pr_{Q,\omega}[\mathrm{Recall}(Q)=1\mid C(Q)=0]\ge\beta$.

\begin{lemma}[Certify branch, restated]\label{lem:certify}
With the score $s$, threshold $\hat\theta$, and loss $\ell$ of
Section~\ref{sec:conformal_certification}:
\begin{enumerate}
\item \textnormal{(CRC,exact, marginal over
$\mathcal{D}_{\mathrm{cal}}\times P$)}\quad
$\mathbb{E}\!\big[\,\ell(Q)\,\mathbf{1}[C(Q)=1]\,\big]\ \le\ \alpha\,(1-\tau).$
\item \textnormal{(LTT; per-query, confidence $1-\alpha$ over
$\mathcal{D}_{\mathrm{cal}}$)}\quad
$\Pr_{Q}\!\big[\,\mathrm{Recall}(Q)\ge\tau\ \big|\ C(Q)=1\,\big]\ \ge\ 1-\epsilon.$
\end{enumerate}
\end{lemma}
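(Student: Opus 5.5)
Part (1) is a restatement of Theorem~6.1, and part (2) follows from the LTT procedure in Section~\ref{sec:conformal_certification}. The plan is to check that the hypotheses of those results hold in the CTR setting and then reproduce the two standard arguments in our notation.

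\textbf{Part (1), CRC.} The first step is to note that the loss $\ell(q)=\max(0,\tau-\mathrm{Recall}(q))$ lies in $[0,\tau]\subseteq[0,1]$. The second step is to fix the masked loss family $L_i(\theta)=\ell(q_i)\mathbf{1}[s(q_i)\ge\theta]$. This family is non-increasing in $\theta$, since raising the threshold only removes certified queries. By Assumption~\ref{ass:oracle}, $\ell(q_i)$ is computed against the true top-$k$, so the calibration losses are exchangeable with the test loss $L_{n+1}(\theta)=\ell(Q)\mathbf{1}[s(Q)\ge\theta]$.

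The third step is the usual CRC argument.
\begin{itemize}
\item Define the oracle threshold $\theta'$ computed on all $n+1$ points.
\item Exchangeability gives $\mathbb{E}[L_{n+1}(\theta')]=\mathbb{E}\big[\tfrac{1}{n+1}\sum_{i\le n+1}L_i(\theta')\big]$.
\item Bound the right-hand side by the target level using the definition of $\hat\theta$: the $+1$ in the denominator of $\hat R$ and the factor $n/(n+1)$ account for the unknown test loss, which is at most $B=\tau\le 1$.
\item Monotonicity gives $\hat\theta\ge\theta'$, hence $L_{n+1}(\hat\theta)\le L_{n+1}(\theta')$.
\end{itemize}
This yields $\mathbb{E}[\ell(Q)\mathbf{1}[C(Q)=1]]\le\alpha(1-\tau)$, marginal over $\mathcal{D}_{\mathrm{cal}}\times P$.

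\textbf{Part (2), LTT.} The step I expect to need the most care is the matching between the form of the p-value and the direction of the null. The issue is that $H_0^\theta$ as written conditions on the certified event. To handle it, I would condition on the set of certified indices $\{i\in\mathcal{D}_{te}: s(q_i)\ge\theta\}$ and its size $m_\theta$. The threshold set $\Theta$ is built from $\mathcal{D}_{tr}$, which is independent of $\mathcal{D}_{te}$. Therefore, given $m_\theta$, the failure count is distributed as
\[
x_\theta\sim\mathrm{Bin}\big(m_\theta,\ \pi_\theta\big),\qquad \pi_\theta=\Pr[\mathrm{Recall}<\tau\mid s\ge\theta].
\]
Under $H_0^\theta$ we have $\pi_\theta>\epsilon$. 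The lower-tail binomial CDF evaluated at $\epsilon$ stochastically dominates the one at $\pi_\theta$, so the stated quantity is a valid (super-uniform) p-value. Consequently, $\Pr[\text{reject }H_0^\theta]\le\alpha/|\Theta|$ for each true null.

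A union bound over $\Theta$ shows that, with probability at least $1-\alpha$ over $\mathcal{D}_{cal}$, no true null is rejected. On that event every rejected $\theta$, including the selected $\hat\theta$, satisfies $\pi_{\hat\theta}\le\epsilon$. For a fresh $Q$ independent of $\mathcal{D}_{cal}$, this gives $\Pr_Q[\mathrm{Recall}(Q)\ge\tau\mid C(Q)=1]\ge 1-\epsilon$.

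Two details need a remark. If $m_\theta=0$, the null is simply not rejected. The choice of $\hat\theta$ as the maximum over rejected thresholds is data-dependent, but that is harmless because the guarantee is simultaneous over all rejected hypotheses.
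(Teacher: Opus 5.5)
You have a genuine gap, and it is in Part (1). Part (2) is sound. Conditional on $\mathcal{D}_{tr}$ and on which points of $\mathcal{D}_{te}$ are certified, $x_\theta\sim\mathrm{Bin}(m_\theta,\pi_\theta)$. When $\pi_\theta>\epsilon$, the $\mathrm{Bin}(m_\theta,\epsilon)$ CDF is pointwise at least the $\mathrm{Bin}(m_\theta,\pi_\theta)$ CDF, so the p-value is super-uniform under $H_0^\theta$. Bonferroni then makes the max-over-rejected choice of $\hat\theta$ harmless. The paper gives no separate argument for either part; it simply labels them as the CRC and LTT guarantees (Theorem~6.1 and Section~6.3). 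Your third bullet in Part (1) is exactly where that label fails to hold up.

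The exchangeability argument needs every threshold admitted by the calibration rule to also be admitted by the oracle rule on $n+1$ points. Since the test loss can be as large as $B=\tau$, that means $\sum_{i\le n}L_i(\theta)+\tau\le(n+1)\,\alpha(1-\tau)$. Write $c(\theta)=\sum_{i\le n}\mathbf{1}[s_i\ge\theta]$. The paper's rule $\hat R(\theta)\le\alpha(1-\tau)\tfrac{n}{n+1}$ only gives $\sum_{i\le n}L_i(\theta)\le\alpha(1-\tau)\tfrac{n(c(\theta)+1)}{n+1}\le n\,\alpha(1-\tau)$. The $+1$ enlarges the certified count in the denominator; it does not add $B$ to the numerator. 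This leaves a deficit of $\tau-\alpha(1-\tau)$, which is positive in every realistic setting. So $\hat\theta\ge\theta'$ does not follow. Monotonicity cannot rescue it either, because $\hat R$ is a ratio and is not monotone in $\theta$.

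The claimed bound is in fact false for this rule. Take $n=1$, $\tau=0.9$, $\alpha=0.05$, a continuous score independent of the loss, and $\ell\in\{0,0.9\}$ with probability $\tfrac12$ each. When $\ell_1=0$ the rule certifies everything. When $\ell_1=0.9$ it certifies $\{s(Q)\ge s_1\}$. This gives $\mathbb{E}[\ell(Q)\mathbf{1}[C(Q)=1]]=0.3375>0.005$.

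What your argument does prove, for continuous scores, comes from including the paper's feasible set in $\{\theta:\tfrac1n\sum_{i\le n}L_i(\theta)\le\alpha(1-\tau)\}$. That yields $\mathbb{E}[\ell(Q)\mathbf{1}[C(Q)=1]]\le\alpha(1-\tau)+\tfrac{\tau-\alpha(1-\tau)}{n+1}$. To get the exact bound you must replace the calibration rule with the standard CRC one, $\hat\theta=\inf\{\theta:\tfrac{1}{n+1}(\sum_{i\le n}L_i(\theta)+\tau)\le\alpha(1-\tau)\}$, certifying nothing if that set is empty. Separately, your exchangeability step also requires the score $s$ to be fit independently of $\mathcal{D}_{cal}$. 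The paper instead trains its logistic score ``on a calibration set.''
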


\begin{theorem}[End-to-end $\tau$-compliance of CTR]\label{thm:end-to-end}
Suppose Assumptions~\ref{ass:reach}--\ref{ass:oracle} hold.
\begin{enumerate}
\item \textbf{Per-query form (LTT route).} With probability at least $1-\alpha$
over the draw of $\mathcal{D}_{\mathrm{cal}}$,
\[
\Pr_{Q,\omega}\!\big[\,\mathrm{Recall}(Q)\ge\tau\,\big]
\ \ge\ \rho\,(1-\epsilon)\;+\;(1-\rho)\,\beta
\ \ge\ \min(1-\epsilon,\ \beta).
\]
\item \textbf{Expected-shortfall form (CRC route; unconditional,
finite-sample exact).}
\[
\mathbb{E}\!\big[\max\!\big(0,\ \tau-\mathrm{Recall}(Q)\big)\big]
\ \le\ \alpha\,(1-\tau)\;+\;(1-\beta)\,\tau,
\]
equivalently
$\ \mathbb{E}[\mathrm{Recall}(Q)]\ \ge\ \tau-\alpha(1-\tau)-(1-\beta)\,\tau.$
\end{enumerate}
\end{theorem}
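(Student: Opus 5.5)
Both parts follow from conditioning on the certify/escalate split $C(Q)\in\{0,1\}$ and combining Lemma~\ref{lem:rectify} on the escalate branch with Lemma~\ref{lem:certify} on the certify branch. The three sources of randomness are independent: $\mathcal{D}_{\mathrm{cal}}$ fixes $\hat\theta$, then $Q$ is drawn, then $\omega$ is drawn only when $Q$ is escalated. This means that once $\mathcal{D}_{\mathrm{cal}}$ (hence $\hat\theta$ and $\rho$) is fixed, we can apply the law of total probability over $C(Q)$. Assumption~\ref{ass:oracle} ensures that the calibration losses $\ell_i$ are the true losses, so Lemma~\ref{lem:certify} applies to the true $\ell$ and Recall.

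\textbf{Part 1.} Fix a calibration draw in the $1-\alpha$ event on which the LTT statement of Lemma~\ref{lem:certify}(2) holds. I would decompose
\[
\Pr_{Q,\omega}[\mathrm{Recall}(Q)\ge\tau]=\rho\,\Pr[\mathrm{Recall}(Q)\ge\tau\mid C(Q)=1]+(1-\rho)\,\Pr_{Q,\omega}[\mathrm{Recall}(Q)\ge\tau\mid C(Q)=0].
\]
On the certify branch, the output is $\mathcal{N}_k(Q)$, and LTT bounds the first conditional probability by $1-\epsilon$. On the escalate branch, the output is $\textsc{MBV}(Q)$. By Lemma~\ref{lem:rectify}, integrated over $Q$ conditional on $C(Q)=0$ (as in the remark after that lemma, using Assumption~\ref{ass:reach}), we have $\Pr[\mathrm{Recall}(Q)=1\mid C(Q)=0]\ge\beta$. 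Since $\mathrm{Recall}=1$ implies $\mathrm{Recall}\ge\tau$ for $\tau\le1$, the second conditional probability is at least $\beta$. A convex combination of two numbers is at least their minimum, which gives the last inequality. The degenerate cases $\rho\in\{0,1\}$ hold trivially, because the undefined conditional term carries zero weight.

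\textbf{Part 2.} I would write $\ell(Q)=\ell(Q)\mathbf{1}[C=1]+\ell(Q)\mathbf{1}[C=0]$, where on the escalate branch $\ell$ is computed on the MBV output, and take the expectation over $\mathcal{D}_{\mathrm{cal}},Q,\omega$ jointly. The first term is at most $\alpha(1-\tau)$ by CRC (Lemma~\ref{lem:certify}(1)). For the second term, note that $\ell\le\tau$ always and $\ell=0$ whenever Recall $=1$. Hence
\[
\mathbb{E}[\ell\,\mathbf{1}[C=0]]\le\tau\,\Pr[C=0,\ \mathrm{Recall}<1]\le\tau(1-\rho)(1-\beta)\le(1-\beta)\tau,
\]
where the middle step uses Lemma~\ref{lem:rectify} conditionally on $\hat\theta$ and $Q$, and then averages over both. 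Adding the two terms gives the bound. For the equivalent form, use $\ell\ge\tau-\mathrm{Recall}$, so $\mathbb{E}[\mathrm{Recall}]\ge\tau-\mathbb{E}[\ell]$.

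\textbf{Main obstacle.} The delicate step is justifying that the per-query guarantee of Lemma~\ref{lem:rectify} survives conditioning on $C(Q)=0$ and averaging over the calibration draw. This requires the $\omega$-probability $\beta$ to hold uniformly for each fixed $q$, independent of how $\hat\theta$ selected $q$. I would argue this from the independence of $\omega$ from $(\mathcal{D}_{\mathrm{cal}},Q)$: for each fixed $(\hat\theta,q)$ with $C(q)=0$, the inner probability over $\omega$ is at least $\beta$, and Fubini then gives the bound. This step is exactly where I would double-check that the theorem's statement conditions correctly.
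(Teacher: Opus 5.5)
Your proposal is correct and follows the paper's proof essentially step for step. In part (i) you condition on the certify/escalate split and take the convex combination of the LTT bound $1-\epsilon$ and the rectify-branch bound $\beta$; in part (ii) you split $\ell(Q)$ by $C(Q)$, bound the certify term by the CRC guarantee $\alpha(1-\tau)$, and bound the escalate term by $\tau(1-\beta)$ using $\ell\le\tau$ and $\ell=0$ on exact recovery. Your extra remarks only make explicit what the paper leaves implicit: the degenerate cases $\rho\in\{0,1\}$, the Fubini justification via independence of $\omega$, and the intermediate factor $(1-\rho)$, which should be read as $\mathbb{E}[1-\rho]$ once you average over $\mathcal{D}_{\mathrm{cal}}$.
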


\begin{proof}
\emph{(i)} Condition on the certify/escalate split, which is a function of
$(Q,\hat\theta)$ and is independent of the SSE randomness $\omega$:
\[
\Pr[\mathrm{Recall}(Q)\ge\tau]
=\rho\,\Pr[\mathrm{Recall}\ge\tau\mid C{=}1]
+(1-\rho)\,\Pr[\mathrm{Recall}\ge\tau\mid C{=}0].
\]
On the event of probability $1-\alpha$ over calibration, the first conditional
is $\ge 1-\epsilon$ by Lemma~\ref{lem:certify}(2). The second is $\ge\beta$ by
Lemma~\ref{lem:rectify}, since exact recovery gives $\mathrm{Recall}=1\ge\tau$.
The mixture is a convex combination of $1-\epsilon$ and $\beta$, hence at least
their minimum.

\emph{(ii)} Decompose the expected loss by the certifier,
\[
\mathbb{E}[\ell(Q)]
=\underbrace{\mathbb{E}\!\big[\ell(Q)\mathbf{1}[C{=}1]\big]}_{\le\,\alpha(1-\tau)\ \text{(Lemma~\ref{lem:certify}(1))}}
+\ \mathbb{E}\!\big[\ell(Q)\mathbf{1}[C{=}0]\big].
\]
On the escalate branch $\ell(Q)=0$ whenever the stretch event
$\mathcal{E}_\omega$ holds (Lemma~\ref{lem:rectify}), and $\ell(Q)\le\tau$
otherwise; as $\Pr_\omega[\neg\mathcal{E}_\omega]\le 1-\beta$ independently of
the split, the second term is at most $\tau\,(1-\beta)$. Summing yields the
bound.
\end{proof}

\begin{corollary}[Probability of identifying the true $k$-NN]\label{cor:exact}
Exact identification of the true $k$ nearest neighbors corresponds to the
compliance target $\tau=1$. Under
Assumptions~\ref{ass:reach}--\ref{ass:oracle}, with probability at least
$1-\alpha$ over $\mathcal{D}_{\mathrm{cal}}$,
\[
\Pr_{Q,\omega}\!\big[\,\widehat{\mathcal{N}}(Q)=\mathcal{N}^{*}_k(Q)\,\big]
\ \ge\ \rho\,(1-\epsilon)+(1-\rho)\,\beta
\ \ge\ \min(1-\epsilon,\ \beta).
\]
In particular, as $\epsilon\to 0$, $\alpha\to 0$, and $\beta\to 1$, the bound
tends to $1$: the probability of recovering the true $k$-NN can be made
arbitrarily high.
\end{corollary}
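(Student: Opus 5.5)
The corollary is a specialization of Theorem~\ref{thm:end-to-end}(i) to the compliance target $\tau=1$. The first step is to observe that for $\tau=1$ the event $\{\mathrm{Recall}(Q)\ge\tau\}$ coincides with the event $\{\widehat{\mathcal{N}}(Q)=\mathcal{N}^{*}_k(Q)\}$. Indeed, $\mathrm{Recall}(Q)\ge 1$ forces $|\widehat{\mathcal{N}}(Q)\cap\mathcal{N}^{*}_k(Q)|=k$. Both sets have cardinality $k$: on the escalate branch MBV maintains a top-$k$ heap, and on the certify branch we take $|\mathcal{N}_k(Q)|=k$, since a shorter list gives recall strictly below $1$. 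Full intersection therefore means equality. On the certify branch the recall is computed for $\mathcal{N}_k(Q)$ itself, and on the escalate branch for $\textsc{MBV}(Q)$, so in both cases recall refers to the CTR output $\widehat{\mathcal{N}}(Q)$. I will make this explicit, since Section~\ref{sec:conformal_certification} defines $\mathrm{Recall}$ only for $\mathcal{N}_k(q)$.

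With this identification, I invoke Theorem~\ref{thm:end-to-end}(i) at $\tau=1$. I first check that its ingredients remain valid there. Lemma~\ref{lem:rectify} does not depend on $\tau$, since exact recovery gives recall $1$. Lemma~\ref{lem:certify}(2) is an LTT statement for an arbitrary $\tau\in(0,1]$, and $\tau=1$ is included. Assumptions~\ref{ass:reach}--\ref{ass:oracle} are assumed. Hence, on the calibration event of probability at least $1-\alpha$, the mixture decomposition over $C(Q)\in\{0,1\}$ gives the bound $\rho(1-\epsilon)+(1-\rho)\beta$. This is a convex combination, so it is at least $\min(1-\epsilon,\beta)$. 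I would note that CRC degenerates at $\tau=1$, because $\alpha(1-\tau)=0$; this is why the corollary uses only the LTT route (i) and not route (ii).

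For the limiting claim, the lower bound $\min(1-\epsilon,\beta)$ tends to $1$ as $\epsilon\to0$ and $\beta\to1$, and the confidence $1-\alpha$ attached to it tends to $1$ as $\alpha\to0$.

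The main obstacle is the cardinality and definitional point in the first step, namely ensuring that the HNSW output has exactly $k$ elements. I would handle it as described above: if $|\mathcal{N}_k(Q)|<k$, the query is either non-compliant at $\tau=1$, so LTT already accounts for it, or it is escalated. A secondary subtlety is that $\epsilon,\alpha\to0$ requires calibration and test sets of growing size for LTT to reject any null hypothesis. Otherwise $\rho$ may tend to $0$, but the bound then reduces to $\beta$, which still tends to $1$.
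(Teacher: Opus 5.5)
Your proposal is correct and matches the paper's proof, which sets $\tau=1$ in Theorem~\ref{thm:end-to-end}(i) and identifies the event $\{\mathrm{Recall}(Q)\ge 1\}$ with $\{\widehat{\mathcal{N}}(Q)=\mathcal{N}^{*}_k(Q)\}$. Your extra checks only spell out what the paper's one-line argument leaves implicit: both outputs have at most $k$ elements, so full intersection forces equality; Lemma~\ref{lem:certify}(2) covers $\tau=1$; and the bound falls back to $\beta$ if LTT certifies nothing.
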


\begin{proof}
Set $\tau=1$ in Theorem~\ref{thm:end-to-end}(i) and note that
$\mathrm{Recall}(Q)\ge 1$ is equivalent to
$\widehat{\mathcal{N}}(Q)=\mathcal{N}^{*}_k(Q)$.
\end{proof}

\begin{remark}[Why $\tau=1$ forces the LTT route]\label{rem:tau1}
At $\tau=1$ the CRC bound of Theorem~\ref{thm:end-to-end}(ii) degenerates to
$\mathbb{E}[\ell(Q)]\le 1-\beta$: the conformal term $\alpha(1-\tau)$ vanishes
and \emph{all} exactness flows from the EVT guarantee on the escalate branch.
For the operating points used in our experiments ($\tau=0.9$), certified
queries return plain HNSW, which is only $\tau$-good and carries \emph{no}
exactness guarantee; the honest unconditional bound on exact identification is
then $\Pr[\widehat{\mathcal{N}}(Q)=\mathcal{N}^{*}_k(Q)]\ge(1-\rho)\beta$ plus
HNSW's uncontrolled exact rate on the certified fraction. Thus
Corollary~\ref{cor:exact} is a statement about the $\tau=1$ regime;
for $\tau<1$ the meaningful end-to-end claim is the compliance guarantee of
Theorem~\ref{thm:end-to-end}, not exactness.
\end{remark}

\begin{remark}[Two probability spaces]\label{rem:spaces}
The confidence $\beta$ is an extreme-value \emph{return-level} guarantee over
the stretch sampling $\omega$, whereas $1-\epsilon$ and $1-\alpha$ are
distribution-free conformal guarantees over $P$ and $\mathcal{D}_{\mathrm{cal}}$.
The mixture in Theorem~\ref{thm:end-to-end}(i) is valid precisely because these
sources are independent. They are not interchangeable: $\beta$ inherits the
usual GEV extrapolation optimism of a fitted return level, while the conformal
quantities are finite-sample exact.
\end{remark}

\begin{remark}[Oracle-in-calibration coupling]\label{rem:oracle}
The labels $\mathcal{N}^{*}_k(q_i)$ in $\mathcal{D}_{\mathrm{cal}}$ are produced
by the same \textsc{MBV} oracle, which is itself exact only with
confidence $\beta$ per query (Lemma~\ref{lem:rectify}). Consequently the
conformal guarantees are, strictly, \emph{recall-relative-to-oracle}. To make
them recall-relative-to-truth one either assumes
Assumption~\ref{ass:oracle} directly, or pays a union bound over the
calibration set, replacing $1-\epsilon$ by $1-\epsilon-n(1-\beta)$ in
Theorem~\ref{thm:end-to-end}(i). This is the single point at which the two
phases are not cleanly independent. The theorem is validated empirically in Section \ref{sec:experiments}.
\end{remark}

\subsection{The Query-Time Decision Rule}

With $\hat{\theta}$ calibrated, the online query processing acts as a simple, two-step decision process:
\begin{footnotesize}
\begin{algorithmic}[1]
\REQUIRE Query $q$
\STATE Run HNSW to obtain $\mathcal{N}_{k}(q)$ and compute the score $s(q)$.
\IF{$s(q) \ge \hat{\theta}$}
    \STATE \textbf{CERTIFY:} Return $\mathcal{N}_{k}(q)$.
\ELSE
    \STATE \textbf{ESCALATE:} Run the exact oracle (SBE-Q and MBV) and return $\mathcal{N}_{k}^{*}(q)$.
\ENDIF
\end{algorithmic}
\end{footnotesize}

\begin{footnotesize}
\begin{algorithm}[!htbp]
\caption{Stretch-Bounded Expansion with MBV}
\label{alg:sbe_mbv_corrected}
\begin{algorithmic}[1]
\REQUIRE Query $q$, Graph $G'$, Stretch $t'$, Initial Candidates $\mathcal{H}_{found}$
\ENSURE Exact $k$-nearest neighbor set $\mathcal{N}_k^*(q)$
\STATE \textbf{Initialize:}
\STATE $R_{topK} \leftarrow$ Max-Priority Queue of size $k$ (seeded with $\mathcal{H}_{found}[:min(M_0,k)]$) 
\STATE $Q_{visit} \leftarrow$ Min-Priority Queue ordered by graph distance $d_{G'}$
\STATE $State \leftarrow$ Map (node\_id: \{status, val\})
\STATE \textbf{Set initial bound based on HNSW results:}
\STATE $\hat{d}_k \leftarrow R_{topK}.max\_dist()$
\STATE $R_{prune} \leftarrow t' \cdot \hat{d}_k$
\STATE $Q_{visit}.push(0, q, \text{null})$ \COMMENT{(distance, node, parent)}
\STATE $State[q] \leftarrow \{\text{status: EVALUATED, val: 0}\}$
\WHILE{$Q_{visit}$ is not empty}
    \STATE $(d_{graph}, u, p) \leftarrow Q_{visit}.pop()$
    \IF{$d_{graph} > R_{prune}$}
        \STATE \textbf{break} \COMMENT{Dynamic Termination}
    \ENDIF
    \STATE \textbf{Calculate Lower Bound (MBV):}
    \STATE $LB_u \leftarrow -\infty$
    \IF{$p \neq \text{null}$}
        \STATE $w \leftarrow \text{edge\_weight}(p, u)$
        \STATE $LB_u \leftarrow State[p].val - w$ \COMMENT{Propagate distance or bound}
    \ENDIF
    \STATE \textbf{Pruning Check \& Metric Computation:}
    \IF{$LB_u > \hat{d}_k$}
        \STATE $State[u] \leftarrow \{\text{status: PRUNED, val: } LB_u\}$ \COMMENT{Skip expensive distance call}
    \ELSE
        \STATE $d_{true} \leftarrow dist(q, u)$ \COMMENT{Expensive computation}
        \STATE $State[u] \leftarrow \{\text{status: EVALUATED, val: } d_{true}\}$
        \STATE \textbf{Update Result Set:}
        \IF{$d_{true} < \hat{d}_k$}
            \STATE $R_{topK}.push(d_{true}, u)$
            \STATE $\hat{d}_k \leftarrow R_{topK}.max\_dist()$
            \STATE $R_{prune} \leftarrow t' \cdot \hat{d}_k$ \COMMENT{Dynamically tighten search radius}
        \ENDIF
        \STATE \textbf{Elliptical Pruning (Look-ahead):}
        \IF{$d_{graph} + d_{true} > R_{prune} + \hat{d}_k$}
            \STATE \textbf{continue} \COMMENT{Entire branch pruned}
        \ENDIF
    \ENDIF
    \STATE \textbf{Expand Neighbors:}
    \FORALL{neighbor $v$ of $u$ in $G'$}
        \IF{$v \notin State$}
            \STATE $new\_d \leftarrow d_{graph} + \text{edge\_weight}(u, v)$
            \STATE $Q_{visit}.push(new\_d, v, u)$
        \ENDIF
    \ENDFOR
\ENDWHILE
\RETURN $R_{topK}$
\end{algorithmic}
\end{algorithm}
\end{footnotesize}

\section{Experiments}\label{sec:experiments}

\subsection{Datasets and Implementation Setup}

\begin{footnotesize}
\begin{table}[h]
\centering
\begin{tabular}{c|c|c|c|c}
Dataset & Description & $d$ & metric & \#queries \\
\hline
SIFT1M & Image feature transform vectors & 128 & l2 & 10k \\
GIST1M & GIST image descriptors & 960 & l2 & 1k \\
DEEP1M & Subset of deep image features & 96 & l2 & 10k \\
T2I10M & Subset of multimodal text-to-image search & 200 & cosine & 5k \\
T2I100M & Subset of multimodal text-to-image search & 200 & cosine & 5k \\
\end{tabular}
\caption{Statistics Of Datasets Used}
\label{tab:dataset_params}
\end{table}
\end{footnotesize}


We augment \texttt{nmslib/hnswlib}\footnote{\url{https://github.com/nmslib/hnswlib}} in C++ to implement the proposed
CTR pipeline, including the rectification algorithm, namely MBV, on top of HNSW. We evaluate the system on three standard benchmark datasets:
SIFT1M, GIST1M\footnote{Both SIFT1M and GIST1M from: \url{http://corpus-texmex.irisa.fr/}},
and DEEP1M, and two datasets for scalability test in Section~\ref{subsec:scalability}: Text-to-Image (T2I) 10M and T2I 100M\footnote{Both DEEP and T2I from: \url{https://research.yandex.com}} (vectors normalized). Dataset statistics are detailed in Table~\ref{tab:dataset_params}. All tests run with a single thread on an Intel Xeon Ice Lake 3.5 GHz processor with 247 GB memory and are memory-resident. For compilation, we use $-O3$ and -march=native. 

We vary $M$, $ef_c$ (beam width during construction), $ef_s$, and $k$ and examine both the cost of exact rectification and the behavior of the certification stage under different HNSW operating points. We therefore use a fixed 90\% of queries provided by each dataset (the last column of Table~\ref{tab:dataset_params}) as the calibration set for conformal predictor training. 

\subsection{Validating Stretch Estimation via Level-2 Ground Truth}
\label{subsec:stretch-validation}
We validate our stochastic stretch estimation and characterize its sensitivity to key EVT hyperparameters using the level-2 HNSW subgraph, where the true maximum stretch $t_{emp}$ is computable by brute force over a tractable number of nodes\footnote{Vanilla HNSW has no guarantee in connectivity (see also~\cite{acorn}). Our stretch calculation is therefore conducted only for reachable nodes, despite unreachable nodes being rare (3 $in\ degree{=}0$ nodes in our SIFT1M $M{=}16,efc{=}200$ graph).}. We ablate the two parameters that must be committed to before sampling---confidence level $\beta$ and block count $m$---across $(\beta, m) \in \{0.95, 0.995\} \times \{200, 400\}$ and vary sample size to generate four stretch value curves, using the brute-force $t_{emp}$ as ground truth. Test is conducted on SIFT, GIST, and DEEP with $M{=}32, ef_c{=}100$.

In Figure~\ref{fig:evt-level2-ablation}, we show that higher $\beta$ consistently raises $t^*$ since it selects a higher quantile of the fitted GEV and is more conservative. A larger block count ($m{=}400$) yields higher, more conservative estimates than $m{=}200$: more blocks supply more fitting points to the GEV, enabling better characterization of the upper tail of stretch values, whereas fewer blocks underfit the extremes. This comes at a convergence cost---$m{=}400$ saturates more slowly because smaller per-block samples produce more variable block maxima, so the fitted return level continues drifting as sample size grows rather than leveling off. Despite these differences, all $(\beta, m)$ configurations yield a stable fitted $t^*$---one whose value no longer changes meaningfully with additional samples---by approximately 150k sampled pairs across all three datasets. High-confidence settings ($\beta{=}0.995$) with large $m$ consistently yield estimates that bound the true empirical stretch with a safe margin, whereas lower-confidence configurations reduce estimation accuracy. Therefore, for base-layer operation in later sections we use the conservative $m{=}400$, $\beta{=}0.995$ setting to mitigate stretch underestimation probability.

\begin{figure}[hbtp]
\centering
\includegraphics[width=1.0\linewidth]{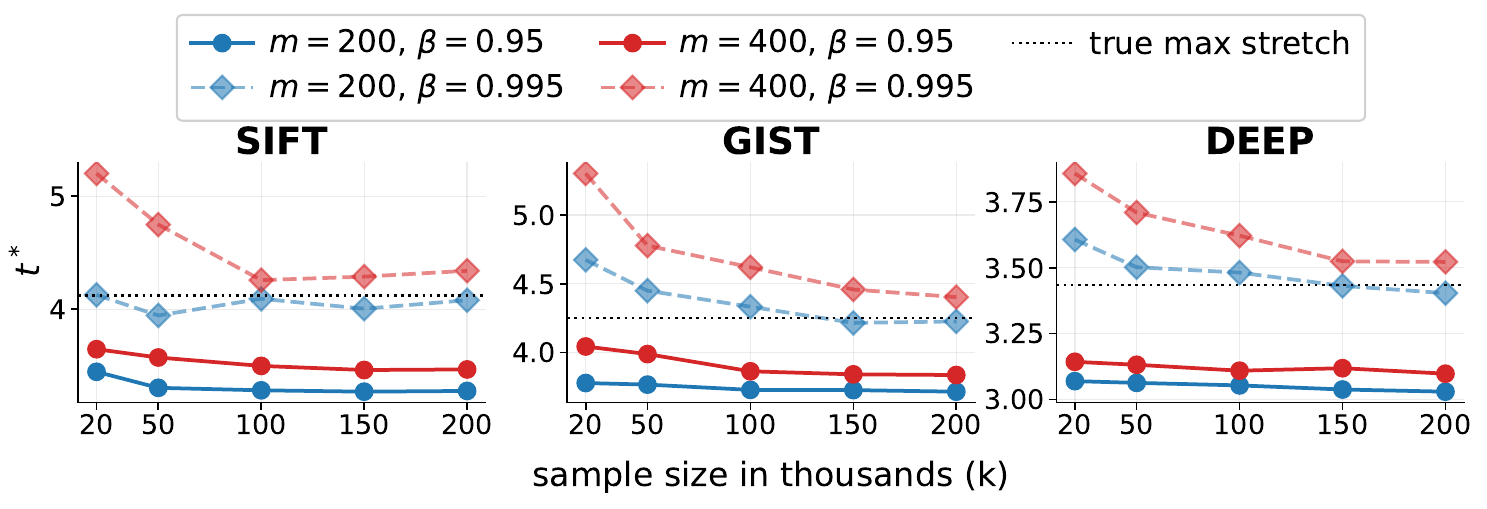}
\caption{Stretch estimation ablating $\beta$ and block count $m$.}
\label{fig:evt-level2-ablation}
\end{figure}

\subsection{Stretch and HNSW Construction Parameters}
\label{sec:stretch_of_construction}

\begin{figure}[t]
    \centering
    \includegraphics[height=4cm,width=\linewidth]{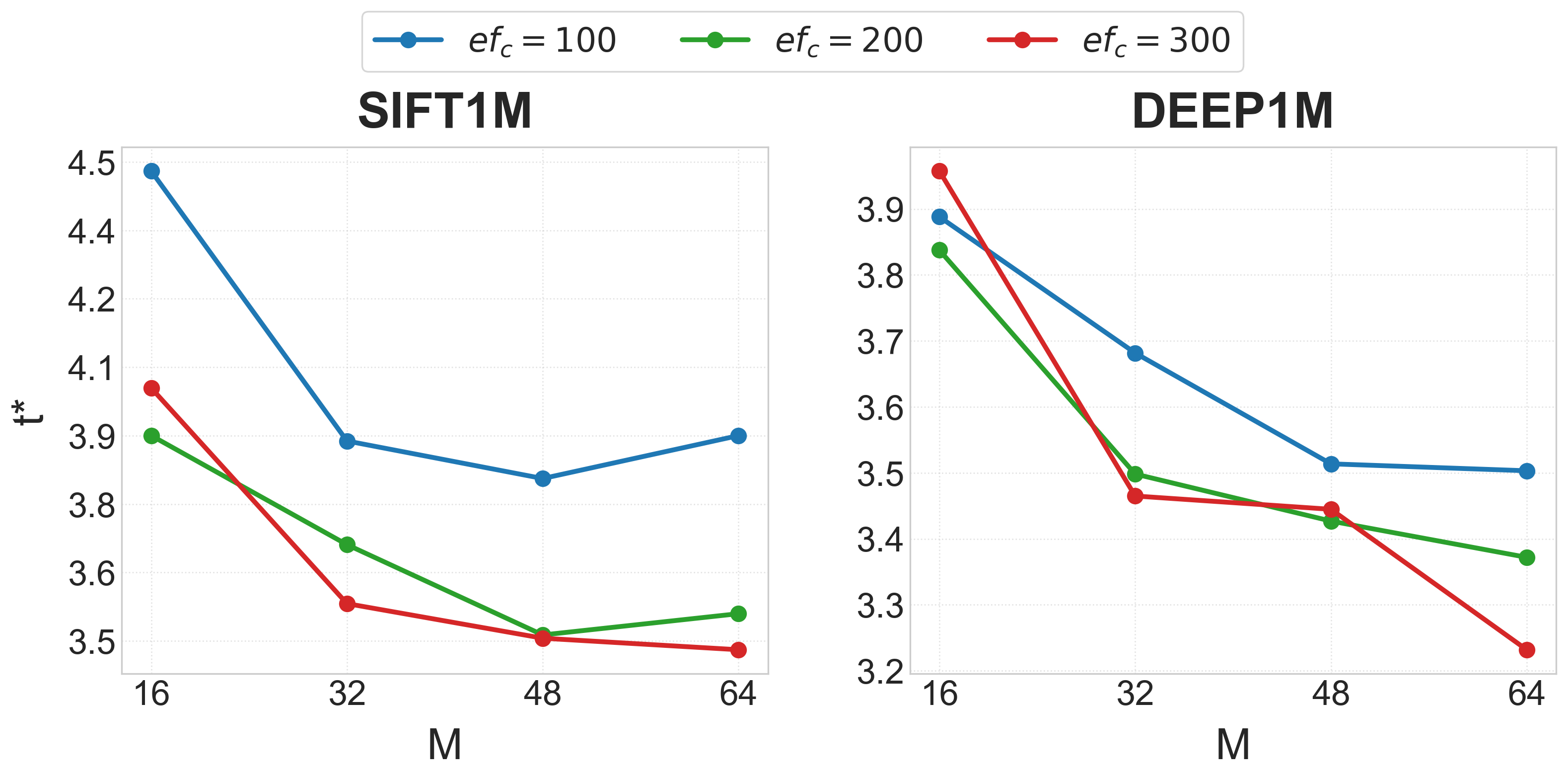}
    \caption{Effect of varying M and $ef_c$ on stretch}
\label{fig:stretch_constr_param}
\end{figure}

To further investigate how HNSW construction parameters influence the empirical maximum stretch of the resulting graph, we adopt the same Stochastic Stretch Estimation (SSE) procedure described in Section~\ref{subsec:stretch-validation} and repeat it for graphs derived from various HNSW construction parameters. We report the estimated stretch across SIFT1M and DEEP1M, and use a parameter grid spanning $M \in \{16,32,48,64\}$, $ef_c \in \{100,200,300\}$, with $\beta = 0.995$ fixed throughout. The results are presented in Figure~\ref{fig:stretch_constr_param}.

The graph shows that for fixed $ef_c$, increasing $M$ consistently reduces the stretch, reflecting improved graph connectivity. Similarly, for fixed $M$, larger $ef_c$ yields lower stretch, indicating higher-quality nearest-neighbor connections during construction. These findings align with the role of HNSW construction in shaping the graph’s effective spanner properties and its associated stretch behavior.

\subsection{System Ablation}
\label{subsec:system_ablation}

\begin{table*}[t]
\centering
\footnotesize
\setlength{\tabcolsep}{5pt}
\renewcommand{\arraystretch}{1.10}
\begin{tabular}{l|c|ccc|ccc|ccc|ccc}
\textbf{Dataset} & \textbf{Metric}
& \multicolumn{3}{c|}{$M=16,\,ef_c=100$}
& \multicolumn{3}{c|}{$M=16,\,ef_c=200$}
& \multicolumn{3}{c|}{$M=32,\,ef_c=100$}
& \multicolumn{3}{c}{$M=32,\,ef_c=200$} \\
\cline{3-5}\cline{6-8}\cline{9-11}\cline{12-14}
& $k$ & 10 & 50 & 100 & 10 & 50 & 100 & 10 & 50 & 100 & 10 & 50 & 100 \\
\hline
\multirow{6}{*}{SIFT1M}
& $t$                   & 4.059 & 4.059 & 4.059 & 3.853 & 3.853 & 3.853 & 3.657 & 3.657 & 3.657 & 3.601 & 3.601 & 3.601 \\
& SBE-NN      & 48.95 & 50.54 & 51.41 & 48.16 & 49.66 & 50.47 & 47.42 & 48.59 & 49.21 & 48.47 & 50.00 & 50.83 \\
& SBE-NN+LBP\&EP  & 40.07 & 42.31 & 43.20 & 39.95 & 43.14 & 44.61 & 39.30 & 42.24 & 43.54 & 40.67 & 43.84 & 45.32 \\
& SBE-Q       &  5.60 & 14.07 & 18.66 &  4.04 & 11.38 & 15.82 &  3.40 & 10.13 & 14.22 &  3.62 & 10.37 & 14.67 \\
& SBE-Q+LBP   &  4.01 & 11.19 & 15.54 &  2.88 &  8.95 & 12.99 &  2.45 &  7.85 & 11.60 &  2.57 &  8.23 & 12.11 \\
& MBV   &  3.85 &  6.37 &  7.91 &  2.80 &  4.84 &  6.08 &  2.38 &  4.33 &  5.51 &  2.51 &  4.59 &  5.89 \\
\hline
\multirow{6}{*}{DEEP1M}
& $t$                   & 4.251 & 4.251 & 4.251 & 4.148 & 4.148 & 4.148 & 4.067 & 4.067 & 4.067 & 3.867 & 3.867 & 3.867 \\
& SBE-NN      & 81.03 & 84.75 & 86.52 & 71.36 & 76.12 & 78.40 & 83.43 & 86.86 & 88.51 & 83.08 & 86.53 & 88.19 \\
& SBE-NN+LBP\&EP  & 70.88 & 76.72 & 79.33 & 55.45 & 61.76 & 64.69 & 73.95 & 79.45 & 81.86 & 72.84 & 78.49 & 80.99 \\
& SBE-Q       &  3.82 & 12.60 & 20.89 &  1.39 &  5.96 & 15.82 &  4.71 & 14.32 & 20.03 &  3.65 & 12.64 & 19.07 \\
& SBE-Q+LBP   &  3.48 & 10.04 & 17.23 &  1.22 &  4.35 & 13.29 &  4.12 & 11.49 & 16.09 &  3.08 &  9.51 & 14.72 \\
& MBV   &  3.29 &  6.46 & 10.11 &  1.20 &  2.73 &  9.48 &  4.07 &  7.77 &  8.52 &  3.05 &  6.70 &  8.40 \\
\hline
\multirow{6}{*}{GIST1M}
& $t$                   & 4.470 & 4.470 & 4.470 & 4.119 & 4.119 & 4.119 & 3.930 & 3.930 & 3.930 & 3.646 & 3.646 & 3.646 \\
& SBE-NN      & 84.91 & 86.08 & 86.63 & 83.83 & 85.15 & 85.77 & 85.10 & 86.40 & 87.01 & 83.95 & 85.26 & 85.87 \\
& SBE-NN+LBP\&EP  & 80.27 & 81.64 & 82.27 & 77.69 & 79.22 & 79.93 & 78.98 & 80.55 & 81.26 & 76.31 & 78.05 & 78.85 \\
& SBE-Q       & 15.67 & 31.42 & 38.75 & 10.04 & 23.54 & 30.90 &  9.42 & 22.89 & 30.39 &  6.66 & 17.26 & 24.07 \\
& SBE-Q+LBP   & 15.49 & 28.52 & 35.71 &  9.75 & 20.95 & 27.85 &  8.96 & 19.88 & 27.08 &  5.80 & 14.25 & 20.42 \\
& MBV   & 15.01 & 23.62 & 26.04 &  9.70 & 16.55 & 18.58 &  8.94 & 15.58 & 17.52 &  5.64 & 11.47 & 13.15 \\
\end{tabular}
\caption{NDC as percentage of 1M base nodes across SIFT1M, DEEP1M, and GIST1M.}
\label{tab:sbeq-sbenn}
\end{table*}

We compare the number of distance computations (NDC) required to return the true $k$ nearest neighbors across system variants.
For ease of display, NDC is reported as a percentage of the total number of base nodes in the 1M-size datasets we use (NDC \% of 1M), eg. 1\% means we did 10k distance computations.
Experiments fix $ef_s{=}100$ and vary $k\in\{10,50,100\}$. We evaluate four index configurations $(M,\,ef_c)\in\{16,32\}\times\{100,200\}$ on three datasets (SIFT1M, DEEP1M, GIST1M).
This experiment is independent of the final certification stage and no target recall is enforced. 

\noindent\textbf{Experimental setup.}
In these experiments, we fix the underlying HNSW parameters to
$M{=}32$, $ef_c{=}200$, and $ef_s{=}100$. Results are averaged over 100 random queries. In
Figure~\ref{fig:alpha_block}, panels (a)--(c) show CRC and panels
(d)--(f) show LTT on the three datasets. We fix the target recall
threshold to $\tau{=}0.90$, and for LTT we set $\epsilon{=}0.5$. Each
blue point is an operating point obtained by varying $\alpha$, and the
numeric annotation next to the point is the corresponding $\alpha$
value; runtime is reported on the x-axis as a multiple of plain HNSW
under the same configuration. In Figure~\ref{fig:tau_block}, we keep
the same HNSW configuration and vary $\tau$, with $\alpha{=}0.05$ and
$\epsilon{=}0.5$ specifically for LTT; panels (a)--(c) show plain HNSW, panels (d)--(f)
show CRC, and panels (g)--(i) show LTT.

\noindent\textbf{Compared variants.}
We organise the variants into two families and compare five of them. The baselines,\emph{SBE-NN} and \emph{SBE-Q}, are expansion strategies that identify candidate nodes within the stretch-bounded radius without any pruning of final distance computations.Applying the methods lower-bound pruning (LBP) and elliptical pruning (EP) on top of \emph{SBE-NN} yields \emph{SBE-NN+LBP\&EP},and on \emph{SBE-Q} gives the proposed \emph{MBV}. To isolate the individual contribution of LBP and EP,we additionally include \emph{SBE-Q+LBP}, which applies only the LBP component to SBE-Q;the gap between SBE-Q+LBP and MBV in Table~\ref{tab:sbeq-sbenn} quantifies the saving attributable to EP.

\noindent\textbf{Results.}
Table~\ref{tab:sbeq-sbenn} reveals two consistent trends.
First, SBE-Q is a substantially stronger baseline than SBE-NN across all configurations, owing to its smaller expansion radius: eliminating the additive $\hat{d}_1$ term present in SBE-NN dramatically shrinks the graph region explored during rectification.
Building on SBE-Q, applying LBP (SBE-Q+LBP) further reduces NDC by $14.3$--$26.4\%$ across configurations relative to SBE-Q alone.
Adding EP on top (MBV) achieves an additional $25.8$--$58.3\%$ reduction across configurations by adaptively tightening the truncation radius as the running maximum of discovered $k$-NN distances decreases during Dijkstra traversal. Overall, the effect of augmenting SBE-Q with MBV is more pronounced in $k=50$ or $k=100$, as the larger $d_k$ leads to a larger expansion radius, resulting in more nodes to explore and more pruning opportunities. 
MBV is consequently our best-performing method: on SIFT1M it reduces NDC to as low as $2.51\%$ of base nodes at the finest index configuration for small $k$, and to around $5.6\%$ on the higher-dimensionality and harder GIST1M dataset. Second, a cross-configuration trend mirrors the stretch estimation results: finer graphs (larger $M$ or $ef_c$) yield smaller $t^{*}$, which in turn shrinks the expansion radius $t^{*}\times\hat{d}_k$ and lowers NDC for all five variants. Both trends above are consistent across different $k$.

\subsection{Recall Target Compliance of Conformal Parameters}
\label{sec:conformal_param}


\begin{figure*}[t]
\centering
\includegraphics[width=0.9\textwidth, trim=0cm 1cm 0cm 0cm, clip]{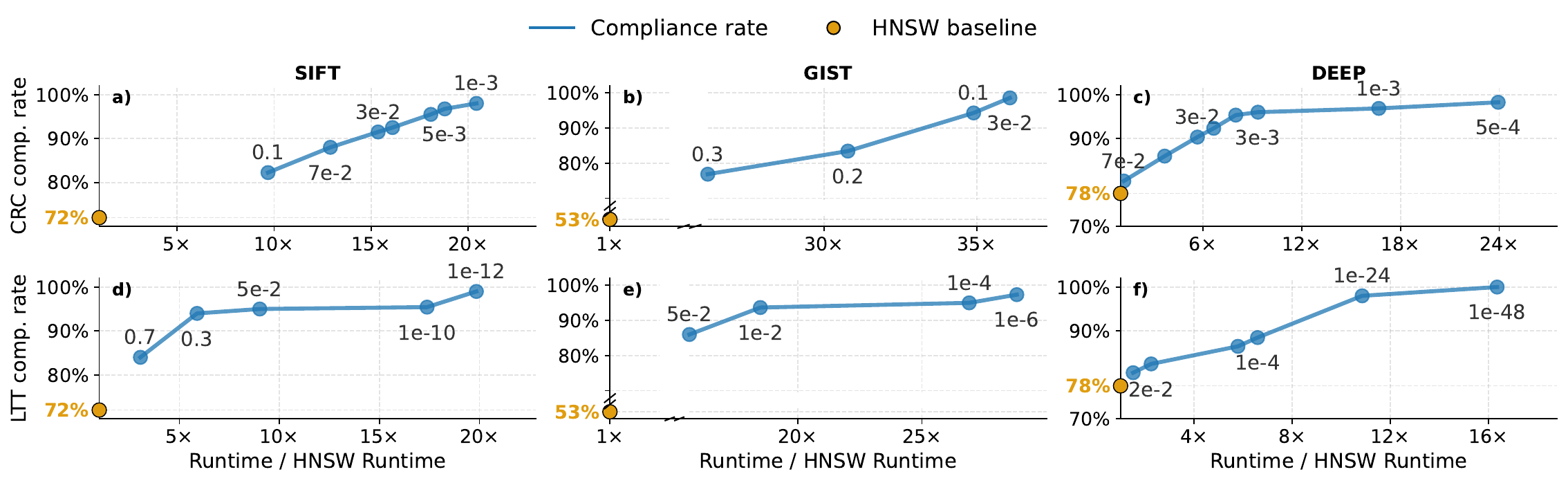}
\caption{$\alpha$-sweeps on three datasets. Blue line = compliance rate as $\alpha$ is varied; x-axis is in multiples of HNSW runtime.}
\label{fig:alpha_block}
\end{figure*}

\begin{figure*}[t]
\centering
\includegraphics[width=0.95\textwidth, trim=0cm 1cm 0cm 0cm, clip]{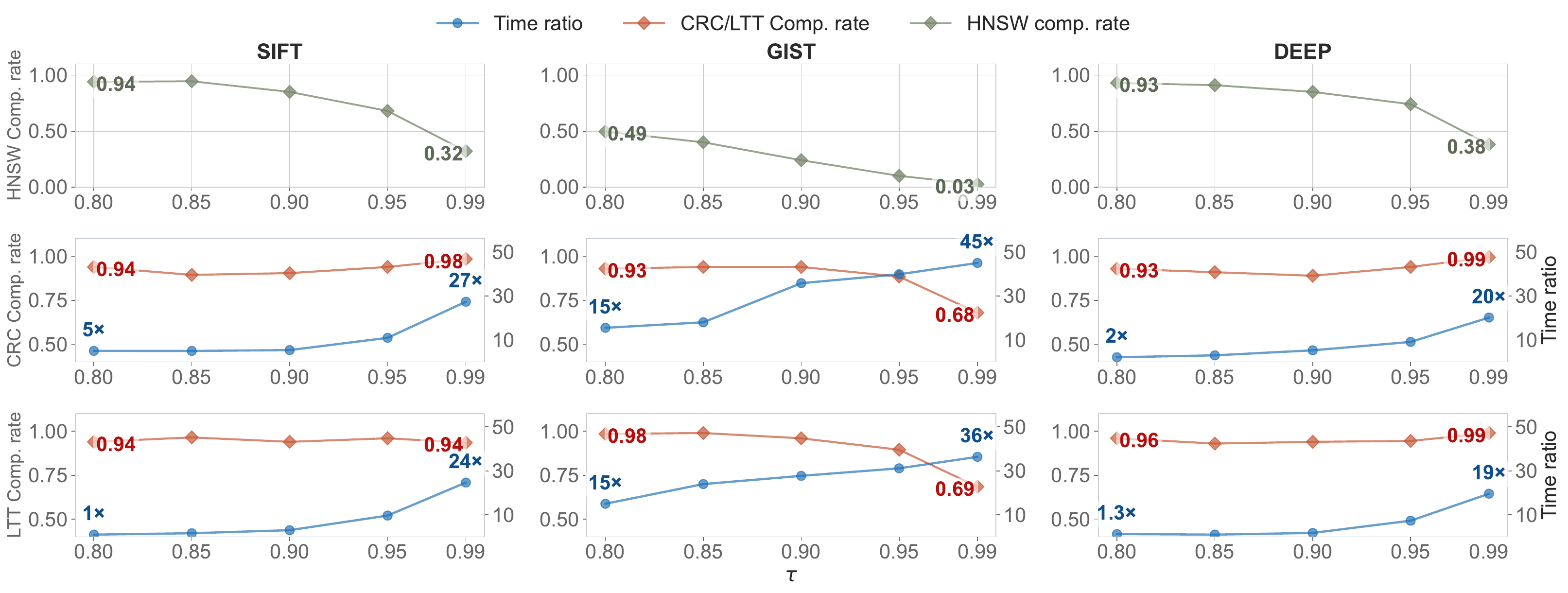}
\caption{$\tau$-sweeps on three datasets. Primary scale = compliance rate; secondary scale = time in multiples of HNSW runtime.}
\label{fig:tau_block}
\end{figure*}




\begin{figure}[t]
\centering
\includegraphics[width=0.5\textwidth, trim=0pt 20pt 0pt 0pt, clip]{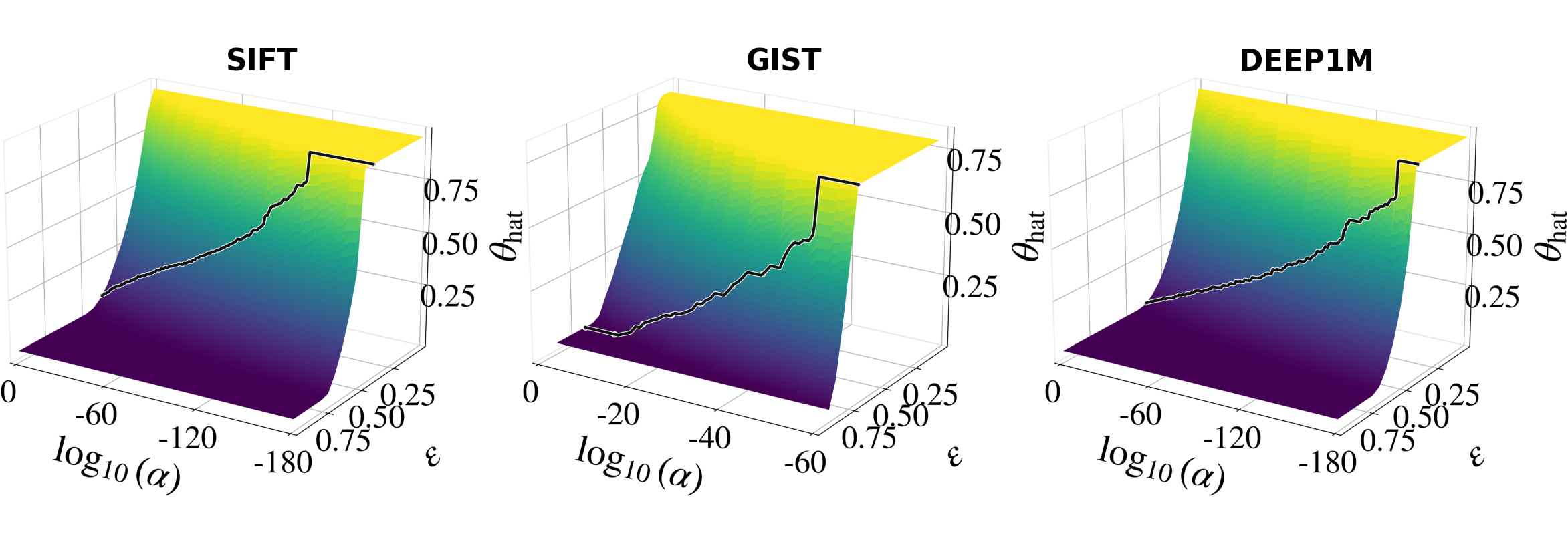}
\caption{LTT threshold surface. Higher $\hat{\theta}$ increases the fraction of queries sent to exact rectification. Black contours show fixed $\epsilon$ slices, highlighting how $\alpha$ controls the latency–compliance trade-off.}
\label{fig:ltt_3D_block}
\end{figure}

\begin{figure*}[t]
\centering
\begin{overpic}[width=0.95\textwidth]{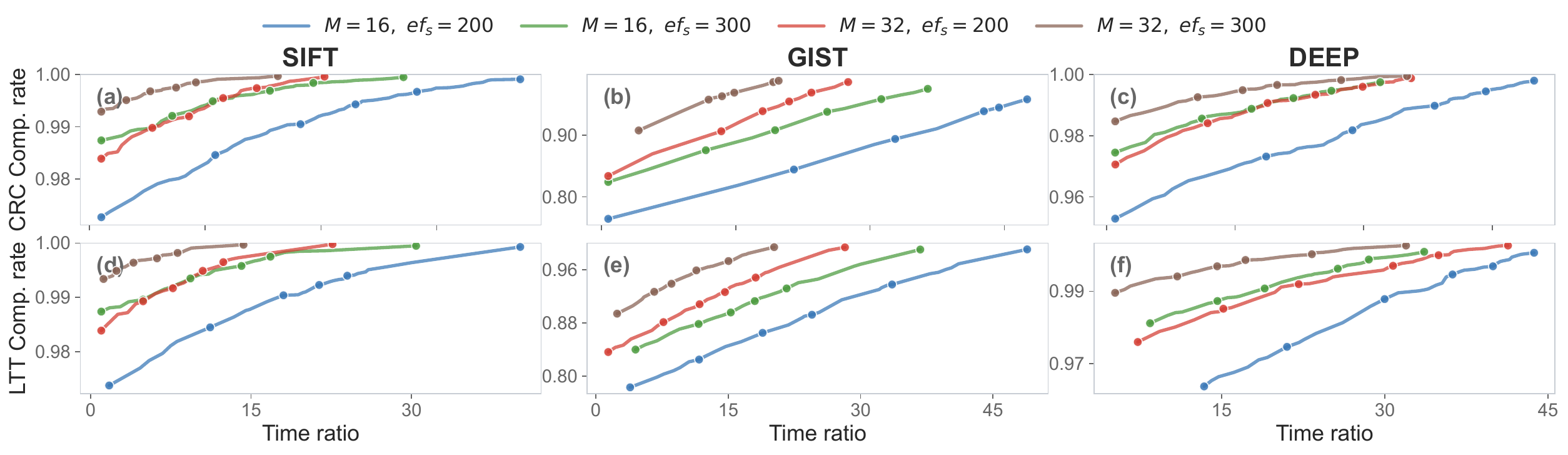}
    \put(4,89){\small\bfseries (a)}
    \put(36.8,89){\small\bfseries (b)}
    \put(69.6,89){\small\bfseries (c)}
\end{overpic}
\caption{HNSW-parameter ablation on three datasets. X-axis reports runtime as a multiple of HNSW runtime.}
\label{fig:hparam_block}
\end{figure*}

We next evaluate how the conformal parameters affect the final recall--runtime trade-off of CTR via the CRC/LTT. Rather than reporting average recall, we utilize per-query result-set compliance from Section~6.1 and report the \emph{compliance rate} over a set of queries, which is the fraction of queries for which $Recall(q){\ge} \tau$. Compliance rate is based on the result set after the rectification step, if certification fails.

In these experiments, we fix the underlying HNSW parameters to
$M{=}32$, $ef_c{=}200$, and $ef_s{=}100$. In
Figure~\ref{fig:alpha_block}, panels (a)--(c) show CRC and panels
(d)--(f) show LTT on the three datasets. We fix the target recall
threshold to $\tau{=}0.90$, and for LTT we set $\epsilon{=}0.5$. Each
blue point is an operating point obtained by varying $\alpha$, and the
numeric annotation next to the point is the corresponding $\alpha$
value; runtime is reported on the x-axis as a multiple of plain HNSW
under the same configuration. In Figure~\ref{fig:tau_block}, we keep
the same HNSW configuration and vary $\tau$, with $\alpha{=}0.05$ and
$\epsilon{=}0.5$; panels (a)--(c) show plain HNSW, panels (d)--(f)
show CRC, and panels (g)--(i) show LTT.

\noindent\textbf{Results.} 
In Figure~\ref{fig:alpha_block}, for both CRC and LTT, decreasing $\alpha$ makes certification stricter,
so fewer HNSW outputs are certified directly and more queries are
rectified. This raises the compliance rate but also increases runtime;
larger $\alpha$ has the opposite effect. Across the $\tau$-sweeps in Figure~\ref{fig:tau_block}, the
compliance rate of plain HNSW ($ef_s{=}100$) drops quickly as $\tau$ increases, whereas both CTR variants remain much flatter because rectification preserves compliance on hard queries at the cost of higher runtime ratio. Specifically, at high $\tau$ values (eg. $\tau=0.9, 0.99$), HNSW fails the target recall more often, and the expensive MBV is invoked by the certifier more frequently, resulting in increased CTR runtime. In Section~\ref{repeated_certification}, we explore repeating certification attempts with an expanded search radius to maintain compliance and mitigate runtime increases.

Comparing the two certifiers, LTT achieves higher compliance at comparable runtime ratios. This improvement comes from LTT’s stricter per-query guarantee, which enables a more selective allocation of the heavier MBV computation to queries that are more likely to have low recall and therefore contribute the most to compliance gains. Take the $\alpha$-sweep for GIST in Figure~\ref{fig:alpha_block} for example, under the $\alpha=5 \times 10^{-2}$ setting, LTT has four fewer false negatives and one fewer false positive than CRC in the runtime-comparable $\alpha=0.3$ setting. We further verified that the recall shortfall of both CRC and LTT is below the respective expected and probabilistic bounds stated in Theorem~\ref{thm:end-to-end}, across all values of $\tau$, $\alpha$, $\epsilon$, and $\beta$ (for stretch estimation as experiment setup) used in these experiments, validating our theorem empirically.

Notably, the small $\alpha$ values that appear in LTT are
largely an inherent consequence of the test itself: the null benchmark is the
boundary case with $x_\theta$ failures among $m_\theta$ certified queries under
a Binomial$(m_\theta,\epsilon)$ model, and the corresponding one-sided tail
probability can become extremely small as soon as the observed failure count
falls below the nominal level $\epsilon m_\theta$. Thus, the practically useful
range of $\alpha$ in LTT is naturally shifted toward smaller values; this is a
property of the testing mechanism rather than an unusually aggressive confidence
requirement.

Figure~\ref{fig:ltt_3D_block} further illustrates how the LTT threshold $\theta$ varies jointly with $\alpha$ and $\epsilon$. The qualitative trend of $\theta$ varying with $\alpha$ and $\epsilon$ is consistent. In particular, along a fixed-$\epsilon$ slice of the surface, decreasing $\alpha$ increases the selected threshold $\theta$. Since LTT certifies a query only when its score exceeds $\theta$, a larger threshold makes certification more selective: fewer queries are accepted directly, more queries are sent to rectification, and the final compliance rate increases at the cost of higher latency. Conversely, for the same fixed $\epsilon$, increasing $\alpha$ lowers $\theta$, which certifies more queries directly and therefore reduces runtime, but also lowers compliance. To visually illustrate this trade-off, the black contour lines in Figure~\ref{fig:ltt_3D_block} depict constant-$\epsilon$ cross-sections of the threshold surface: $\epsilon{=}0.4$ for SIFT1M and DEEP1M, and $\epsilon{=}0.7$ for GIST1M. Each such cross-section isolates the one-dimensional family of operating points obtained by varying $\alpha$ while holding $\epsilon$ fixed, thereby making clear how the resulting change in $\theta$ governs the latency--compliance trade-off described above. The figure also suggests a practical tuning strategy: first select $\epsilon$ so as to avoid the broad plateau region of $\theta$ (the yellow region, where variation in $\epsilon$ has relatively little effect on the threshold), and then vary $\alpha$ to choose the desired operating point along the corresponding constant-$\epsilon$ trade-off curve.



We next study how the HNSW operating point affects the latency--compliance trade-off. In Figure~\ref{fig:hparam_block}, blue/green use $M{=}16$ and red/brown use $M{=}32$; blue/red use $ef_s{=}200$, while green/brown use $ef_s{=}300$. Panels (a)--(c) show CRC and panels (d)--(f) show LTT for SIFT, GIST, and DEEP. Across the rerun, increasing $ef_s$ generally shifts curves upward and left, and increasing $M$ produces a similar improvement in these operating ranges. The reason is that both changes improve the quality of the initial HNSW search: larger $ef_s$ explores more candidates at query time, while larger $M$ gives the graph more connectivity. As a result, more queries already meet the compliance target before rectification, so the expensive recovery step is triggered less often. This lowers the runtime ratio needed to reach a given compliance rate. Overall, Figure~\ref{fig:hparam_block} suggests that $ef_s$ is the cleaner query-time knob, while $M$ provides a stronger index-level operating point when its memory and build-time cost is acceptable.

The CTR framework naturally extends to filtered search on top of algorithms such as ACORN \cite{acorn}. We present our results in Section~\ref{subsec:filtered_search}.

\subsection{Repeated Certification}
\label{sec:repeated_certification}

\begin{figure*}[t]
\centering
\includegraphics[width=0.95\textwidth]{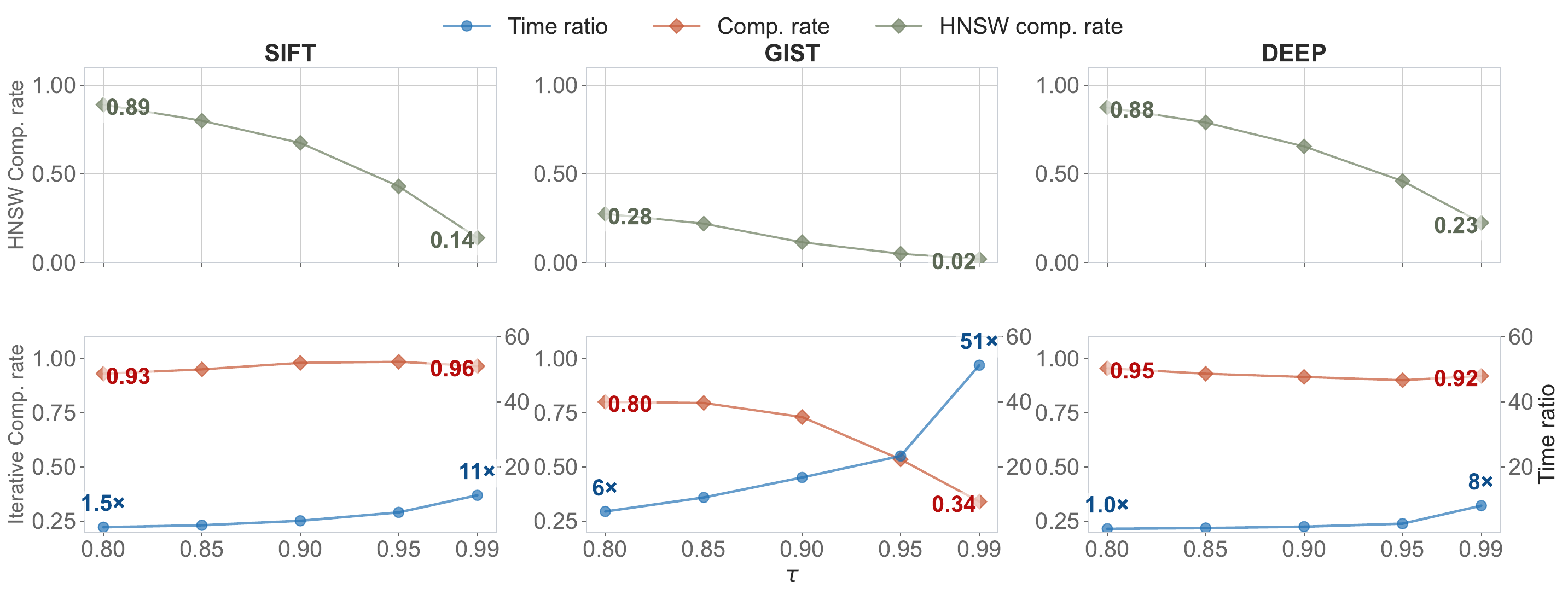}
\caption{Repeated-certification policy on three datasets. Top row: plain HNSW compliance rate as $\tau$ varies (identical to the top row of Figure~\ref{fig:tau_block}). Bottom row: compliance rate (left axis) and runtime, in multiples of the $ef_s{=}100$ HNSW baseline (right axis), of the repeated-certification policy.}
\label{fig:repeated_certification}
\end{figure*}

\noindent\textbf{Motivation.}
Section~\ref{sec:conformal_param} shows that at high $\tau$, HNSW fails the target recall on a growing share of queries, so the certifier escalates to the expensive exact MBV rectification more often and CTR's runtime ratio increases accordingly. As an alternative to invoking MBV, we explore repeating the certification attempt with an expanded search radius: instead of exactly rectifying a query that fails certification, we simply re-run HNSW at a larger $ef_s$ and re-certify, hoping that the extra candidates already push the result into compliance without ever paying for exact recovery. This policy trades the correctness guarantee of exact rectification for the hope of a cheaper, purely HNSW-side fix, and we study whether this mitigates the runtime increase from Section~\ref{sec:conformal_param} while keeping recall competitive.

\noindent\textbf{Experimental setup.}
The repeated-certification policy proceeds as follows. Starting from $ef_s{=}100$, we run HNSW and score the result with a CRC certifier calibrated at that $ef_s$ and the target $\tau$; if the certifier accepts, we stop and return the current result. Otherwise we increase $ef_s$ by $100$, re-run HNSW at the new, larger radius, and re-certify with a separately calibrated certifier for that $ef_s$; this repeats until certification succeeds or $ef_s$ reaches $ef_{max}{=}1000$, at which point the last HNSW result is returned regardless of certifier outcome. No MBV or other exact-recovery step is ever invoked. Crucially, this requires calibrating one CRC certifier per $ef_s$ value---ten certifiers, one each for $ef_s \in \{100, 200, \dots, 1000\}$, for every (dataset, $\tau$) pair---rather than a single certifier as in Section~\ref{sec:conformal_param}. All certifiers are calibrated \emph{offline}, on the held-out calibration split (with size being 80\% of the \#queries column in Table~\ref{tab:dataset_params}) before any online evaluation begins; at query time the appropriate pre-fit certifier is simply looked up for the current $ef_s$ and applied, so this calibration cost is a one-time offline expense and is not counted in the online per-query runtime ratios reported below. We fix $k{=}100$ and $\alpha{=}0.01$, sweep $\tau \in \{0.80, 0.85, 0.90, 0.95, 0.99\}$, and evaluate on $200$ held-out queries per dataset on SIFT1M, GIST1M, and DEEP1M.

\noindent\textbf{Results.}
Figure~\ref{fig:repeated_certification} shows that, on SIFT1M and DEEP1M, repeated certification substantially mitigates the runtime increase seen with the rectifier in Section~\ref{sec:conformal_param} while keeping recall on a similar level: on SIFT1M the runtime ratio only grows from $1.5\times$ to $11\times$ across the $\tau$-sweep (versus $1\times$--$24\times$ for the LTT rectifier) while compliance stays at $0.93$--$0.96$ (versus $0.94$--$0.97$ for the rectifier); on DEEP1M the runtime ratio grows from $1.0\times$ to $8\times$ (versus $1.3\times$--$19\times$ for the rectifier) while compliance stays at $0.92$--$0.95$ (versus $0.96$--$0.99$ for the rectifier).

On GIST1M, however, the policy performs markedly worse, especially at high $\tau$: compliance falls from $0.80$ at $\tau{=}0.80$ to just $0.34$ at $\tau{=}0.99$, even as the runtime ratio still climbs to $51\times$. The reason is that GIST1M's hard queries are hard precisely because plain HNSW cannot reach them: the top row of Figure~\ref{fig:repeated_certification} shows the HNSW baseline compliance rate on GIST1M collapsing to $0.02$ at $\tau{=}0.99$, meaning almost no query is compliant at $ef_s{=}100$. Repeated certification responds to this by escalating $ef_s$ into the thousands on these queries, but pushing $ef_s$ arbitrarily high does not guarantee that HNSW's beam search actually surfaces the true nearest neighbors it is missing; many attempts are spent re-searching without ever satisfying the certifier, so both runtime and the miss rate grow together. The rectifier, by contrast, still reaches 70\% compliance on GIST1M at $\tau{=}0.99$ (versus $0.34$ for repeated certification) because MBV guarantees exact recovery of the true $k$-nearest neighbors whenever it is invoked (Lemma~\ref{lem:rectify}), rather than merely hoping a larger beam finds them. Overall, these results indicate that repeated certification is an attractive lower-cost alternative when the underlying HNSW index has enough residual capacity to reach the target recall through search-radius expansion alone (SIFT1M, DEEP1M), but exact rectification remains preferable whenever queries are difficult and the compliance target is stringent, since only rectification offers a correctness guarantee independent of how well HNSW itself performs.

\subsection{Scalability Experiments}
\label{subsec:scalability}

A natural concern is whether dataset size adversely affects the runtime overhead of CTR: larger datasets contain more candidate nodes within any search radius, which could inflate the cost of SBE-Q. We test this using the Yandex Text-to-Image dataset\footnote{\url{https://research.yandex.com/blog/benchmarks-for-billion-scale-similarity-search}} at two scales---T2I-10M and T2I-100M---with index configuration $M{=}32$, $ef_c{=}200$ on normalized vectors, with an Intel Xeon Ice Lake 3.5 GHz processor with 64GB memory. For each scale, stretch estimation is carried out following the same procedure as Section~\ref{subsec:stretch-validation} ($\beta{=}0.995$, $m{=}400$). As we use a subset of T2I-1B, query ground truths were re-computed by brute-force.

\begin{table}[h]
\centering
\small
\setlength{\tabcolsep}{5pt}
\renewcommand{\arraystretch}{1.05}
\begin{tabular}{l|ccc|ccc}
 & \multicolumn{3}{c|}{T2I-10M (t=4.133)} & \multicolumn{3}{c}{T2I-100M (t=4.124)} \\
\hline
$k$ & 10 & 50 & 100 & 10 & 50 & 100 \\
\hline
million NDC & 0.55 & 1.16 & 1.55 & 2.33 & 4.63 & 6.11 \\
\% of $N$ & 5.55 & 11.64 & 15.51 & 2.33 & 4.63 & 6.11 \\
\hline
\end{tabular}
\caption{MBV NDC for $k\in\{10,50,100\}$.}
\label{tab:t2i-expansion}
\end{table}

We first run MBV without the certifier on 200 text queries sampled from T2I. Table~\ref{tab:t2i-expansion} reports average NDC in absolute value and as a fraction of dataset size. Stretch estimation on 10M and 100M scales yields $t$=4.133 and $t$=4.124 respectively, confirming that the graph stretch does not grow appreciably with dataset size. At the same time, a $10\times$ larger corpus means base vectors are distributed more densely in the ambient space: the $k$-th true neighbor of a typical query lies closer, so $d_k$ decreases. Since the SBE-Q search radius equals $t \cdot d_k$, a smaller $d_k$ directly shrinks the expansion ball, and that ball covers a smaller fraction of the $N$ base nodes. Table~\ref{tab:t2i-expansion} reflects this exactly: the expansion fraction drops by roughly $2.5\times$ across all $k$ values when moving from 10M to 100M (eg. $k{=}10$: $5.55\%\to2.33\%$). Recall is consistently 1 in all settings.

\begin{figure}[H]
\centering
\includegraphics[width=\columnwidth]{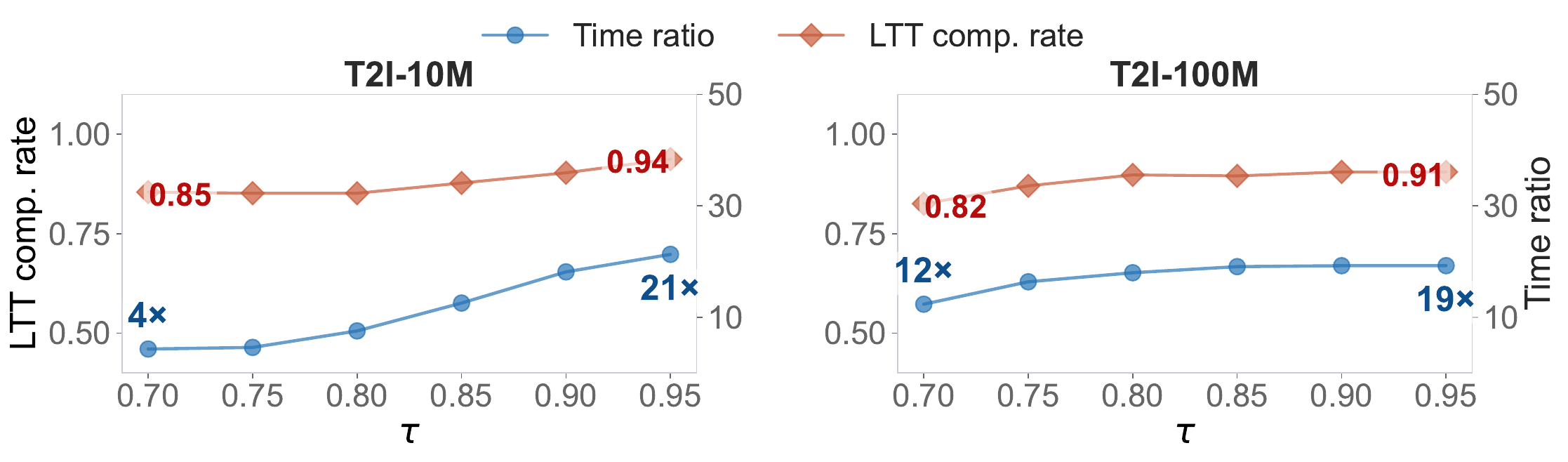}
\caption{$\tau$ ablation on T2I-10M and T2I-100M.}
\label{fig:scalability}
\end{figure}

We next conduct a full $\tau$-sweep ($ef_s{=}100$, $k{=}100$) to evaluate end-to-end overhead. Absolute latency increases with dataset size, as expected: HNSW grows by $8$--$9.4\times$ and CTR by $7.5$--$8.4\times$. The key question is whether CTR adds disproportionate overhead beyond this baseline HNSW growth. Figure~\ref{fig:scalability} shows that the normalized runtime ratio remains comparable across scales: the peak ratio even decreases from $21.3\times$ at 10M to $19.2\times$ at 100M. Thus, because the expansion radius remains bounded, CTR preserves the compliance--runtime trade-off across scales and introduces no additional scaling bottleneck beyond HNSW itself.

\subsection{ConANN vs.\ CTR}
\label{subsec:conann_ctr}

We now perform an \emph{end-to-end} system comparison against ConANN~\cite{conann}, a recent ANN system that calibrates a conformal early-termination rule on an IVF index to also control approximation error and meet recall target. Both systems are evaluated on three datasets (SIFT1M, DEEP1M,and T2I-10M with normalized vectors), sweeping the recall target $\tau\in\{0.80,0.85,0.90,0.95,0.99\}$ (points connected in order of increasing $\tau$ values on graph) with $k{=}100$; we report the per-query compliance rate $\Pr[\mathrm{Recall}(q)\ge\tau]$ against end-to-end throughput.

\noindent\textbf{Index configurations.}
ConANN indexes each dataset with IVF-Flat using $1024$ Voronoi cells and exact $L_2$ scoring within probed cells, $\alpha$ calculated as $1-\tau$. CTR runs on HNSW with $M{=}32$, $ef_c{=}200$, and query-time $ef_s{=}100$; we certify each result with LTT (Section~\ref{sec:conformal_certification}) and rectify queries that fail certification with MBV.

\noindent\textbf{Results.}
Figure~\ref{fig:conann_ctr} reports the compliance--throughput trade-off. Across all three datasets CTR attains higher compliance than ConANN at every comparable target; the systems differ chiefly in where they sit on the throughput axis. On SIFT1M, CTR is throughput-competitive: at modest targets ($\tau\!\le\!0.90$) it is as fast or faster than ConANN---at $\tau{=}0.85$ LTT sustains $1.7$k QPS at $91\%$ compliance versus ConANN's $1.5$k QPS at $67\%$---and only at stringent targets does ConANN regain a throughput edge (e.g.\ $789$ versus $97$ QPS at $\tau{=}0.95$), where its compliance nonetheless plateaus at $0.72$--$0.82$ against CTR's $0.94$--$0.99$. The gap in compliance rate is a direct consequence of the certify-then-rectify mechanism: LTT supplies a per-query confidence guarantee, and this guarantee, combined with CTR's ability to recover the exact nearest neighbors for queries flagged as low-quality, is what gives CTR its edge, lifting the recall of such queries toward 1 and raising the overall compliance rate. ConANN, by contrast, can only reorder and truncate its IVF probe and cannot exactly recover the neighbors it misses, so a fixed fraction of queries necessarily remains below $\tau$ at every point.

\begin{figure}[H]
\centering
\includegraphics[width=\columnwidth]{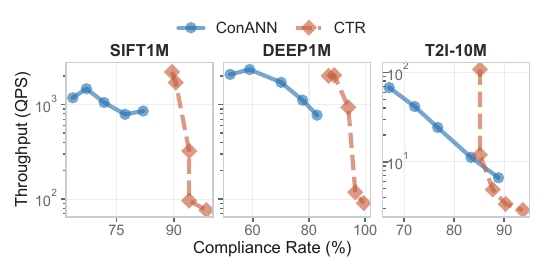}
\caption{ConANN vs. CTR (upper-right is better).}
\label{fig:conann_ctr}
\end{figure}

\subsection{DARTH vs. CRC/LTT}
\label{subsec:darth_baseline}

Recently, DARTH~\cite{darth} was proposed solely as a predictor of recall. We therefore compare it directly against CRC and LTT in a certifier-only setting, i.e., without true neighbor rectification (as DARTH does not have this capability).
Table~\ref{tab:f1_tau_085_090_095_fair} reports the F1 scores with respect to the binary event $Recall(q)\ge\tau$. For fairness, we use DARTH only as a recall predictor and let HNSW run to the same stopping point in all methods, rather than using DARTH's early-termination mechanism. The experiment is conducted on SIFT1M with $M{=}32$, $ef_c{=}200$, $ef_s{=}100$, $k{=}100$ and 1000 random queries, utilizing the implementation from the DARTH authors. Across all three target levels, CRC/LTT attain higher F1, with the gap becoming larger at stricter targets. Similar results were obtained across all data sets and are omitted for brevity. DARTH is a best-effort predictor and does not formally control the fraction of queries whose recall falls below the target. In contrast, CRC and LTT are calibrated for fixed deployment settings (dataset, index configuration, workload, and target recall $\tau$) and explicitly control compliance-related errors under that setting. Consequently, for deciding whether an HNSW result should be accepted as meeting a prescribed recall target, CRC/LTT yield a more reliable acceptance rule.

\begin{table}[H]
\centering
\footnotesize
\setlength{\tabcolsep}{5pt}
\renewcommand{\arraystretch}{1.08}
\begin{tabular}{c|cc|cc}
$\tau$ & \multicolumn{2}{c|}{DARTH} & \multicolumn{2}{c}{CTR} \\
 & Default & Tuned & CRC & LTT \\
\hline
0.85 & 0.968 & 0.971 &  0.983 & 0.983 \\
0.90 & 0.913 & 0.913 & 0.965 & 0.966 \\
0.95 & 0.709 & 0.709 & 0.865 & 0.874 \\
\end{tabular}
\caption{Certifier comparison on SIFT1M: F1 Scores}
\label{tab:f1_tau_085_090_095_fair}
\end{table}

\subsection{Extension to Disk-based Search}
\label{subsec:diskann}
In real production settings, loading the full vector index into memory is often infeasible at billion-scale. Vector search methods that account for disk I/O are therefore necessary. In this section we show that CTR extends to such settings using DiskANN~\cite{diskann}, which achieves state-of-the-art disk-based ANN performance by storing a Vamana graph index on disk and exploiting beam-search with asynchronous I/O to amortise latency across concurrent node fetches.

\noindent\textbf{Experimental Setup.}
We consider three settings. \textbf{(1) In-memory:} the Vamana graph variant of DiskANN, where the full graph with full-precision vectors is loaded into memory prior to ANN search. We use an Intel Ice Lake node with 64GB DDR4 unified memory. \textbf{(2) Limited memory:} memory is insufficient to hold the entire database. \textbf{(3) Heavier disk-IO} We additionally increase result size $k$ based on the limited memory setting to assess the impact on disk-IO brought by an increased MBV expansion radius, which scales with $t\cdot d_k$.

DiskANN search retains only quantized vectors and graph edges in memory while exact vectors reside on disk. To preserve the exactness guarantees of our algorithm, we forgo the quantized vectors and instead keep precomputed graph edge weights in memory, which costs $\sim$2.6GB on the T2I-10M graph ($R{=}64$). The MBV algorithm is otherwise unchanged: it performs Dijkstra expansion using solely these in-memory graph edges; the only difference is that exact distance computations (Algorithm~\ref{alg:sbe_mbv_corrected}, line~25) issue disk reads on cache misses. In this setting, we use an Intel Ice Lake node with 8GB DDR4 memory attached to a 75GB local NVMe drive. Experiments use the T2I-10M dataset; when fully loaded, it requires $\sim$8GB for base vectors and $\sim$2.6GB for the graph ($R{=}64$), totalling $\sim$10.6GB---exceeding the 8GB constraint but well within the 64GB limit of Setting~1. For all settings we run a $\tau$-sweep following Section~7.5, using LTT for all conformal predictors. Settings (1) and (2) results in the left and right panels of Figure~\ref{fig:diskann_k10_inmem_vs_8gb} respectively; setting (3) results in Figure~\ref{fig:diskann_k20_8gb_excel}.

\noindent\textbf{Parameters.}
We evaluate setting 1, 2 with $k =10$ and setting 3 with $k=20$. We use $\alpha=0.1$, $\varepsilon=0.5$, and $\tau\in\{0.75,0.80,0.85,0.90,0.95\}$. The DiskANN index is built with maximum degree $R=64$ and construction list size $L_{\text{build}}=100$; search uses list size $L=100$ and disk beam width $B=2$. MBV is initialised from a single seed with 10,000 hot-cached nodes in Vamana mode. Disk reads are issued via \texttt{io\_uring} in basic single-issuer mode with direct I/O and a submit-batch size of 8, using a single thread. The MBV graph stretch parameter $t=3.71$ is estimated following Section~\ref{subsec:stretch-validation}.

\begin{figure}[H]
\centering
\includegraphics[width=\columnwidth]{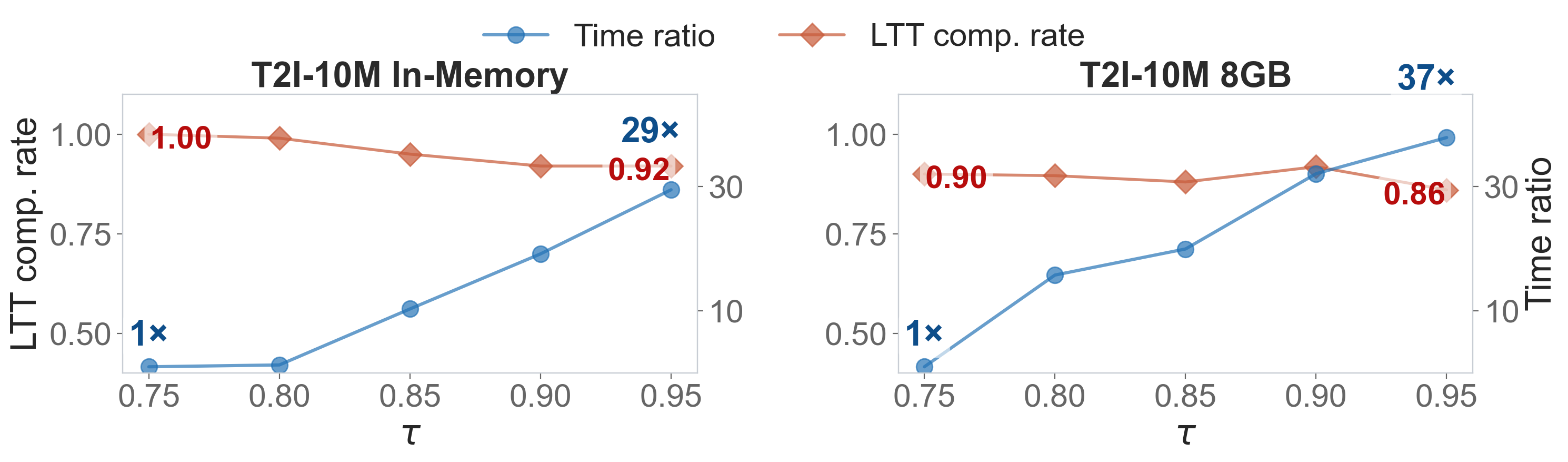}
\caption{$k{=}10$: in-memory Vamana vs.\ DiskANN.}
\label{fig:diskann_k10_inmem_vs_8gb}
\end{figure}

\noindent\textbf{Results.} 
As shown in Figure~\ref{fig:diskann_k10_inmem_vs_8gb}, the in-memory Vamana setting (left panel) confirms that CTR generalises beyond HNSW: it extends directly to the Vamana graph and achieves compliance--runtime trade-offs comparable to those observed on HNSW in Figure~\ref{fig:tau_block} of Section~\ref{sec:conformal_param}. In the limited-memory setting with $k{=}10$ (right panel of Figure~\ref{fig:diskann_k10_inmem_vs_8gb}), the additional disk I/O incurred by MBV expansion remains controlled, adding 8--10$\times$ DiskANN time to the runtime ratio at each $\tau$ level. Notably, compliance rates in the limited-memory setting are consistently lower than in the in-memory setting. This gap arises because native DiskANN beam search relies on quantized vectors as distance proxies, yielding a weaker ANN baseline than exact Vamana search; as a result, the conformal certifier operates on lower-quality candidates, reducing the fraction of queries that pass certification.

\begin{figure}[H]
\centering
\includegraphics[width=0.6\columnwidth]{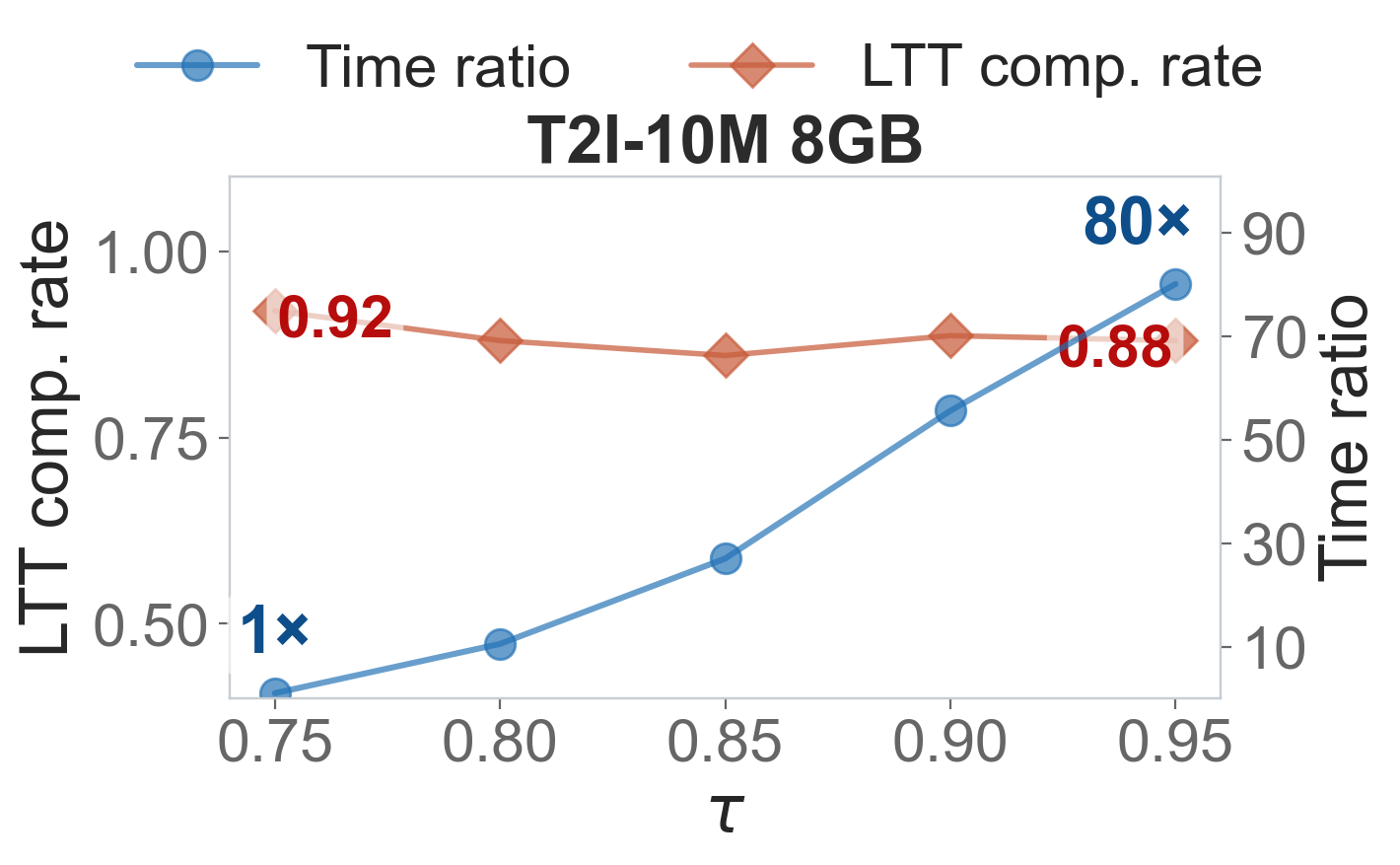}
\caption{$\tau$-sweep, $k{=}20$: disk-based DiskANN (8GB + NVMe).}
\label{fig:diskann_k20_8gb_excel}
\end{figure}

For setting (3), Figure~\ref{fig:diskann_k20_8gb_excel} shows the effect of increasing $k$ from 10 to 20. This enlarges the MBV search radius ($t\cdot d_k$ grows as $d_k$ increases with $k$), expanding the explored graph fraction by approximately 70\% for queries that fail certification and thereby increasing disk I/O. This is directly reflected in the runtime ratio: at $\tau{=}0.9$, the ratio rises from approximately $30\times$ to $50\times$ the DiskANN baseline. Overall, MBV transfers the recall--runtime trade-off observed from HNSW to disk-based vector search without any algorithmic modification. 

\section{Filtered Stretch-Bounded Expansion from Query (F-SBE-Q)}
\label{sec:F-SBE-Q}

In this Section, we show the CTR methodology can be directly extended to handle filtered ANN search (FANNS), and provide experimental results of our implementation. In FANNS, the goal is to retrieve the approximate $k$-nearest neighbors that satisfy a specific predicate condition $P$. Standard heuristic traversals often fail in these scenarios when filters disconnect the ``small world'' topology. However, by leveraging the global connectivity of the underlying graph structure—regardless of the predicate status of intermediate nodes—we can guarantee exact recovery of true nearest neighbors using the bounds derived from the approximate search results. 

\textbf{Remark (CTR Modifications for FANNS).} The CTR pipeline requires only minimal modification to support FANNS. Deriving F-SBE-Q from SBE-Q entails a single algorithmic change: lines~25--30 of Algorithm~\ref{alg:p_sbe_q} are augmented to check the predicate prior to computing the distance. The lower-bound and elliptical pruning techniques carry over unchanged, since the distance metric is independent of predicate membership---any candidate node that cannot be closer to $q$ than the $k$-th approximate neighbor, by the triangle inequality, may be pruned regardless of whether it satisfies $P$. The conformal predictor features defined in Section~\ref{sec:conformal_certification} are likewise applied to FANNS without modification.

\subsubsection{Problem Formulation}

Let $P: \mathcal{X} \rightarrow \{0,1\}$ be a boolean predicate function. We seek the set of $k$ nearest neighbors $\mathcal{N}_{k,P}^*(q)$ such that for all $x \in \mathcal{N}_{k,P}^*(q)$, $P(x)=1$. Let $d_{k,P}^*$ denote the distance to the true $k$-th nearest neighbor in the satisfying set:
\begin{equation}
    d_{k,P}^* = \max_{x \in \mathcal{N}_{k,P}^*(q)} \mathit{dist}(q, x)
\end{equation}
Let $\mathcal{H}_{found}$ be the set of candidates returned by the initial HNSW search (or any heuristic phase). We identify the subset of these candidates that satisfy the predicate $P$. If at least $k$ such candidates exist, let $\hat{d}_{k,P}$ denote the distance to the $k$-th closest satisfying candidate found so far.

Similar to the unfiltered case, the distance to the $k$-th candidate found by any approximate method is an upper bound on the true $k$-th distance:
\begin{equation}
    d_{k,P}^* \le \hat{d}_{k,P}
\end{equation}

\subsubsection{Theoretical Guarantees}

We extend Theorem~\ref{thm:sbe-q} to the filtered case using the observable bound $\hat{d}_{k,P}$. The critical insight is that while the \emph{targets} must satisfy $P$, the \emph{path} taken in the graph $G'$ to reach them need not. Thus, we rely on the spanner property of the full augmented graph $G'$.

\begin{theorem}[Filtered SBE-Q]
Let $q$ be a query point virtually inserted into the graph $G$ such that the augmented graph $G'$ is a $t'$-spanner. Let $\hat{d}_{k,P}$ be the distance to the $k$-th approximate neighbor satisfying predicate $P$ (as returned by HNSW or updated dynamically). Then, every true filtered top-$k$ point lies inside the graph ball of radius $t' \cdot \hat{d}_{k,P}$ centered at $q$.

Formally, for every $x^* \in \mathcal{N}_{k,P}^*(q)$:
\begin{equation}
    d_{G'}(q, x^*) \le t' \cdot \hat{d}_{k,P}
\end{equation}
\end{theorem}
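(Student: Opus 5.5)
The argument copies the proof of Theorem~\ref{thm:sbe-q}, with the unfiltered bound $d_k^*\le\hat{d}_k$ replaced by its filtered analogue $d_{k,P}^*\le\hat{d}_{k,P}$. The only real work is justifying that inequality, including when $\hat{d}_{k,P}$ is updated during the traversal. After that, the spanner property of the \emph{unfiltered} augmented graph $G'$ finishes the proof.

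First, the metric step. Fix an arbitrary $x^*\in\mathcal{N}_{k,P}^*(q)$. By definition of $d_{k,P}^*$ we have $\mathit{dist}(q,x^*)\le d_{k,P}^*$.

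Next, show $d_{k,P}^*\le\hat{d}_{k,P}$. The $k$ closest satisfying candidates found so far form a set of $k$ points of $\{x: P(x)=1\}$, all within distance $\hat{d}_{k,P}$ of $q$. If $d_{k,P}^*>\hat{d}_{k,P}$, the ball of radius $\hat{d}_{k,P}$ would contain at least $k$ satisfying points. Then the $k$-th smallest satisfying distance would be at most $\hat{d}_{k,P}$, a contradiction.

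This argument holds at every moment of the search, not only at initialization. The dynamic updates only replace a candidate by a closer satisfying point, so $\hat{d}_{k,P}$ is nonincreasing. It always remains the $k$-th distance of some $k$-subset of satisfying points, hence never drops below $d_{k,P}^*$. I would state this explicitly, because the theorem allows $\hat{d}_{k,P}$ to be ``updated dynamically.'' I would also note the implicit hypothesis that at least $k$ satisfying candidates exist, so that $\hat{d}_{k,P}$ is defined.

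Second, the graph step. The hypothesis says $G'$ is a $t'$-spanner on all pairs of $V'$, with no reference to $P$. In particular it covers the pair $(q,x^*)$, so $d_{G'}(q,x^*)\le t'\cdot\mathit{dist}(q,x^*)$. Here I would stress the point made before the theorem: the shortest path realizing $d_{G'}(q,x^*)$ may pass through nodes with $P=0$. The filtered search must therefore expand through non-satisfying nodes and only gate the distance evaluation and result insertion on $P$. This is why the bound uses the full graph $G'$ and not the subgraph induced by $P$, which may be disconnected.

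Chaining the two steps gives $d_{G'}(q,x^*)\le t'\,\mathit{dist}(q,x^*)\le t'\,d_{k,P}^*\le t'\,\hat{d}_{k,P}$. Since $x^*$ was arbitrary, $\mathcal{N}_{k,P}^*(q)\subseteq B_{G'}(q,t'\hat{d}_{k,P})$.

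The main obstacle is conceptual rather than computational. One must make sure the spanner hypothesis is applied to $G'$ as a whole and not to the filtered subgraph, and that the dynamic monotonicity of $\hat{d}_{k,P}$ is argued correctly. Everything else is a direct transcription of Theorem~\ref{thm:sbe-q}.
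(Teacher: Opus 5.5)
Your proof is correct and follows the same route as the paper. You fix $x^*\in\mathcal{N}_{k,P}^*(q)$, chain $\mathit{dist}(q,x^*)\le d_{k,P}^*\le\hat{d}_{k,P}$, and apply the $t'$-spanner property of the full, unfiltered augmented graph $G'$ to the pair $(q,x^*)$. Your extra work justifies $d_{k,P}^*\le\hat{d}_{k,P}$: any $k$ distinct satisfying candidates within $\hat{d}_{k,P}$ of $q$ force the $k$-th smallest satisfying distance to be at most $\hat{d}_{k,P}$, and dynamic replacements preserve this. This fills in a step the paper only asserts as the ``upper bound property of the approximate results.''
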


\begin{proof}
Let $x^*$ be any arbitrary point in the true filtered set $\mathcal{N}_{k,P}^*(q)$. By definition of the true set and the upper bound property of the approximate results:
\begin{equation}
    \mathit{dist}(q,x^*) \le d_{k,P}^* \le \hat{d}_{k,P}
\end{equation}
Since the augmented graph $G'$ is a $t'$-spanner over the full set of vertices $V \cup \{q\}$, there exists a path between $q$ and $x^*$ in $G'$ such that:
\begin{equation}
    d_{G'}(q,x^*) \le t' \cdot \mathit{dist}(q, x^*)
\end{equation}
Substituting the metric bound:
\begin{equation}
    d_{G'}(q,x^*) \le t' \cdot \hat{d}_{k,P}
\end{equation}
Thus, the search radius defined by the approximate results is sufficient to encompass all true answers.
\end{proof}

\subsubsection{Algorithm}

The expansion algorithm (Algorithm~\ref{alg:p_sbe_q}) proceeds by traversing the graph $G'$ starting from $q$. Crucially, to preserve the spanner property $t'$, the traversal visits nodes regardless of whether they satisfy $P$. The predicate $P(v)$ is checked only during the candidate collection phase to update the dynamic bound $\hat{d}_{k,P}$. While Algorithm~\ref{alg:p_sbe_q} presents the logic for predicate handling, in practice, this traversal is further optimized using the Metric Bound Verification techniques we introduce in Section \ref{sec:mbv}.
\begin{footnotesize}
\begin{algorithm}[t]
\caption{Filtered Stretch-Bounded Expansion from Query (F-SBE-Q)}
\label{alg:p_sbe_q}
\begin{algorithmic}[1]
\REQUIRE Query $q$, Predicate $P$, Graph $G'$, Stretch $t'$, Initial Candidates $\mathcal{H}_{found}$
\ENSURE Exact filtered set $\mathcal{N}_{k,P}^*(q)$

\STATE \textbf{Initialize:}
\STATE $Q_{visit} \leftarrow$ Min-Priority Queue ordered by graph distance $d_{G'}$
\STATE $R_{topK} \leftarrow$ Max-Priority Queue of size $k$ (stores pairs $(\mathit{dist}(q,u), u)$ where $P(u)=1$)
\STATE $Visited \leftarrow \{q\}$

\STATE \COMMENT{Seed with initial heuristic results that satisfy P to establish $\hat{d}_{k,P}$}
\FOR{$u \in \mathcal{H}_{found}$}
    \IF{$P(u)$ is True}
        \STATE $R_{topK}$.push($\mathit{dist}(q,u), u$)
    \ENDIF
\ENDFOR

\STATE $Q_{visit}$.push($0, q$)

\WHILE{$Q_{visit}$ is not empty}
    \STATE $(d_{graph}, u) \leftarrow Q_{visit}$.pop()
    
    \STATE \COMMENT{Dynamic Termination Condition using $\hat{d}_{k,P}$}
    \STATE $\hat{d}_{k,P} \leftarrow R_{topK}$.max\_dist() \COMMENT{Returns $\infty$ if $<k$ found}
    \IF{$d_{graph} > t' \cdot \hat{d}_{k,P}$}
        \STATE \textbf{break} \COMMENT{Guaranteed to have found all true filtered neighbors}
    \ENDIF

    \STATE \COMMENT{Expand Neighbors}
    \FOR{each neighbor $v$ of $u$ in $G'$}
        \IF{$v \notin Visited$}
            \STATE $Visited$.add($v$)
            \STATE $new\_d_{graph} \leftarrow d_{graph} + \mathit{dist}(u, v)$
            
            \STATE \COMMENT{Check Metric Distance \& Predicate}
            \IF{$P(v)$ is True}
                \STATE $d_{metric} \leftarrow \mathit{dist}(q, v)$
                \IF{$d_{metric} < R_{topK}$.max\_dist()}
                    \STATE $R_{topK}$.push($d_{metric}, v$) \COMMENT{Updates $\hat{d}_{k,P}$ (tightens bound)}
                \ENDIF
            \ENDIF
            
            \STATE $Q_{visit}$.push($new\_d_{graph}, v$)
        \ENDIF
    \ENDFOR
\ENDWHILE
\STATE \textbf{return} $R_{topK}$
\end{algorithmic}
\end{algorithm}
\end{footnotesize}

\FloatBarrier

{\footnotesize
\begin{table*}[t]
\centering
\setlength{\tabcolsep}{4pt}
\begin{tabular}{l|c|cccc|cccc|cccc|cccc}
\textbf{Dataset} & \textbf{Metric}
& \multicolumn{4}{c|}{$M{=}16,M_\beta{=}32,ef_c{=}200,k{=}10$}
& \multicolumn{4}{c|}{$M{=}16,M_\beta{=}32,ef_c{=}200,k{=}100$}
& \multicolumn{4}{c|}{$M{=}32,M_\beta{=}64,ef_c{=}200,k{=}10$}
& \multicolumn{4}{c}{$M{=}32,M_\beta{=}64,ef_c{=}200,k{=}100$}
\\ \cline{3-18}
& $\sigma$ & 1 & 5 & 10 & 25 & 1 & 5 & 10 & 25 & 1 & 5 & 10 & 25 & 1 & 5 & 10 & 25 \\ \hline
\multirow{5}{*}{SIFT1M}
& $t$ (SSE $c$) & 4.699 & 4.203 & 4.175 & 6.089 & 4.699 & 4.203 & 4.175 & 6.089 & 4.052 & 3.775 & 3.750 & 3.641 & 4.052 & 3.775 & 3.750 & 3.641 \\
& F-SBE-NN (\% of 1M) & 92.06 & 82.14 & 82.39 & 85.15 & 97.09 & 89.60 & 89.24 & 87.23 & 85.71 & 74.25 & 74.74 & 68.64 & 92.58 & 82.66 & 82.69 & 76.89 \\
& F-SBE-Q (\% of 1M) & 28.72 & 15.31 & 12.64 & 25.50 & 44.35 & 32.86 & 29.25 & 41.96 & 28.16 & 14.94 & 12.28 & 8.30 & 45.38 & 32.68 & 28.97 & 22.67 \\
& MBV (\% of 1M) & 28.48 & 15.10 & 12.44 & 25.01 & 44.15 & 32.60 & 28.91 & 40.35 & 27.97 & 14.75 & 12.11 & 8.18 & 45.24 & 32.48 & 28.76 & 22.46 \\
\hline
\multirow{5}{*}{DEEP1M}
& $t$ (SSE $c$) & 4.369 & 3.946 & 3.914 & 6.628 & 4.369 & 3.946 & 3.914 & 6.628 & 3.571 & 3.560 & 3.520 & 3.608 & 3.571 & 3.560 & 3.520 & 3.608 \\
& F-SBE-NN (\% of 1M) & 98.81 & 93.94 & 92.92 & 91.29 & 99.77 & 96.92 & 96.27 & 91.47 & 97.93 & 95.40 & 94.14 & 93.25 & 99.47 & 97.73 & 96.89 & 96.21 \\
& F-SBE-Q (\% of 1M) & 21.94 & 7.99 & 6.55 & 45.76 & 60.88 & 30.21 & 24.61 & 72.09 & 13.97 & 8.48 & 6.17 & 6.04 & 51.15 & 34.38 & 26.51 & 24.04 \\
& MBV (\% of 1M) & 21.91 & 7.97 & 6.54 & 45.60 & 60.85 & 30.17 & 24.58 & 72.00 & 13.96 & 8.47 & 6.16 & 6.03 & 51.12 & 34.35 & 26.48 & 24.00 \\
\hline
\multirow{5}{*}{GIST1M}
& $t$ (SSE $c$) & 5.207 & 5.199 & 5.180 & 6.984 & 5.207 & 5.199 & 5.180 & 6.984 & 3.887 & 3.387 & 3.387 & 3.909 & 3.887 & 3.387 & 3.387 & 3.909 \\
& F-SBE-NN (\% of 1M) & 90.70 & 89.98 & 90.12 & 61.63 & 91.77 & 90.74 & 90.81 & 61.83 & 93.66 & 87.88 & 87.51 & 91.47 & 94.87 & 89.50 & 89.05 & 92.56 \\
& F-SBE-Q (\% of 1M) & 53.08 & 49.25 & 47.25 & 45.22 & 69.05 & 64.61 & 62.64 & 51.62 & 37.33 & 15.38 & 13.74 & 27.85 & 60.83 & 35.63 & 32.61 & 48.30 \\
& MBV (\% of 1M) & 52.99 & 49.12 & 47.08 & 45.21 & 69.00 & 64.55 & 62.58 & 51.61 & 36.99 & 15.13 & 13.49 & 27.40 & 60.68 & 35.33 & 32.30 & 47.97 \\
\end{tabular}
\caption{Expanded-node coverage (\% of 1M) on ACORN: F-SBE-NN vs F-SBE-Q vs MBV}
\label{tab:sift_selectivity_results}
\end{table*}
}

\subsection{Experiments On Stretch-Bounded Expansion For Filtered Search} 

To demonstrate SBE extends naturally to filtered ANN search (FANNS), we augment ACORN \cite{acorn}, a representative filtered-search method, with F-SBE-Q. Our experiments in this section are comparable to the setup in Section~\ref{subsec:system_ablation}. Note that ACORN is a variant of HNSW that adds extra edges to the underlying graph, controlled by a parameter $\gamma$. In essence, the higher the selectivity of the query predicates (fewer matching vectors), the higher the value of $\gamma$ required, which produces denser graphs that keep the search traversal highly navigable. Moreover, ACORN explores 2-hop neighbors at each step to maximize the probability of retrieving nearest neighbors that satisfy the query predicates. 
 
In order to understand the impact of predicate selectivity on our framework, we introduce a selectivity parameter $\sigma$, which we define as the percentage of database vectors satisfying a query predicate. For instance, the value $\sigma = 1$ would imply that only $1\%$ of vectors pass the filter (highly restrictive), while $\sigma = 25$ admits $25\%$ (moderately selective). In our experiments, we set the value of $\gamma$ as the inverse of the selectivity parameter $\sigma$ per the recommendation given in the ACORN paper~\cite{acorn}. We evaluate the performance of F-SBE-NN, F-SBE-Q, and MBV on ACORN across three datasets (SIFT1M, DEEP1M, and GIST1M respectively). Observe that the results in Table~\ref{tab:sift_selectivity_results} depict trends similar to those observed for HNSW (Table \ref{tab:sbeq-sbenn}). 

As shown in Table~\ref{tab:sift_selectivity_results}, decreasing $\sigma$ (i.e., higher selectivity) generally increases the number of nodes explored by F-SBE-Q, subject to the estimated stretch and density of the underlying graph. When either the density or stretch of the graph is high, the number of nodes explored by SBE-Q also increases, while guaranteeing perfect recall across all cases (subject to the estimation of the stretch $t$, which varies with the parameter $\gamma$). We typically observe this increase because more selective predicates both raise the density of the underlying graph (through parameter $\gamma$) and yield a larger $d_k$; that is, the top-$k$ filtered results lie farther from $q$, which in turn enlarges the search radius. Our results further demonstrate the benefits of F-SBE-Q and MBV over F-SBE-NN, consistent with what we observed for HNSW. It is known~\cite{acorn} that ACORN exhibits low recall for highly selective predicates, and our framework resolves this.

\subsection{Conformal Parameter Evaluation on Filtered Search}
\label{subsec:filtered_search}

This subsection presents extended results demonstrating that our end-to-end Certify-then-Rectify (CTR) framework generalizes to the FANNS setting. Here the experiment is comparable to Section~\ref{sec:conformal_param} in its motivation. In the unfiltered case, the previous sections establish that wrapping vanilla HNSW with our statistical certifiers (CRC and LTT) and exact recovery procedures (SBE-Q with MBV) yields a principled recall--runtime trade-off: queries whose HNSW results are statistically certified as meeting the target recall $\tau$ are returned directly, while those that fail certification are escalated to exact recovery. Here, we investigate whether the same framework transfers to \emph{filtered} search, where the goal is to retrieve the $k$ nearest neighbors that additionally satisfy a boolean predicate $P$. Our results span a total of six datasets. We use the SIFT1M, GIST1M and DEEP1M datasets where we introduce {\em synthetic} predicates - in essence, given a value $\sigma$, we randomly select $\sigma$ percent of vectors to satisfy a boolean predicate. We further evaluate our framework on three datasets with {\em real predicates}. These include the PAPER \cite{paperdata} and LAION1M \cite{lion} datasets from the ACORN paper. Moreover, to demonstrate scalability, we evaluate our framework on the ARXIV dataset\footnote{https://huggingface.co/datasets/SPCL/arxiv-for-fanns-large}, which contains high-dimensional vectors.

\begin{table}[t]
  \centering
  \footnotesize
  \setlength{\tabcolsep}{4pt}
  \caption{Dataset Summary (Real Filtered Datasets)}
  \label{tab:datasets}
  \begin{tabular}{c|c|c|c|c|c}
    Dataset & $N$ & $d$ & $\gamma$ & $\sigma$ & Predicate \\
    \hline
    Paper   & 2.03M & 200  & 12 & ${\sim}8\%$  & 1-of-12 label \\
    LAION1M & 1.00M & 512  & 30 & ${\sim}10\%$ & CLIP top-3/30 \\
    arXiv   & 2.74M & 4096 & 12 & ${\sim}0 \text{ to } 28\%$ & EMIS category \\
  \end{tabular}
\end{table}

\subsubsection{Experimental Setup}

For our experiments on synthetic data, we set ($\gamma = 1/\text{selectivity}$), so that selectivity levels ($\sigma=5\%, 10\%, 25\%$) correspond to ($\gamma=20, 10, 4$), respectively. For purpose of meaningful evaluation, we used 4-fold cross-validation with 1K queries per fold, where each fold contains 25\% of the queries.

Across all our datasets, we used the same index and search settings: \texttt{$ef_c$}=64, \texttt{$ef_s$}=128, and $k$=50, with $M$=16 and the compression factor $M_\beta$=32. The LTT parameter $\epsilon$ was dataset-dependent (explored in further detail in Section~\ref{sec:conformal_param}) in the $\alpha$-sweep experiments: we used $\epsilon$=0.8 for GIST and $\epsilon$=0.5 for the rest of the five datasets. The certifier $\alpha$ grids and $\tau$ values were also chosen per preset; for instance, the $\tau$-sweeps used 0.65–0.75 for the LAION1M dataset and 0.7–0.8 for the ARXIV dataset. We set different target recall ($\tau$) windows for different datasets in order to demonstrate the transition point where the pipeline shifts from certifying nearly all queries to rectifying nearly all of them. Throughout our experiments, the values of our configuration parameters remain fixed. Moreover, for brevity, the stochastic stretch factor used for rectification is held constant at $t = 4.5$ across all runs, as our primary objective is not to fine-tune the expansion radius but rather to (a) validate that the CTR certification-rectification pipeline transfers to filtered search with minimal modification, and (b) study how the predicate selectivity $\sigma$ modulates the recall-runtime trade-off. 

We evaluate three selectivity levels: $\sigma \in \{5, 10, 25\}$. As $\sigma$ decreases (higher selectivity), ACORN must traverse more of the graph to accumulate $k$ qualifying candidates, increasing the chance of a suboptimal heuristic result and triggering rectification. Therefore, we set $\tau = 0.95$ for the \textsc{Paper} dataset, $0.90$ for \textsc{Sift1M} and \textsc{Deep1M}, $0.80$ for \textsc{Laion1M}, and $0.75$ for \textsc{Gist1M} and \textsc{arXiv}, in accordance with the base recall provided by ACORN.

For each configuration, we measure two key quantities: (i) the \emph{compliance rate}, defined as the fraction of test queries for which the final output (after any rectification) satisfies $Recall(q) \ge \tau$, and (ii) the \emph{runtime ratio}, defined as the end-to-end wall-clock time of the CTR pipeline normalized by the standalone ACORN runtime. Normalizing runtime by the ACORN baseline allows direct comparison of the overhead introduced by certification and rectification, independent of the absolute cost of the filtered traversal itself.

\begin{figure*}[t]
  \centering
  \includegraphics[width=\textwidth]{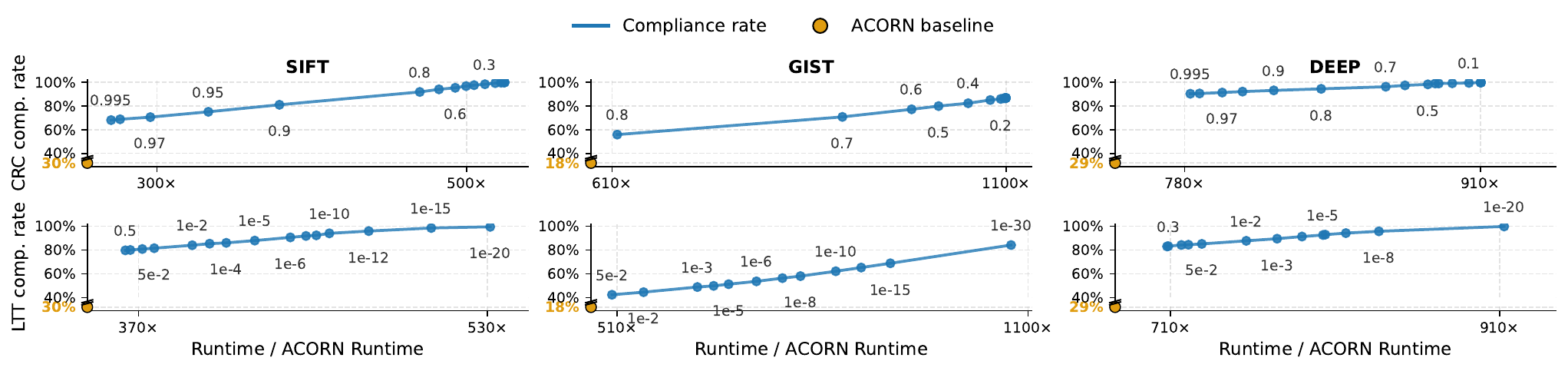}
  \caption{$\alpha$-sweep on SIFT1M, GIST1M, DEEP1M (\textbf{\underline{$\sigma{=}5\%$}})}
  \label{fig:alpha-s5}
\end{figure*}

\begin{figure*}[t]
  \centering
  \includegraphics[width=\textwidth]{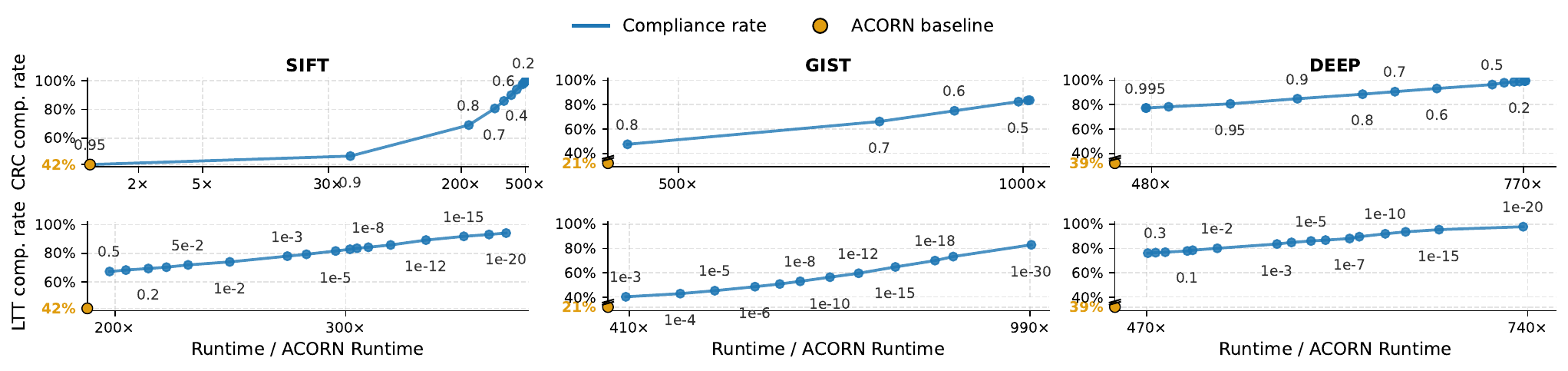}
  \caption{$\alpha$-sweep on SIFT1M, GIST1M, DEEP1M (\textbf{\underline{$\sigma{=}10\%$}})}
  \label{fig:alpha-s10}
\end{figure*}

\begin{figure*}[t]
  \centering
  \includegraphics[width=\textwidth]{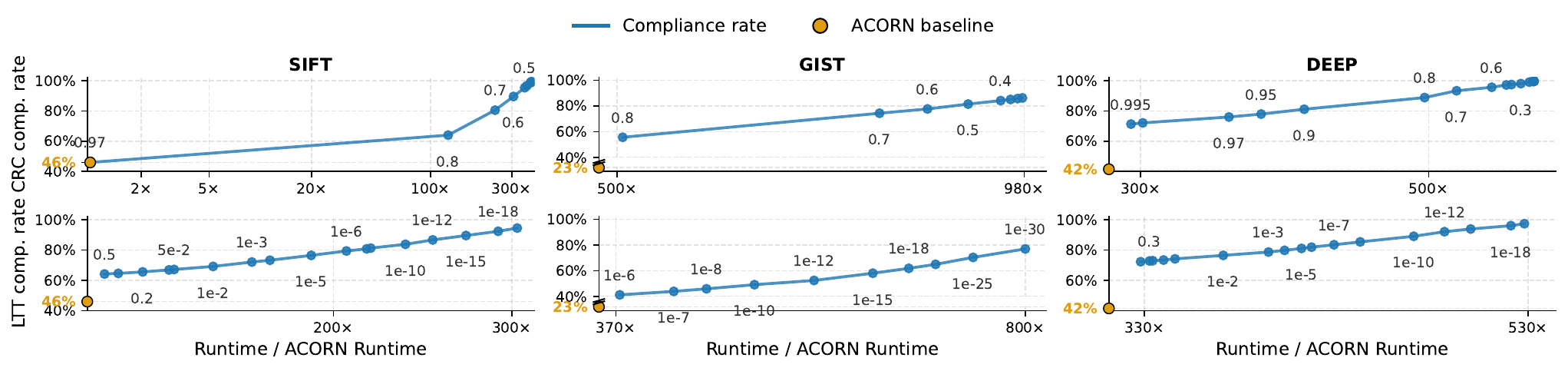}
  \caption{$\alpha$-sweep on SIFT1M, GIST1M, DEEP1M (\textbf{\underline{$\sigma{=}25\%$}})}
  \label{fig:alpha-s25}
\end{figure*}

\begin{figure*}[t]
  \centering
  \includegraphics[width=\textwidth]{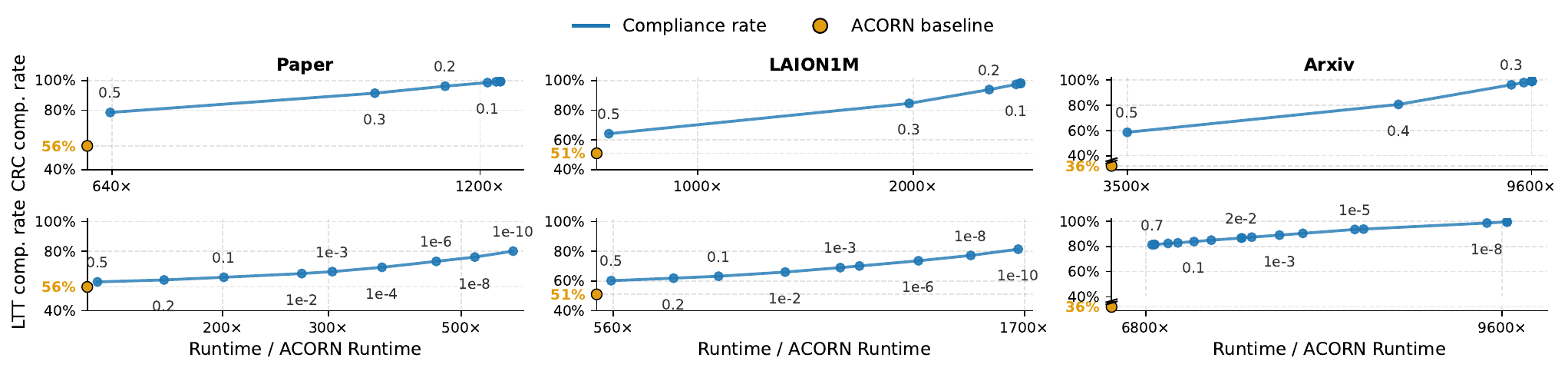}
  \caption{$\alpha$-sweep on PAPER, LAION1M, and arXiv Datasets}
  \label{fig:alpha-real}
\end{figure*}

\begin{figure*}[t]
  \centering
  \includegraphics[width=\textwidth]{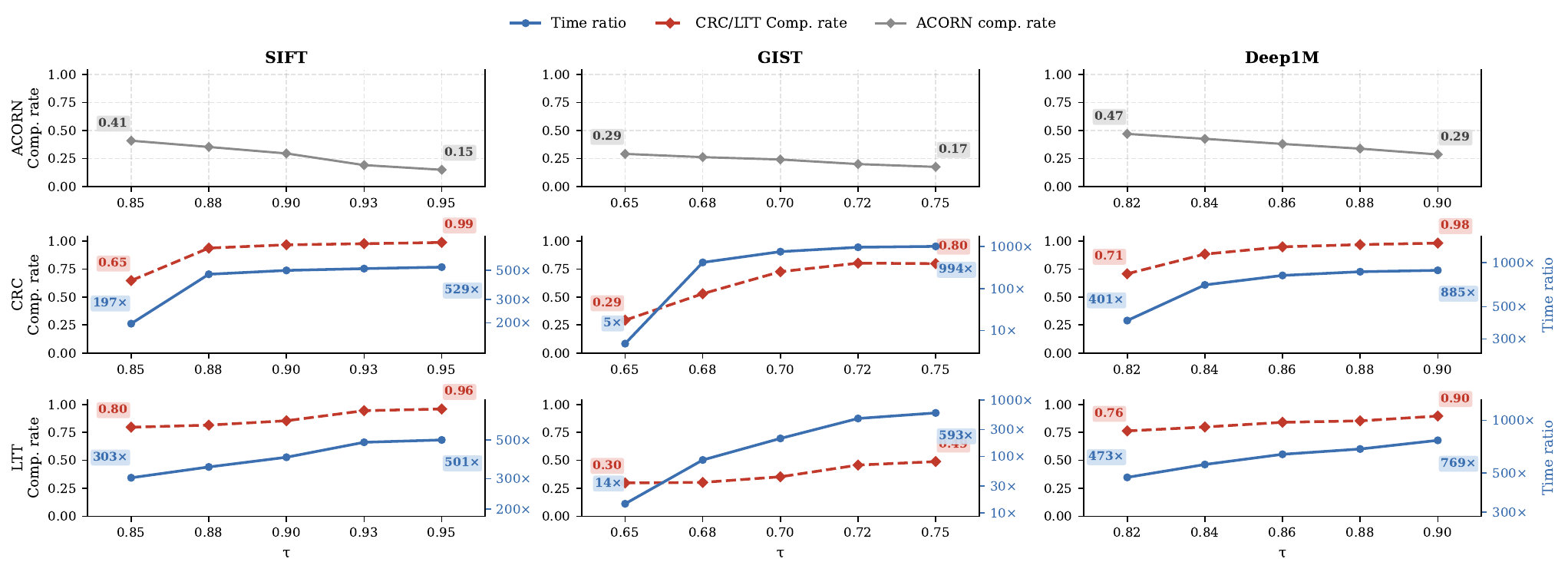}
  \vspace{-25pt}
  \caption{$\tau$-sweep on SIFT1M, GIST1M, DEEP1M (\textbf{\underline{$\sigma{=}5\%$}})}
  \label{fig:tau-s5}
\end{figure*}

\begin{figure*}[t]
  \centering
  \includegraphics[width=\textwidth]{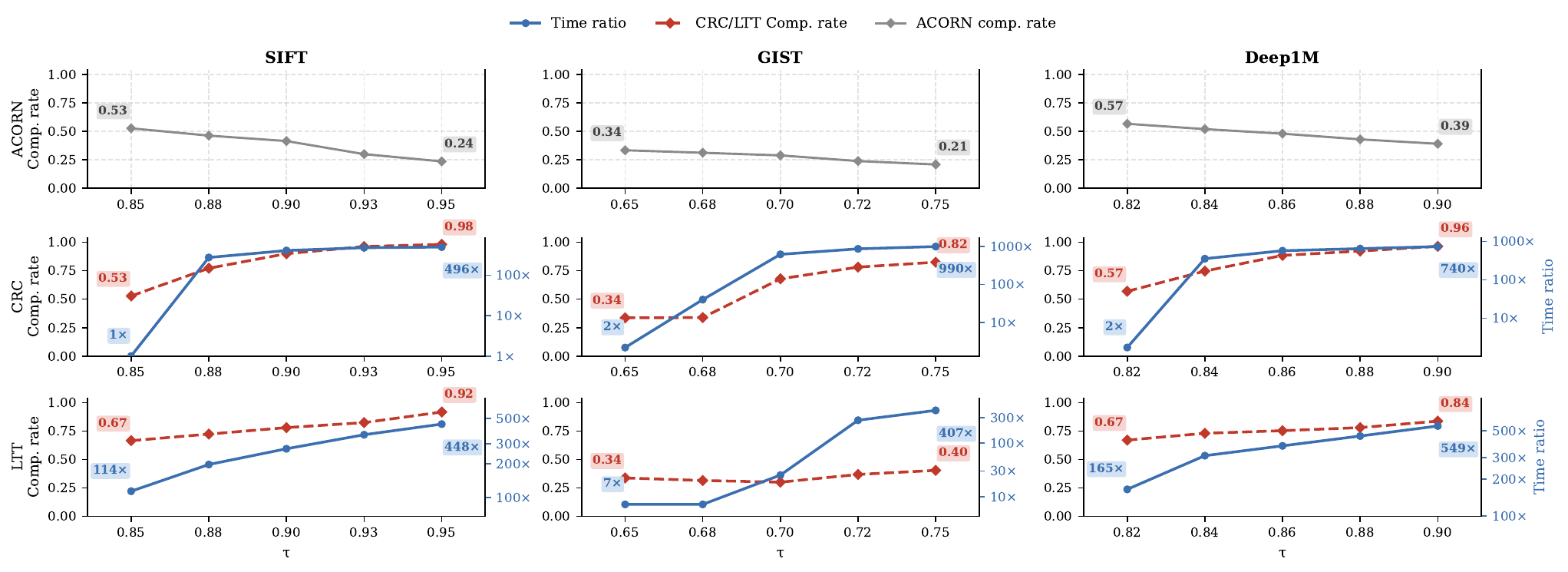}
  \vspace{-25pt}
  \caption{$\tau$-sweep on SIFT1M, GIST1M, DEEP1M (\textbf{\underline{$\sigma{=}10\%$}})}
  \label{fig:tau-s10}
\end{figure*}

\begin{figure*}[t]
  \centering
  \includegraphics[width=\textwidth]{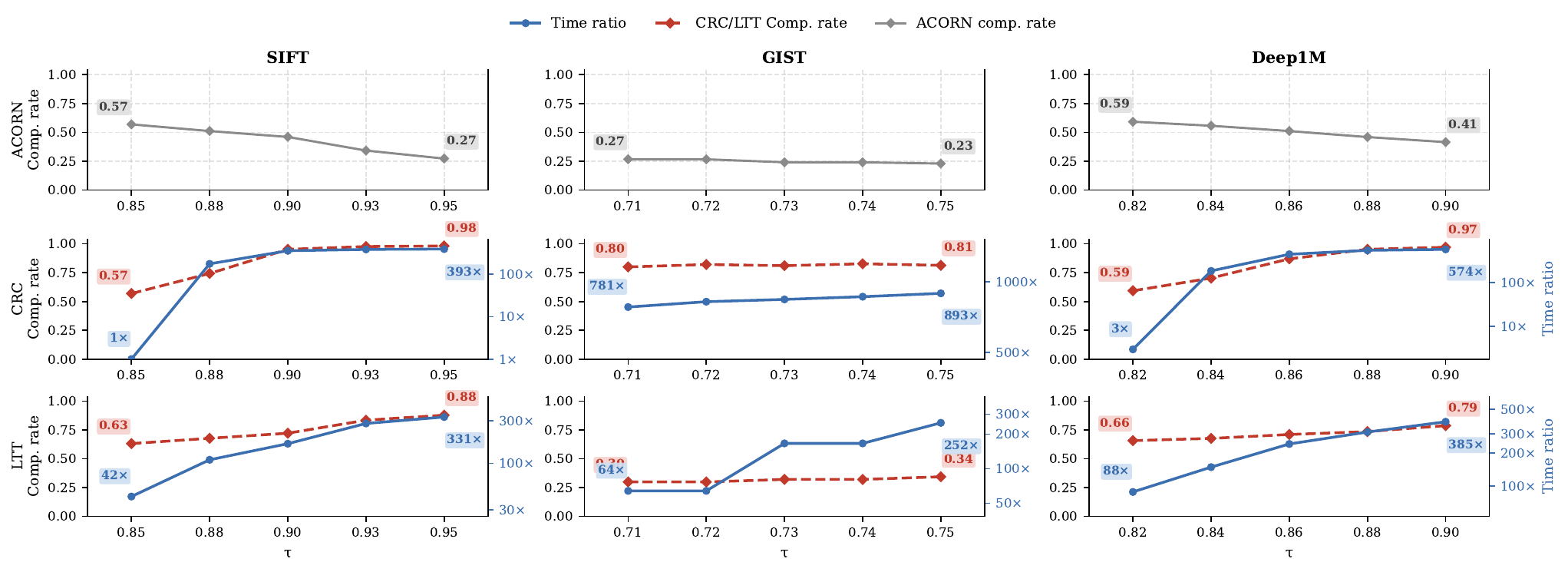}
  \vspace{-25pt}
  \caption{$\tau$-sweep on SIFT1M, GIST1M, DEEP1M (\textbf{\underline{$\sigma{=}25\%$}})}
  \label{fig:tau-s25}
\end{figure*}

\begin{figure*}[t]
  \centering
  \includegraphics[width=\textwidth]{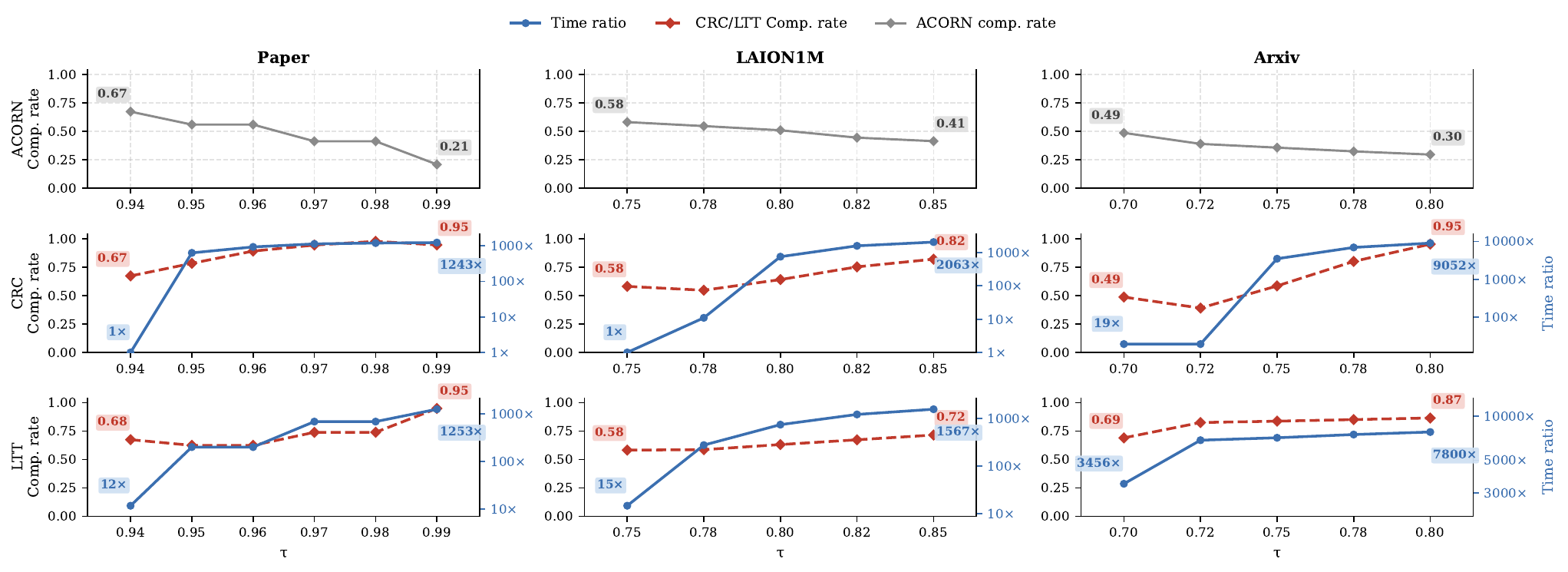}
  \vspace{-22pt}
  \caption{$\tau$-sweep on PAPER, LAION1M, and arXiv Datasets}
  \label{fig:tau-real}
\end{figure*}

\noindent\textbf{Effect of Varying $\alpha$ (Alpha-Sweeps).} The parameter $\alpha$ governs the stringency of the statistical certifier. In CRC, $\alpha$ controls the risk level of the conformal procedure: a smaller $\alpha$ requires stronger statistical evidence that the HNSW output meets the recall target before it is accepted without rectification. In LTT, $\alpha$ serves as the significance level of the hypothesis test that the certifier's error rate does not exceed a prescribed tolerance $\epsilon$.

Figures~\ref{fig:alpha-s5},~\ref{fig:alpha-s10} and~\ref{fig:alpha-s25} present the alpha-sweep results for selectivity levels $\sigma = 5$, $10$, and $25$, respectively for datasets with synthetic predicates. Figure~\ref{fig:alpha-real} presents alpha-sweep results for datasets with real query predicates. Each panel plots the compliance rate (y1 axis) and the normalized runtime (y2 axis) as a function of $\alpha$, for both CRC and LTT. The baseline ACORN operating point is highlighted for reference.

\underline{For CRC}, the results reveal a smooth and monotonic trade-off. At small values of $\alpha$, the certifier is highly conservative: it demands strong statistical evidence before accepting ACORN results, so a large fraction of queries fail certification and are escalated to the exact F-SBE-Q + MBV recovery pipeline. This drives the compliance rate close to $1.0$, where nearly every query ultimately meets the recall target, albeit at the cost of elevated runtime, since the Dijkstra-based expansion in F-SBE-Q is substantially more expensive than the heuristic traversal alone. As $\alpha$ increases, the certifier becomes more permissive: it accepts a larger share of ACORN outputs at face value, bypassing rectification. This reduces runtime because fewer queries invoke the expensive recovery procedure, but the compliance rate decreases accordingly, since some of the directly accepted results do not actually meet the target recall.

\underline{For LTT}, our results exhibit the same qualitative monotonic trade-off but with a notable quantitative difference: the practically useful range of $\alpha$ in LTT is shifted toward substantially smaller values (often by many orders of magnitude). This is an inherent property of the Learn-then-Test mechanism rather than an indication of an unusually stringent confidence requirement. In LTT, the null hypothesis is the boundary case in which exactly $x_\theta$ failures occur among $m_\theta$ certified queries under a $\text{Binomial}(m_\theta, \epsilon)$ model. The corresponding one-sided tail probability becomes extremely small as soon as the observed failure count drops even slightly below the nominal level $\epsilon \cdot m_\theta$. As a result, the test rejects the null (and therefore certifies the query) only at very small $\alpha$ values, compressing the effective operating range. Despite this shift, the shape of the compliance-versus-runtime frontier mirrors that of CRC: decreasing $\alpha$ pushes more queries through rectification, raising both compliance and runtime, while increasing $\alpha$ does the reverse.

\noindent\textbf{Effect of Varying $\tau$ (Tau-Sweeps).} Figures~\ref{fig:tau-s5},~\ref{fig:tau-s10} and ~\ref{fig:tau-s25} present our results for tau-sweep experiments on datasets with synthetically generated predicates, while Figure~\ref{fig:tau-real} presents those for datasets with real predicates. In our experiments, we fix the certifier parameters and vary the target recall threshold $\tau$. Each panel reports three quantities as a function of $\tau$: the compliance rate of the CTR pipeline, the compliance rate of standalone ACORN without any certification or rectification, and the runtime ratio of the CTR pipeline relative to ACORN.

The results reveal a striking complementary pattern. As $\tau$ increases (i.e., stricter recall requirements), ACORN's native compliance rate declines steeply: at high recall thresholds, fewer and fewer of ACORN's heuristic results happen to contain a sufficiently large overlap with the true $k$-nearest filtered neighbors. In contrast, the CTR pipeline maintains a substantially higher compliance rate across the full range of $\tau$. This is because the certifier detects the queries for which ACORN's output is deficient and escalates them to exact recovery via SBE-Q with MBV, which is guaranteed to return the true filtered neighbors. The gap between ACORN and CTR curves quantifies the value added by the certify-then-rectify mechanism: it is precisely those queries that would have silently returned low-recall results under ACORN alone that are caught and corrected. The runtime-ratio curve increases with $\tau$, reflecting the fact that higher recall targets cause more queries to fail certification and require rectification. This increase is gradual at low-to-moderate $\tau$ values (where ACORN already achieves reasonable recall for most queries) and steepens at high $\tau$ values as the fraction of rectified queries grows.

\noindent\textbf{Effect of Varying the Selectivity Parameter ($\sigma$).} The alpha-sweep figures report runtime as a \emph{ratio} relative to ACORN's own runtime rather than as absolute latency. This normalization removes a large shared runtime component (the filtered traversal itself) and consequently compresses between-selectivity differences. In our experiments, ACORN's absolute runtime is already fairly similar across selectivity levels (especially on SIFT1M), so the dominant visible variation within each panel comes from the $\alpha$-driven certification and rectification behavior rather than from $\sigma$ directly.

Nevertheless, the effect of selectivity is discernible, particularly at the high-runtime end of the alpha frontier (i.e., at small $\alpha$, where rectification is triggered most frequently). At this end, the runtime ratio \emph{decreases} as selectivity decreases from $\sigma = 5$ to $\sigma = 25$. The mechanism is as follows: when the predicate is less selective ($\sigma = 25$), a larger fraction of the database satisfies the filter, so ACORN's heuristic traversal is more likely to encounter qualifying candidates and return a result that already meets the recall target. Consequently, fewer queries require rectification, and the average per-query overhead shrinks. Conversely, at $\sigma = 5$, the predicate is highly restrictive; ACORN must traverse deeper into the graph to find $k$ qualifying neighbors, increasing the chance of a suboptimal result and triggering rectification more often. Thus, selectivity effects manifest primarily as vertical shifts in the runtime axis rather than as changes in the overall shape of the compliance curve.

\noindent\textbf{Concluding Remarks.}
Taken together, these results validate that the Certify-then-Rectify framework transfers naturally to the filtered search setting. Across all datasets and across selectivity levels ranging from highly restrictive ($\sigma = 5$) to moderately selective ($\sigma = 25$), the CTR pipeline consistently achieves substantially higher compliance with the target recall than standalone ACORN, at a controlled and tunable runtime overhead. The $\alpha$ parameter provides a smooth knob for navigating this trade-off: conservative settings (small $\alpha$) prioritize correctness, while permissive settings (large $\alpha$) prioritize speed. As part of future efforts, we plan to investigate how information about different predicates-including their distribution, cardinality, and proximity to the query point-can be exploited to further optimize the framework. In particular, we intend to study whether predicate-specific stretch values can reduce the overhead incurred during rectification for highly restrictive filters, and whether adaptive calibration strategies can improve the certifier's discrimination when the predicate selectivity varies across queries within a single workload.

\section{Related Work}\label{sec:related-work}

A dominant family of graph indexing and ANN search algorithms stems from proximity-construction methods grounded in the geometry of relative neighborhood graphs (RNGs) and approximate Delaunay graphs \cite{rngs,preparata2012computational}. Key examples include navigable indexes such as Navigable Small-World graphs (NSW) and their hierarchical HNSW variant \cite{nsw,malkov2018efficient}, which constitute the state-of-the-art. Given its success, many variants of the vanilla HNSW have been proposed in recent literature \cite{acorn,pred2,peng2025dynamic,pred4}. 
Related approaches further include navigating spread-out graphs (NSGs) \cite{nsg} and the disk-based DiskANN index \cite{diskann}. Efforts have also been made to accelerate kNN search by compressing the vector space using quantization \cite{pq} and inverted file (IVF) indexing \cite{douze2024faiss,johnson2019billion}, and using locality-sensitive hashing \cite{eucli,angularlsh,mulprobe}. A comprehensive survey on ANN search appears in \cite{guoliangli}.
For filtered search, the literature can be split into: (i) works on filter-aware index structures 
\cite{pred1,pred2,pred3,pred4}; (ii) standard ANN backbones with improved query-time execution 
\cite{acorn,pred7}; and (iii) multi-index approaches 
\cite{pred8}. A survey of filtered search methods is given in \cite{predsurvey}.

Beyond improving the search algorithm itself, recent work has emphasized that query difficulty and workload composition strongly affect the observed behavior of ANN indexes. Wang et al.~\cite{wang2024steiner} introduced Steiner-hardness, a query hardness measure specifically designed for graph-based ANN indexes, while Ceccarello et al.~\cite{ceccarello2025workloads} studied how to evaluate and generate query workloads of controlled difficulty for high-dimensional vector similarity search. 
Recent work further explores learned termination checks, where models map query-time features to per-query budgets \cite{learnedtermination1, learnedtermination2, learnedtermination3}. 
Related declarative target-recall approaches terminate when the system \emph{predicts} (heuristically) that the target recall has been met (e.g., DARTH \cite{darth}). 
Similar early-exit ideas also appear in non-graph pipelines 
\cite{nongraphivf,chen2021spann,conann,zhang2023auncel,mohoney2025quake}.
We also build on statistical models grounded in extreme value theory, including standard inferential toolkits for Weibull distributions \cite{ft28, col01} and conformal prediction \cite{10.1145/3478535, angelopoulos2021gentle, angelopoulos2022learn, angelopoulos2024conformal}.  
Lastly, note that our formulations w.r.t. distance-call pruning take inspiration from recent literature on this topic \cite{tribase,trim,acceleratedpruning, jeesavoid, crouting,dco1,betterdco,dade,learnedtermination3}.

\section{Final Remarks}\label{sec:future-work}

This paper introduces a framework for generating correctness certificates in ANN search by wrapping HNSW in a "Certify-then-Rectify" pipeline. Using a Conformal Prediction formulation, we efficiently certify whether HNSW's top-k results miss closer neighbors. If certification fails, we trigger exact recovery to fetch true top-k neighbors. Our key novelty is reinterpreting HNSW's bottom layer as an empirical graph spanner to estimate a high-confidence stretch factor, bounding the exact recovery search radius. Recovery uses Stretch-Bounded Expansion from Query (SBE-Q) and Metric Bound Verification (MBV) to prune distance computations while preserving correctness. The framework supports filtered search without predicate-respecting paths. 


\appendix

\bibliographystyle{ACM-Reference-Format}
\bibliography{related}

\end{document}